%
%
\documentclass[fleqn,12pt,twoside]{article}
\usepackage{tipa}


\usepackage[headings]{espcrc1}
\readRCS
$Id: espcrc1.tex,v 1.2 2004/02/24 11:22:11 spepping Exp $
\ProvidesFile{espcrc1.tex}[\filedate \space v\fileversion
     \space Elsevier 1-column CRC Author Instructions]


\usepackage{graphicx}
\usepackage[figuresright]{rotating}

\newtheorem{theorem}{Theorem}

\newtheorem{claim}{Claim}

\newtheorem{conjecture}{Conjecture}
\newtheorem{corollary}{Corollary}

\newtheorem{lemma}{Lemma}

\newtheorem{proposition}{Proposition}
\newtheorem{remark}{Remark}

\newenvironment{proof}[1][Proof.]{\begin{trivlist}
\item[\hskip \labelsep {\bfseries #1}]}{\end{trivlist}}

\newenvironment{acknowledgement}[1][Acknowledgement]{\begin{trivlist}
\item[\hskip \labelsep {\bfseries #1}]}{\end{trivlist}}

\newcommand{\AmS}{{\protect\the\textfont2
  A\kern-.1667em\lower.5ex\hbox{M}\kern-.125emS}}

\hyphenation{author another created financial paper re-commend-ed Post-Script}

\usepackage{amsmath}
\usepackage{amsfonts}
\title{On disjoint matchings in cubic graphs}

\author{Vahan V. Mkrtchyan\address[MCSD]{Department of Informatics and Applied Mathematics,\\
Yerevan State University, Yerevan, 0025, Armenia}%
\address{Institute for Informatics and Automation Problems,\\
National Academy of Sciences of Republic of Armenia, 0014, Armenia}
\thanks{The author is supported by a grant of Armenian National Science and
Education Fund}
\thanks{email: vahanmkrtchyan2002@\{ysu.am, ipia.sci.am,
yahoo.com\}},
        Samvel S. Petrosyan\addressmark[MCSD]\thanks{email: samvelpetrosyan2008@yahoo.com
},
                and
        Gagik N. Vardanyan\addressmark[MCSD]\thanks{email: vgagik@gmail.com.}}


\runtitle{On disjoint matchings in cubic graphs} \runauthor{Vahan
Mkrtchyan, Samvel Petrosyan, Gagik Vardanyan}

\begin{document}

\maketitle

\begin{abstract}
For $i=2,3$ and a cubic graph $G$ let $\nu _{i}(G)$ denote the
maximum number of edges that can be covered by $i$ matchings. We
show that $\nu _{2}(G)\geq \frac{4}{5}\left\vert V(G)\right\vert $
and $\nu _{3}(G)\geq \frac{7}{6}\left\vert V(G)\right\vert $.
Moreover, it turns out that $\nu _{2}(G)\leq \frac{|V(G)|+2\nu
_{3}(G)}{4}$.
\end{abstract}

\section{Introduction}

In this paper graphs are assumed to be finite, undirected and without
loops, though they may contain multiple edges. We will also consider
pseudo-graphs, which, in contrast with graphs, may contain loops.
Thus graphs are pseudo-graphs. We accept the convention that a loop
contributes to the degree of a vertex by two.

The set of vertices and edges of a pseudo-graph $G$ will be denoted
by $V(G)$ and $E(G)$, respectively. We also define: $n=|V(G)|$ and
$m=|E(G)|$. We will also use the following scheme for notations:
if $G$ is a pseudo-graph and $f$ is a graph-theoretic
parameter, we will write just $f$ instead of $f(G)$. So, for
example, if we would like to deal with the edge-set of a
pseudo-graph $G^{(0)*}_{i}$, we will write $E^{(0)*}_{i}$ instead of
$E(G^{(0)*}_{i})$; moreover we will write $m^{(0)*}_{i}$ for the
number of edges in this graph.

A connected $2$-regular graph with at least two vertices will be
called a \textit{cycle}. Thus, a loop is not considered to be a
cycle in a pseudo-graph. Note that our notion of cycle differs from
the cycles that people working on nowhere-zero flows and cycle
double covers are used to deal with.

The length of a path or a cycle is the number of edges lying on it.
The path or cycle is even (odd) if its length is even (odd). Thus,
an isolated vertex is a path of length zero, and it is an even path.

For a graph $G$ let $\Delta=\Delta (G)$ and $\delta=\delta (G)$
denote the maximum and minimum degrees of vertices in $G$,
respectively. Let $\chi ^{\prime}=\chi ^{\prime }(G)$ denote the
chromatic class of the graph $G$.

The classical theorem of Shannon states:

\begin{theorem}
(Shannon \cite{Shannon}). For every graph $G$%
\begin{equation}
\Delta\leq \chi ^{\prime }\leq \left[ \frac{3\Delta}{2}\right].
\end{equation}
\end{theorem}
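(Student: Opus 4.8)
The lower bound $\Delta \le \chi'$ is immediate: in any proper edge-colouring the edges incident with a fixed vertex receive pairwise distinct colours, so a vertex of degree $\Delta$ already forces $\Delta$ colours. All of the content lies in the upper bound $\chi' \le \lfloor 3\Delta/2\rfloor$, and the plan is to prove it by induction on $|E(G)|$. Write $k = \lfloor 3\Delta/2\rfloor$, pick any edge $e = uv$, and by the induction hypothesis fix a proper edge-colouring of $G - e$ with colours $\{1,\dots,k\}$.

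For a vertex $x$ let $M_x$ be the set of colours appearing on no edge at $x$. Since at most $\Delta - 1$ colours are used at each endpoint of $e$ in $G - e$, a one-line computation gives $|M_u|,|M_v| \ge k - \Delta + 1 = \lfloor \Delta/2\rfloor + 1$; in particular both sets are nonempty. If $M_u \cap M_v \ne \emptyset$, I colour $e$ with a common missing colour and finish. So from here on I may assume $M_u$ and $M_v$ are disjoint, which together with the bound above already records the useful surplus $|M_u| + |M_v| \ge \Delta + 1$.

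The tool for the disjoint case is Kempe recolouring. Choose $\alpha \in M_u$ and $\beta \in M_v$ and consider the subgraph $H$ spanned by the edges coloured $\alpha$ or $\beta$; its components are paths and even cycles. Because $\alpha \in M_u$ but $\beta \notin M_u$, the vertex $u$ has exactly one incident edge in $H$, the $\beta$-edge at $u$, and symmetrically $v$ has exactly one, an $\alpha$-edge; thus $u$ and $v$ are endpoints of their components. If they lie in \emph{different} components, I interchange $\alpha$ and $\beta$ along the component of $u$: the $\beta$-edge at $u$ becomes an $\alpha$-edge, so $\beta$ turns into a missing colour at $u$, while nothing at $v$ is touched, and now $e$ may be coloured $\beta$.

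The remaining, genuinely hard case is when the $\{\alpha,\beta\}$-path joins $u$ to $v$, so the naive swap would collide at $v$. Here I would exploit the abundance of missing colours. Following the $\beta$-edge from $u$ to its other end $w$ (note $w \ne v$, else $\beta$ would be present at $v$), if some $\alpha \in M_u$ is also missing at $w$ the connecting path degenerates to the single edge $uw$, the swap succeeds, and $e$ is coloured $\beta$; this can fail only if \emph{every} colour of $M_u$ is present at $w$. When even that fails one must iterate, building a maximal fan of edges at $u$ whose colours lie in $M_u$ and pushing the recolouring along it, and it is precisely the inequality $|M_u|,|M_v| \ge \lfloor\Delta/2\rfloor+1$ that guarantees enough room for the fan to close successfully — which is exactly why the bound is $\lfloor 3\Delta/2\rfloor$ and no smaller. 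I expect the bookkeeping of this iterated recolouring, namely showing that the fan process terminates by colouring $e$ rather than cycling, to be the crux; the lower bound and the first two cases are routine by comparison.
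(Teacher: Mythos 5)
The paper states Shannon's theorem as a cited classical result and gives no proof, so there is nothing internal to compare against; your attempt has to stand on its own, and as written it does not close. The lower bound, the case $M_u\cap M_v\neq\emptyset$, the disjoint-component Kempe swap, and the observation that the swap also succeeds when some colour of $M_u$ is missing at the other endpoint $w$ of the $\beta$-edge at $u$ (equivalently, $M_u\cap M_w\neq\emptyset$) are all correct. But the genuinely hard case --- the $\{\alpha,\beta\}$-path runs from $u$ to $v$ \emph{and} every colour of $M_u$ is present at $w$ --- is exactly the point at which you stop proving and start forecasting: ``one must iterate, building a maximal fan \dots\ I expect the bookkeeping \dots\ to be the crux.'' That is the entire content of the theorem left as a gap. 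Worse, the tool you reach for is the wrong one for this setting: the theorem is stated for multigraphs (the paper explicitly allows multiple edges), and the Vizing fan argument is precisely what breaks down in the presence of parallel edges --- the fan can revisit a neighbour through a second parallel edge and the usual termination/contradiction step fails, which is why Vizing's multigraph bound is $\Delta+\mu$ rather than $\Delta+1$. There is no reason to believe an unmodified fan at $u$ ``closes successfully'' just because $|M_u|\geq\lfloor\Delta/2\rfloor+1$.

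The standard way to finish is not a fan but a counting argument on the three vertices $u$, $v$, $w$. With $k=\left[\tfrac{3\Delta}{2}\right]$ one has $|M_u|,|M_v|\geq k-\Delta+1$ and $|M_w|\geq k-\Delta$, so
\begin{equation*}
|M_u|+|M_v|+|M_w|\;\geq\;3k-3\Delta+2\;>\;k,
\end{equation*}
hence the three sets cannot be pairwise disjoint. You have already shown $M_u\cap M_v=\emptyset$ and (in this remaining case) $M_u\cap M_w=\emptyset$, so some $\gamma\in M_v\cap M_w$ exists, and a short further recolouring (interchange $\alpha$ and $\beta$ along the $u$--$v$ path so that the edge $uw$ receives colour $\alpha$, then exploit that $\gamma$ is missing at both $v$ and $w$) produces a free common colour for the ends of $e$, contradicting the assumption that $e$ cannot be coloured. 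Note that this inequality is exactly where $\left[\tfrac{3\Delta}{2}\right]$ enters, not through any fan-size estimate. You identified the right three vertices and the right surplus of missing colours, but the argument that actually uses them is missing.
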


In 1965 Vizing proved:

\begin{theorem}
(Vizing, \cite{Vizing}): $\Delta\leq \chi ^{\prime }\leq \Delta +\mu
$, where $\mu$ denotes the maximum multiplicity of an edge in $G$.
\end{theorem}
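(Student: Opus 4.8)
The lower bound $\Delta \le \chi'$ is immediate, since each color class is a matching and a vertex of degree $\Delta$ meets $\Delta$ edges that must all receive distinct colors; so the whole content is the upper bound $\chi' \le \Delta + \mu$. The plan is to prove it by induction on the number of edges $m$, using the classical \emph{fan recoloring} technique adapted to multigraphs. Set $k = \Delta + \mu$ and assume every multigraph with fewer edges admits a proper edge-coloring with $k$ colors. I would pick an arbitrary edge $e_0 = xy_0$ and, by the induction hypothesis, fix a proper $k$-edge-coloring $c$ of $G - e_0$; the goal is to modify $c$ so that $e_0$ can also be colored. Call a color \emph{free} at a vertex $v$ if no edge incident to $v$ carries it. Since $\deg(v) \le \Delta$ while $k = \Delta + \mu$, every vertex has at least $\mu \ge 1$ free colors, so in particular each of $x$ and $y_0$ has a free color.

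First I would build a \emph{fan} at $x$: a maximal sequence of edges $e_0 = xy_0, e_1 = xy_1, \dots, e_s = xy_s$ (the $y_i$ need not be distinct in the multigraph setting) in which $e_1, \dots, e_s$ are colored and the color $c(e_{i+1})$ is free at $y_i$ for each $i$. If some color $\beta$ that is free at the terminal vertex $y_s$ happens also to be free at $x$, then I would \emph{rotate} the fan, reassigning $c(e_i) := c(e_{i+1})$ along a suitable initial segment and giving the last edge the color $\beta$; this produces a proper $k$-coloring that now colors $e_0$, and handles the easy case.

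The substantive case is when no free color at $y_s$ is free at $x$. Here I would fix a color $\alpha$ free at $x$ and a color $\beta$ free at $y_s$, and examine the maximal $\alpha/\beta$-alternating path (Kempe chain) starting at $y_s$. Swapping the two colors along this path alters the free colors at its far endpoint without introducing conflicts, and by choosing carefully which endpoint to start the swap from I can arrange that one of $\alpha, \beta$ becomes simultaneously free at $x$ and at the relevant fan vertex, after which a fan rotation again finishes the coloring. The role of the additive $\mu$ (rather than the $+1$ of the simple-graph case) is precisely to guarantee enough free colors when parallel edges force a neighbor to recur in the fan.

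The step I expect to be the main obstacle is the case analysis around the Kempe chain: one must check that the alternating path cannot return to $x$ in a configuration that simultaneously blocks both recoloring options, and that the repetition of neighbors coming from edge multiplicities does not corrupt the bookkeeping of which colors are free where. Verifying that $k = \Delta + \mu$ colors suffice in \emph{every} such configuration, and in particular that the maximality of the fan together with the pigeonhole count on free colors rules out the obstructed cases, is the delicate part of the argument.
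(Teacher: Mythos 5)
The paper does not prove this statement at all: it is quoted as a classical result of Vizing with a citation to the original 1965 article, so there is no in-paper proof to compare against. Your outline follows the standard route to the multigraph version of Vizing's theorem (Vizing fans at $x$, fan rotation, and an $\alpha/\beta$ Kempe-chain swap in the obstructed case), and that route does work. The trivial lower bound $\Delta\leq\chi'$ is handled correctly.

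However, as written your proposal is a plan rather than a proof, and the part you defer is exactly where all the content lives. You say that ``by choosing carefully which endpoint to start the swap from I can arrange that one of $\alpha,\beta$ becomes simultaneously free at $x$ and at the relevant fan vertex,'' and then acknowledge that verifying this in every configuration is ``the delicate part.'' That verification is the theorem. Concretely, the missing steps are: (i) the pigeonhole argument showing that if the fan is maximal and no free color at its tip is free at $x$, then two fan vertices $y_i$ and $y_j$ (or a fan vertex and $x$) must miss a common color --- this is where the $\mu$ extra colors enter, because a neighbour can occur up to $\mu$ times among the fan edges and each occurrence consumes one of its at least $\mu$ free colors; (ii) the case analysis showing that the $\alpha/\beta$ chain starting from $y_i$, from $y_j$, and from $x$ cannot all be arranged so that every swap re-blocks the rotation --- one must check that at most one of these chains can terminate at $x$, so at least one swap succeeds; and (iii) the check that after the swap the fan condition (each $c(e_{i+1})$ free at $y_i$) still holds along the initial segment you intend to rotate, which is not automatic once colors on edges at the $y_i$ have been exchanged. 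Until these are carried out, the argument has a genuine gap: it asserts the conclusion of the hard case rather than deriving it.
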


Note that Shannon's theorem implies that if we consider a cubic
graph $G$, then $3\leq \chi ^{\prime }\leq 4$, thus $\chi ^{\prime
}$ can take
only two values. In 1981 Holyer proved that the problem of deciding whether $%
\chi ^{\prime }=3$ or not for cubic graphs $G$ is NP-complete \cite%
{Holyer}, thus the calculation of $\chi ^{\prime }$ is already hard
for cubic graphs.

For a graph $G$ and a positive integer $k$ define

\begin{center}
$B_{k}\equiv \{(H_{1},...,H_{k}):H_{1},...,H_{k}$ are pairwise
edge-disjoint matchings of $G\}$,
\end{center}

and let

\begin{center}
$\nu _{k}\equiv \max \{\left\vert H_{1}\right\vert +...+\left\vert
H_{k}\right\vert :(H_{1},...,H_{k})\in B_{k}\}$.
\end{center}

Define:

\begin{center}
$\alpha _{k}\equiv \max \{\left\vert H_{1}\right\vert
,...,\left\vert H_{k}\right\vert :$ $(H_{1},...,H_{k})\in B_{k}$ and
$\left\vert H_{1}\right\vert +...+\left\vert H_{k}\right\vert =\nu
_{k}\}$.
\end{center}

If $\nu $ denotes the cardinality of the largest matching of $G$,
then it
is clear that $\alpha _{k}\leq \nu $ for all $G$ and $k$. Moreover, $%
\nu _{k}=\left\vert E\right\vert =m$ for all $k\geq \chi ^{\prime
}$. Let us also note that $\nu _{1}$ and $\alpha _{1}$ coincide with
$\nu $.

In contrast with the theory of $2$-matchings, where every graph $G$
admits a maximum $2$-matching that includes a maximum matching
\cite{Lov}, there are graphs that do not have a \textquotedblleft
maximum\textquotedblright\ pair of disjoint matchings (a pair
$(H,H^{\prime })\in B_{2}$ with $\left\vert H\right\vert +\left\vert
H^{\prime }\right\vert =\nu _{2}$) that includes a maximum matching.

The following is the best result that can be stated about the ratio
$\nu
/\alpha _{2}$ for any graph $G$ (see \cite{FiveFourth}):%
\begin{equation}
1\leq \nu /\alpha _{2}\leq 5/4.
\end{equation}

Very deep characterization of graphs $G$ satisfying $\nu /\alpha
_{2}=5/4$ is given in \cite{FivefourthCharacter}.

Let us also note that by Mkrtchyan's result \cite{MPP0-1},
reformulated as in \cite{HararyPlummer}, if $G$ is a matching
covered tree, then $\alpha
_{2}=\nu $. Note that a graph is said to be matching covered (see \cite%
{Perfect}), if its every edge belongs to a maximum matching (not
necessarily a perfect matching as it is usually defined, see e.g.
\cite{Lov}).

The basic problem that we are interested is the following: what is
the proportion of edges of an $r$-regular graph (particularly, cubic
graph), that we can cover by its $k$ matchings? The formulation of
our problem stems from the recent paper \cite{KaiserKralNorine},
where the authors investigate the proportion of edges of a
bridgeless cubic graph that can be covered by $k$ of its perfect
matchings.

The aim of the present paper is the investigation of the ratios $\nu
_{k}/\left\vert E\right\vert $ (or equivalently, $\nu
_{k}/\left\vert V\right\vert $) in the class of cubic graphs for $%
k=2,3 $. Note that for cubic graphs $G$ Shannon's theorem implies
that $\nu _{k}=\left\vert E\right\vert ,$ $k\geq 4$.

The case $k=1$ has attracted much attention in the literature. See \cite%
{Hobbs} for the investigation of the ratio in the class of simple
cubic graphs, and
\cite{Bella,HenningYeo,TakaoBaybars,Takao,Weinstein} for the
general case. Let us also note that the relation between $\nu _{1}$ and $%
\left\vert V\right\vert $ has also been investigated in the regular
graphs of high girth \cite{GirthBound}.

The same is true for the case $k=2,3$. Albertson and Haas
investigate these ratios in the class of simple cubic graphs (i.e.
graphs without multiple edges)in \cite
{AlbertsonHaasFirst,AlbertsonHaasSecond}, and Steffen investigates
the general case in \cite{Steffen}.

\section{Some auxiliary results}

If $G$ is a pseudo-graph, and $e=(u,v)$ is an edge of $G$, then $k$%
-subdivision of the edge $e$ results a new pseudo-graph $G'$ which
is obtained from $G$ by replacing the edge $e$ with a path $P_{k+1}$
of length $k+1$, for which $V(P_{k+1})\cap V=\{u,v\}$. Usually, we
will say that $G'$ is obtained from $G$ by $k$-subdividing the edge
$e$.

If $Q$ is a path or cycle of a pseudo-graph $G$, and the pseudo-graph $%
G'$ is obtained from $G$ by $k$-subdividing the edge $e$, then
sometimes we will speak about the path or cycle $Q'$ corresponding
to $Q$, which roughly, can be defined as $Q$, if $e$ does not lie on
$Q$, and
the path or cycle obtained from $Q$ by replacing its edge $e$ with the path $%
P_{k+1}$, if $e$ lies on $Q$.

Our interest towards subdivisions is motivated by the following

\begin{proposition}
\label{CubicPseudoGraph}Let $G$ be a connected graph with $2\leq
\delta
\leq \Delta =3$. Then, there exists a connected cubic pseudo-graph $%
G_{0}$ and a mapping $k:E_{0}\rightarrow Z^{+}$, such that $G$ is
obtained from $G_{0}$ by $k(e)$-subdividing each edge $e\in E_{0}$,
where $Z^{+}$ is the set of non-negative integers.
\end{proposition}
\begin{proof}The existence of such a cubic pseudo-graph $G_{0}$ can be verified,
for example, as follows; as the vertex-set of $G_0$, we take the set
of vertices of $G$ having degree three, and connect two vertices
$u,v$ of $G_0$ by an edge $e=(u,v)$, if these vertices are connected
by a path $P$ of length $k,k\geq1$ in $G$, whose end-vertices are
$u$ and $v$, and whose internal vertices are of degree two. We also
define $k(e)=k-1$. Finally, if a vertex $w$ of $G_0$ lies on a cycle
$C$ of length $l,l\geq1$ in $G$, whose all vertices, except $w$, are
of degree two, then in $G_0$ we add a loop $f$ incident to $w$, and
define $k(f)=l-1$. Now, it is not hard to verify, that $G_0$ is a
cubic pseudo-graph, and if we $k(d)$-subdivide each edge $d$ of
$G_0$, then the resulting graph is isomorphic to $G$.
\end{proof}

Let $G_{0}$ be a cubic pseudo-graph, and let $e$ be a loop of $G_{0}$. Let $%
f $ be the edge of $G_{0}$ adjacent to $e$ (note that $f$ is not a
loop). Let $u_{0}$ be\ the vertex of $G_{0}$ that is incident to $f$
and $e$, and
let $f=(u_{0},v_{0})$. Assume that $v_{0}$ is not incident to a loop of $%
G_{0}$, and let $h$ and $h'$ be the other ($\neq f$) edges of $%
G_{0} $ incident to $v_{0}$, and assume $u$ and $v$ be the endpoints
of $h$ and $h'$, that are not incident to $f$, respectively.
Consider the cubic pseudo-graph $G'_{0}$ obtained from $G_{0}$ as
follows ((a)
of figure \ref{loopcut}):%
\begin{eqnarray*}
G'_{0}=(G_{0}\backslash \{u_{0},v_{0}\})\cup \{g\},\text{ where } \
g=(u,v).
\end{eqnarray*}%

\begin{figure}[h]
\begin{center}
\includegraphics{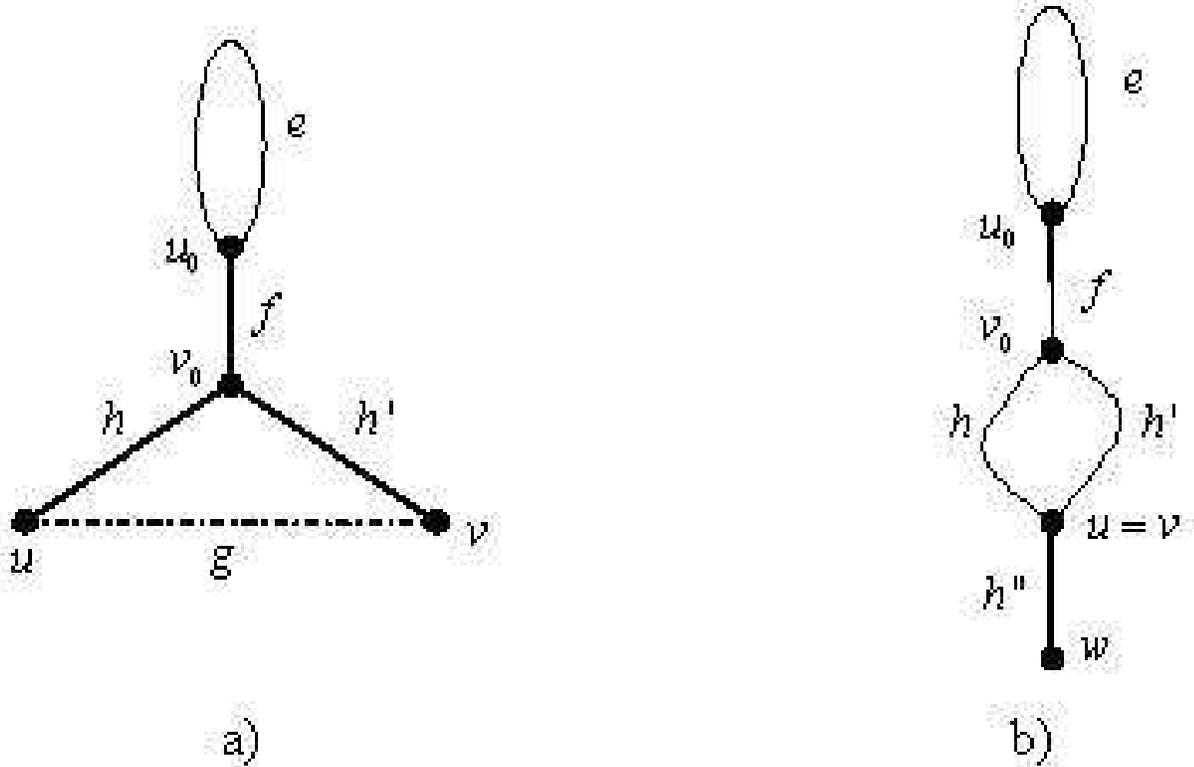}\\
\caption{Cutting a loop $e$}\label{loopcut}
\end{center}
\end{figure}

Note that $u$ and $v$ may coincide. In this case $g$ is a loop of $%
G'_{0}$. We will say that $G'_{0}$ is obtained from $G_{0}$ by
cutting the loop $e$.

People dealing especially with bridgeless cubic graphs would have
already recognized Fleischner's splitting off operation. Completely
realizing this, we would like to keep the name "cutting the loops",
in order to keep the basic idea, that has helped us to come to its
definition!

\begin{remark}
\label{SuccessiveCut}If $G_{0}$ is a connected cubic pseudo-graph,
then the successive cut of loops of $G_{0}$ in any order of loops
leads either to a connected graph (that is, connected pseudo-graph
without loops), or to the cubic pseudo-graph shown on the figure
\ref{TrivialCase}. Sometimes, we will prefer to restate this
property in terms of applicability of the operation of cutting the
loop. More specifically, if $G_{0}$ is a connected cubic
pseudo-graph, for which the operation of cutting the loop is not
applicable, then either $G_{0}$ does not contain a loop or it is the
mentioned trivial graph.
\end{remark}

\begin{figure}[h]
\begin{center}
\includegraphics[height=5pc,width=15pc]{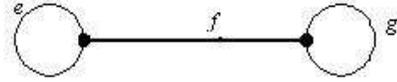}\\
\caption{The trivial case}\label{TrivialCase}
\end{center}
\end{figure}

Before we move on, we would like to state some properties of the
operation of cutting the loops.

\begin{proposition}
\label{ConnectedInvariance}If $G_{0}$ is connected, then $G'_{0}$ is
connected, too.
\end{proposition}

\begin{proposition}
\label{CycleInvariance}If a connected cubic pseudo-graph $G_{0}$
contains a cycle, and a cubic pseudo-graph $G'_{0}$ obtained from
$G_{0}$ by cutting a loop $e$ of $G_{0}$ does not, then $e$ is
adjacent to an edge $f$, which, in its turn, is adjacent to two
edges $h$ and $h'$, that form the only cycle of $G_{0}$ with length
two ((b) of figure \ref{loopcut}).
\end{proposition}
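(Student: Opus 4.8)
The plan is to track precisely which cycles of $G_{0}$ can be destroyed by the cut and to show this forces the degenerate configuration in the statement. Recall that cutting the loop $e$ deletes the two vertices $u_{0},v_{0}$ together with the four edges $e,f,h,h'$ incident to them, and inserts a single new edge $g=(u,v)$, while leaving every other vertex and edge of $G_{0}$ untouched. Hence a cycle of $G_{0}$ can fail to persist in $G'_{0}$ only if it meets $u_{0}$ or $v_{0}$, and the whole argument reduces to understanding cycles through these two vertices.

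First I would rule out $u_{0}$. The only edges at $u_{0}$ are the loop $e$ and the edge $f$. A cycle, being $2$-regular with at least two vertices, cannot traverse a loop, and it cannot leave $u_{0}$ along $f$ alone; so no cycle of $G_{0}$ passes through $u_{0}$, and in particular none uses $f$. Consequently every cycle of $G_{0}$ either avoids $v_{0}$ entirely (and then avoids $u_{0},e,f,h,h'$ as well), or passes through $v_{0}$ using exactly the two edges $h$ and $h'$.

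Next I would test each case against the cut. A cycle avoiding $v_{0}$ uses only vertices and edges that survive the operation, so it is still a cycle of $G'_{0}$. A cycle through $v_{0}$ enters and leaves along $h=(u,v_{0})$ and $h'=(v_{0},v)$; after the cut the sub-path $u-v_{0}-v$ is replaced by $g=(u,v)$, and the corresponding closed walk in $G'_{0}$ is again a cycle as soon as $u\neq v$. Thus the only way a cycle of $G_{0}$ can be destroyed is when $u=v$: then $h$ and $h'$ are parallel edges joining $v_{0}$ to $u$, the ``cycle through $v_{0}$'' is forced (by $2$-regularity at $u$) to be exactly the length-two cycle $\{h,h'\}$, and $g$ becomes a loop at $u$, which is not a cycle.

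Finally I would assemble these facts. Since $G'_{0}$ has no cycle but $G_{0}$ does, $G_{0}$ can contain no cycle of the surviving kind; every cycle of $G_{0}$ must therefore pass through $v_{0}$ and be destroyed, which by the case analysis forces $u=v$, so $h$ and $h'$ form a cycle of length two. This cycle is moreover the \emph{only} cycle of $G_{0}$: any cycle avoiding $v_{0}$ would survive the cut and contradict the hypothesis, while no second cycle through $v_{0}$ can exist once $u=v$. Since $e$ is adjacent to $f$ and $f$ to $h,h'$ by the very construction, this is the configuration of (b) of figure \ref{loopcut}. The one delicate point to nail down is the claim that a cycle through $v_{0}$ with $u\neq v$ really contracts to a genuine (simple, length $\geq 2$) cycle after suppressing $v_{0}$ rather than degenerating; I would settle this by observing that $u\neq v$ keeps the two neighbours of $v_{0}$ distinct, so the contracted walk remains $2$-regular on at least two vertices.
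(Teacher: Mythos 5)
Your argument is correct and complete. The paper actually states Proposition \ref{CycleInvariance} without any proof (it is presented, like Proposition \ref{ConnectedInvariance}, as a routine property of the loop-cutting operation), so there is nothing to compare against; your case analysis --- no cycle can pass through $u_{0}$ or use $f$, cycles avoiding $v_{0}$ survive the cut, cycles through $v_{0}$ must use $h$ and $h'$ and survive as well unless $u=v$, in which case the unique such cycle is the $2$-cycle $\{h,h'\}$ --- is exactly the verification the authors leave to the reader, and you handle the one delicate point (that contraction at $v_{0}$ with $u\neq v$ yields a genuine cycle, possibly of length two) correctly.
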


The following will be used frequently:

\begin{proposition}
\label{FractionInequality}Let be $a,b,c,d$ be positive numbers with $\frac{a%
}{b}\geq \alpha $, $\frac{c}{d}\geq \alpha $. Then:%
\begin{equation}
\frac{a+c}{b+d}\geq \alpha.
\end{equation}
\end{proposition}

\begin{proposition}
\label{LinearInequality}Suppose that $x_{1}\leq y_{1},$ $x_{2}\geq
y_{2},...,x_{n}\geq y_{n},$ $x_{1}+...+x_{n}=y_{1}+...+y_{n}$ and %
${\min }_{{1\leq i\leq n}}\{\alpha _{i}\}=\alpha _{1}>0$. Then:%
\begin{equation}
\alpha _{1}x_{1}+...+\alpha _{n}x_{n}\geq \alpha
_{1}y_{1}+...+\alpha _{n}y_{n}.
\end{equation}
\end{proposition}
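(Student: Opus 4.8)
The plan is to reduce the claim to the nonnegativity of a single weighted sum. First I would introduce the differences $d_i = x_i - y_i$ for $1 \leq i \leq n$. In this notation the hypotheses read $d_1 \leq 0$, $d_i \geq 0$ for $i \geq 2$, and $d_1 + \cdots + d_n = 0$, while the conclusion $\alpha_1 x_1 + \cdots + \alpha_n x_n \geq \alpha_1 y_1 + \cdots + \alpha_n y_n$ is equivalent to $\alpha_1 d_1 + \cdots + \alpha_n d_n \geq 0$. So the entire statement comes down to showing that this last weighted sum is nonnegative.

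The key step uses the equality constraint to eliminate the single negative term $d_1$. From $d_1 + \cdots + d_n = 0$ we obtain $-d_1 = d_2 + \cdots + d_n$. Since $\alpha_1 = \min_{1 \leq i \leq n}\{\alpha_i\}$, we have $\alpha_i \geq \alpha_1$ for every $i \geq 2$; combining this with $d_i \geq 0$ gives $\alpha_i d_i \geq \alpha_1 d_i$ for each such $i$. Summing over $i \geq 2$ yields
\begin{equation}
\alpha_2 d_2 + \cdots + \alpha_n d_n \;\geq\; \alpha_1 (d_2 + \cdots + d_n) \;=\; -\alpha_1 d_1 .
\end{equation}
Adding $\alpha_1 d_1$ to both sides gives $\alpha_1 d_1 + \cdots + \alpha_n d_n \geq 0$, which is exactly the inequality we reduced to.

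There is no serious obstacle here; the only point requiring care is the bookkeeping of signs. The difference $d_1$ is the unique nonpositive one, while $d_2, \ldots, d_n$ are nonnegative, so the monotonicity estimate $\alpha_i \geq \alpha_1$ must be applied \emph{only} to the positive terms, where replacing $\alpha_i$ by the smaller weight $\alpha_1$ can only decrease the quantity; applying it to the negative term $d_1$ would reverse the inequality. The hypothesis $\alpha_1 > 0$ is not actually needed for the algebra (the argument uses only $\alpha_i \geq \alpha_1$), but it reflects the intended setting in which the $\alpha_i$ are genuine positive weights.
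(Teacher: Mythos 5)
Your proof is correct and is essentially identical to the paper's: both apply $\alpha_i \geq \alpha_1$ to the nonnegative differences $x_i - y_i$ for $i \geq 2$, sum, and use the equality constraint to identify $\alpha_1(d_2 + \cdots + d_n)$ with $-\alpha_1 d_1 = \alpha_1(y_1 - x_1)$. The only difference is notational (your $d_i$ versus the paper's explicit $x_i - y_i$), and your closing remark that $\alpha_1 > 0$ is not needed is a fair observation.
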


\begin{proof}
Note that
\begin{eqnarray*}
\alpha _{2}(x_{2}-y_{2}) &\geq &\alpha _{1}(x_{2}-y_{2}), \\
&&. \\
&&. \\
&&. \\
\alpha _{n}(x_{n}-y_{n}) &\geq &\alpha _{1}(x_{n}-y_{n}),
\end{eqnarray*}%
thus
\begin{equation}
\alpha _{2}(x_{2}-y_{2})+...+\alpha _{n}(x_{n}-y_{n})\geq \alpha
_{1}(x_{2}-y_{2})+...+\alpha _{1}(x_{n}-y_{n})=\alpha
_{1}(y_{1}-x_{1})
\end{equation}%
or%
\begin{equation}
\alpha _{1}x_{1}+...+\alpha _{n}x_{n}\geq \alpha
_{1}y_{1}+...+\alpha _{n}y_{n}.
\end{equation}
\end{proof}

\begin{theorem}\label{Gallai}(Gallai \cite{Lov})Let $G$ be a connected graph
with $\nu(G-u)=\nu$ for any $u\in V$. Then $G$ is factor-critical,
and particularly:
\begin{equation*}
n=2\nu+1.
\end{equation*}
\end{theorem}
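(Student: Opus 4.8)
The plan is to reinterpret the hypothesis combinatorially and then argue via a minimal counterexample. First I would observe that $\nu(G-u)=\nu$ holds precisely when some maximum matching of $G$ fails to cover $u$: a maximum matching of $G-u$ of size $\nu$ is a maximum matching of $G$ missing $u$, and conversely a maximum matching missing $u$ restricts to one of $G-u$. Thus the assumption says that \emph{every} vertex is left uncovered by some maximum matching. In particular $G$ has no perfect matching (a perfect matching would force $\nu(G-u)=\nu-1<\nu$), so the deficiency $d:=n-2\nu$ is at least $1$; factor-criticality is exactly the statement $d=1$, which simultaneously yields $n=2\nu+1$.

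So it suffices to rule out $d\ge 2$. Assuming $d\ge 2$, some maximum matching misses at least two vertices, so among all pairs $\{x,y\}$ of distinct vertices that are simultaneously missed by a common maximum matching I would choose one, say $\{u,v\}$, minimizing the distance $d(u,v)$; connectivity makes this distance finite. Since a matching missing two adjacent vertices could be enlarged along the edge joining them, $u$ and $v$ cannot be adjacent, hence $d(u,v)\ge 2$. Let $t$ be the neighbour of $u$ on a shortest $u$--$v$ path, so that $d(u,t)=1$ and $d(t,v)=d(u,v)-1$ are both strictly smaller than $d(u,v)$.

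Now I would exploit minimality. By hypothesis some maximum matching $M_t$ misses $t$; by the minimality of $d(u,v)$ it can miss neither $u$ nor $v$ (the distances $d(t,u)$ and $d(t,v)$ are smaller, and $t,u$ are even adjacent), so $M_t$ covers both $u$ and $v$. Let $M$ be a maximum matching missing $u$ and $v$; since $t$ is adjacent to the $M$-exposed vertex $u$, $M$ must cover $t$, for otherwise $u,t$ would be an adjacent exposed pair admitting augmentation. I would then pass to the symmetric difference $M\triangle M_t$, whose components are paths and even cycles. Because $M$ and $M_t$ are both maximum there is neither an $M$-augmenting nor an $M_t$-augmenting path, so every path component has exactly one $M$-exposed and one $M_t$-exposed endpoint. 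The component $P$ containing $t$ (which is $M_t$-exposed and $M$-covered) therefore runs from $t$ to some $M$-exposed vertex $z$, and $M':=M\triangle E(P)$ is again a maximum matching, now missing $t$ but covering $z$.

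The finish is a short case analysis on $z$, which I expect to be the only delicate point. The matching $M'$ misses exactly the vertices that $M$ missed, with $z$ replaced by $t$. If $z=u$ then $M'$ misses $t$ and $v$ with $d(t,v)<d(u,v)$; if $z=v$ then $M'$ misses $t$ and $u$ with $d(t,u)=1<d(u,v)$; and if $z\notin\{u,v\}$ then $M'$ still misses the adjacent exposed pair $t,u$. In every case we obtain either a commonly-missed pair strictly closer than $\{u,v\}$ or an augmentable adjacent exposed pair, contradicting the choice of $\{u,v\}$ or the maximality of $M'$. Hence $d=1$: every maximum matching misses a single vertex, so for each $v$ a maximum matching missing $v$ is a perfect matching of $G-v$. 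This proves that $G$ is factor-critical and that $n=2\nu+1$.
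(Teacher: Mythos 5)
The paper does not actually prove this statement: it is quoted as a classical result of Gallai with a citation to Lov\'asz--Plummer, so there is no in-paper argument to compare against. Your proposal supplies a correct, self-contained proof, and it is essentially the standard textbook proof of Gallai's lemma: reduce the hypothesis to ``every vertex is missed by some maximum matching,'' suppose the deficiency is at least $2$, pick a pair of commonly exposed vertices at minimum distance, and contradict that choice via an alternating-path switch in the symmetric difference of two maximum matchings. All the steps check out: the equivalence $\nu(G-u)=\nu$ iff some maximum matching misses $u$; the non-adjacency of $u$ and $v$; the fact that $t$ is $M$-covered and $M_t$-exposed, hence an endpoint of a path component whose other endpoint $z$ is $M$-exposed (path components of the symmetric difference of two maximum matchings are even, with one endpoint exposed by each matching, since an odd component would be augmenting); and the exposure set of $M'=M\triangle E(P)$ being that of $M$ with $z$ swapped for $t$, which makes each of the three cases $z=u$, $z=v$, $z\notin\{u,v\}$ yield either a closer exposed pair or an augmentable adjacent exposed pair. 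One cosmetic remark: you use $d$ both for the deficiency $n-2\nu$ and for graph distance, and the observation that $M_t$ covers $u$ and $v$ is never actually used in the final case analysis (what matters is only that $u,v\notin V(P)$ unless one of them equals $z$); neither affects correctness.
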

Terms and concepts that we do not define can be found in
\cite{Harary,Lov,West}.

\section{Maximum matchings and unsaturated vertices}

In this section we prove a lemma, which states that, under some
conditions, one can always pick up a maximum matching of a graph,
such that the unsaturated vertices with respect to this matching
"are not placed very close".

Before we present our result, we would like to deduce a lower bound
for $\nu $ in the class of regular graphs using the theorem
\ref{Gallai} of Gallai.

Observe that Shannon's theorem implies that $\chi ^{\prime }\leq 4$ for every cubic graph $G$, thus $%
m \leq 4\nu =4\nu _{1}$. Now, it turns out, that there are no cubic
graphs $G$, for which $m =4\nu _{1}$, thus $\nu _{1}>\frac{m}{4}$.
Next we
prove a generalization of this statement, that originally appeared in \cite%
{Monthly} as a problem:

\begin{lemma}
\label{OddRegulars}

\begin{description}
\item[(a)] No $(2k+1)$-regular graph $G$ contains $2k+2$ pairwise
edge-disjoint maximum matchings;

\item[(b)] If $G$ is a connected simple $r$-regular graph with $r+1$
pairwise edge-disjoint maximum matchings, then $r$ is even and $G$
is the complete graph.
\end{description}
\end{lemma}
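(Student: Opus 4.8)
The plan is to prove both parts by a counting argument on the edges, combined with an application of Gallai's theorem (Theorem~\ref{Gallai}) to force factor-criticality.

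For part (a), suppose toward a contradiction that a $(2k+1)$-regular graph $G$ contains $2k+2$ pairwise edge-disjoint maximum matchings $H_1,\dots,H_{2k+2}$. Each vertex $v$ has degree $2k+1$, and since the matchings are edge-disjoint, the total number of matching-edges incident to $v$ across all $2k+2$ matchings is at most $\deg(v)=2k+1$. Hence at least one matching fails to saturate $v$; in fact, summing $\sum_{i=1}^{2k+2}|H_i|$ and counting incidences, I would show that the matchings together cover at most $(2k+1)n/2$ edge-endpoints worth of saturations. The point is that $2k+2$ disjoint maximum matchings would each have to be very large: if each $H_i$ saturates all but $u_i$ vertices, then $2|H_i| = n - u_i$, so $\sum_i 2|H_i| = (2k+2)n - \sum_i u_i$. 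On the other hand the disjointness bound gives $\sum_i 2|H_i| = \sum_v (\text{number of }H_i\text{ saturating }v) \le \sum_v \deg(v) = (2k+1)n$. Combining, $\sum_i u_i \ge n$, so on average each maximum matching leaves at least $n/(2k+2)$ vertices unsaturated. I would then argue this is incompatible with all $H_i$ being \emph{maximum} matchings of the same size unless a parity/divisibility obstruction arises, yielding the contradiction. (A cleaner route: show directly that $2k+2$ disjoint maximum matchings would force $2\nu_1 \cdot (2k+2) \le (2k+1)n$, i.e. $\nu_1 \le \frac{(2k+1)n}{2(2k+2)} < n/2$, while the saturation deficit forces the $u_i$ to be positive and sum to at least $n$, which over $2k+2$ matchings is too small to reconcile with maximality.)

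For part (b), the extra hypotheses (simple, connected, and exactly $r+1$ disjoint maximum matchings with $r=2k+1$ odd ruled out by part (a)) should be leveraged as follows. Since $G$ is $r$-regular with $r+1$ disjoint maximum matchings, every edge incident to a vertex $v$ can be colored by the matching containing it, and the $r+1$ matchings use up more than the $r$ edges at $v$, so the deficit counting from part (a) becomes tight. I would show that tightness forces each $H_i$ to leave exactly the same small number of vertices unsaturated and, crucially, that for every vertex $u$ one can build a maximum matching avoiding $u$, so that $\nu(G-u)=\nu$ for all $u$. By Gallai's theorem this makes $G$ factor-critical, giving $n = 2\nu+1$ and in particular $n$ odd. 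Combined with $r$-regularity and the handshake identity $rn = 2m$, the oddness of $n$ forces $r$ to be even.

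The final step is to upgrade ``$r$-regular, factor-critical, simple, with $r+1$ disjoint maximum matchings'' to ``$G$ is complete.'' The idea is that factor-criticality with $n=2\nu+1$ means every maximum matching misses exactly one vertex, and the $r+1$ disjoint maximum matchings together saturate each vertex exactly $r$ times while missing it exactly once; the $r+1$ missed vertices (one per matching) must be distinct, accounting for $r+1$ vertices, and this rigidity together with regularity and simplicity should pin down $n=r+1$, forcing $G=K_{r+1}$. I expect the main obstacle to be precisely this last rigidity argument: turning the edge-disjointness and equal-size conditions into the exact count $n=r+1$ without gaps, since one must rule out larger factor-critical regular graphs that merely \emph{resemble} the complete graph. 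I would handle this by carefully tracking, for each vertex, which matching misses it, and using simplicity to prevent any two vertices from being ``non-adjacent,'' thereby collapsing $n$ to $r+1$.
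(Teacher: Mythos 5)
Your part (a) has a genuine gap. The edge-disjointness count gives $(2k+2)\cdot 2\nu \le \sum_{v} d(v) = (2k+1)n$, hence $\nu \le \frac{(2k+1)n}{2(2k+2)}$ and each maximum matching misses at least $\frac{n}{2k+2}$ vertices; but this is not by itself a contradiction, and it cannot be pushed to one by more averaging: already for $r=2k+1=5$ the inequality $\nu \le \frac{5n}{12}$ is perfectly consistent with what is known about matchings in $5$-regular graphs, so the ``parity/divisibility obstruction'' you gesture at must be produced explicitly and does not follow from the count. The obstruction the paper uses (and the one your own opening sentence promises) is Gallai's theorem: since $d(v)=2k+1$ and the $2k+2$ matchings are edge-disjoint, some maximum matching has no edge at $v$, so $\nu(G-v)=\nu$ for every $v$; passing to a connected component (as one may, since a maximum matching restricts to a maximum matching on each component), Theorem~\ref{Gallai} makes $G$ factor-critical with $n=2\nu+1$, so $n$ is odd --- impossible, because an odd-regular graph has even order by the handshake lemma. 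You state the key observation (``at least one matching fails to saturate $v$'') and then abandon it for the dead-end count; as written, your sketch of (a) never reaches a contradiction.

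Part (b) is essentially sound, and your endgame is a legitimate alternative to the paper's. The paper finishes with Vizing's theorem: $(r+1)\nu \le m \le \chi'\nu \le (r+1)\nu$ forces $m=(r+1)\nu=(r+1)\frac{n-1}{2}=\frac{rn}{2}$, whence $n=r+1$. Your route avoids Vizing entirely: once factor-criticality gives $n=2\nu+1$, each of the $r+1$ matchings misses exactly one vertex, while every vertex is missed by at least one of them (degree $r$ against $r+1$ disjoint matchings), so $n\le r+1$; simplicity and $r$-regularity give $n\ge r+1$, hence $G=K_{r+1}$. The ``distinctness of the missed vertices'' that you flag as the main obstacle is not needed for this (and in fact only follows after $n=r+1$ is established); the inequality $n\le r+1$ does all the work. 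So the step you expected to be hardest is actually immediate, whereas the genuinely missing piece is the parity contradiction in part (a).
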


\begin{proof}
(a) Assume $G$ to contain $2k+2$ pairwise edge-disjoint maximum matchings $%
F_{1},...,F_{2k+2}$. Note that we may assume $G$ to be connected.
Clearly, for every $v\in V$ there is $F_{v}\in \left\{
F_{1},...,F_{2k+2}\right\} $ such that $F_{v}$ does not saturate the
vertex $v$. By a theorem \ref{Gallai} of Gallai, it follows that $n
=2\nu+1$, that is, $n$ is odd, which is impossible.

(b) Assume $G$ to contain $r+1$ pairwise edge-disjoint maximum matchings $%
F_{1},...,F_{r+1}$. (a) implies that $n =2\nu+1$ and $r$ is even. Since, by Vizing's theorem $\chi ^{\prime }\leq r+1$, we have%
\begin{equation*}
(r+1)\nu =\left\vert F_{1}\right\vert +...+\left\vert
F_{r+1}\right\vert \leq m \leq \chi ^{\prime }\nu \leq (r+1)\nu,
\end{equation*}
thus
\begin{equation*}
(r+1)\frac{n-1}{2}=(r+1)\nu =m =r\frac{n}{2},
\end{equation*}
or
\begin{equation*}
r=n -1,
\end{equation*}
hence $G$ is the complete graph.
\end{proof}

\begin{remark}
As the example of the "fat triangle" shows, the complete graph with
odd number of vertices is not the only graph, that prevents us to
generalize (a) to even regular graphs.
\end{remark}

Next we prove the main result of the section, which is interesting
not only on its own, but also will help us to derive better bounds
in the theorem \ref{MainTheoremCubics}.

\begin{lemma}
\label{Max Matching 2-3} Every graph $G$, with $2\leq \delta\leq
\Delta \leq 3$, contains a maximum matching, such that the
unsaturated vertices (with respect to this maximum matching) do not
share a neighbour.
\end{lemma}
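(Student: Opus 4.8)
The plan is to argue by contradiction: among all maximum matchings I would choose one minimizing a potential that measures how often two unsaturated vertices share a neighbour, and then show that a short alternating-path flip strictly reduces this potential, contradicting minimality unless no such sharing occurs.

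First I would record the local structure forced by maximality together with the degree bounds. Fix a maximum matching $M$ and let $U$ be its set of unsaturated vertices. Since a maximum matching admits no augmenting edge, $U$ is independent and every neighbour of a vertex of $U$ is saturated. Suppose $u,v\in U$ share a neighbour $w$. Because $\delta\ge 2$ forbids $w$ from having both its neighbours in $U$ (it would then be unsaturated and $uw$ would augment), while $\Delta\le 3$ caps its degree, $w$ must have degree exactly three, with neighbours $u,v$ and a saturated mate $w'$, where $ww'\in M$. Moreover no neighbour of $w'$ other than $w$ can lie in $U$: any exposed neighbour $a$ would yield the augmenting path $u\,w\,w'\,a$, and the cases $a=u$ or $a=v$ are excluded by the augmenting paths $v\,w\,w'\,u$ and $u\,w\,w'\,v$. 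This is precisely the step where the hypothesis $2\le\delta\le\Delta\le 3$ is used in an essential way.

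Next, using $\delta\ge 2$, I would pick a neighbour $a\neq w$ of $w'$ together with its mate $a'$ (so $aa'\in M$). One checks that $u,w,w',a,a'$ are distinct, so that $P:u\,w\,w'\,a\,a'$ is an $M$-alternating path beginning at the unsaturated vertex $u$. Replacing $M$ by $M\triangle E(P)$ keeps the matching maximum, saturates $u$, leaves $w'$ saturated (now matched to $a$), and exposes $a'$; in the new matching $w$ has the single unsaturated neighbour $v$, so the offending pair through $w$ is destroyed. The net effect is to relocate an unsaturated vertex from $u$ to the far endpoint $a'$, pushing it out of the neighbourhood of $w$.

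The main obstacle is to guarantee that this flip makes genuine progress: the newly exposed vertex $a'$ could itself share a neighbour with some vertex of $U\setminus\{u\}$, creating fresh bad pairs, and controlling this is the crux of the argument. My plan is to exploit $\Delta\le 3$, which bounds the number of new coincidences $a'$ can introduce, together with the minimality of the chosen potential, and, if necessary, to refine the potential by weighting pairs of unsaturated vertices by a strictly decreasing function of their mutual distance, so that trading a distance-two pair for pairs that are strictly farther apart is always an improvement; a secondary extremal criterion could be added to rule out the remaining troublesome configurations. Showing that some admissible choice of the flipped path strictly decreases the potential then contradicts minimality and forces the desired conclusion that no two unsaturated vertices share a neighbour.
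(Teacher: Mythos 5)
Your setup is sound and matches the paper's framework: you take a maximum matching minimizing the number of bad pairs, you correctly derive the local structure around a shared neighbour (the shared neighbour is saturated, its mate has no exposed neighbour, etc.), and you correctly identify the crux — after any alternating flip that saturates $u$, the newly exposed vertex may itself create fresh bad pairs. But the proposal does not resolve that crux, and the devices you sketch for resolving it do not work as stated. Weighting pairs of unsaturated vertices by their mutual distance gains nothing, because every bad pair consists of two vertices at distance exactly two (they share a neighbour and the unsaturated set is independent), so the weight is constant on the pairs you are counting; and an unspecified ``secondary extremal criterion'' is not a proof. With your fixed length-four flip $u\,w\,w'\,a\,a'$ there really are configurations where $a'$ lands adjacent to two vertices each of which has another exposed neighbour, so the count of bad pairs strictly increases and minimality is not contradicted.

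The paper closes exactly this gap with a different alternating-path device, and it is the essential missing idea. Rather than flipping a path of fixed short length, one considers all maximum matchings $F'$ containing the edge $e=(u,q)$ (where $q$ is the shared neighbour) and chooses one for which the alternating component $P_e$ of $F\bigtriangleup F'$ through $e$ is \emph{longest}. Maximality of the length forces every neighbour of the far endpoint $v$ of $P_e$ to lie on $P_e$ (otherwise one could extend $P_e$ by rerouting $F'$ through an edge leaving the path) and to differ from $u$ and $q$ (otherwise one builds a larger matching using the edge $(q,w)$ to the other exposed vertex). Consequently, after shifting $F$ along $P_e$, the only newly exposed vertex is $v$, all of whose neighbours are interior, saturated vertices of $P_e$ whose remaining neighbours are also saturated — so $v$ creates no new bad pair and the potential strictly drops. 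Your degree hypotheses $2\leq\delta\leq\Delta\leq 3$ enter precisely in this endpoint analysis, not merely in the local structure around $w$. Without some substitute for this longest-path argument, your proof does not go through.
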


\begin{proof}
Let $F$ be a maximum matching of $G$, for which there are minimum
number of pairs of unsaturated vertices, which have a common
neighbour. The lemma will be proved, if we show that this number is
zero.

Suppose that there are vertices $u$ and $w$ of $G$ which are not
saturated (by $F$) and have a common neighbour $q$. Clearly, $q$ is
saturated by an edge $e_q\in F$. Consider the edge $e=(u,q)$. Note
that it lies in a maximum matching of $G$ (an example of such a
maximum matching is $(F\backslash \{e_q\})\cup \{e\}$). Moreover,
for every maximum matching $F_{e}$
of $G$ with $e\in F_{e}$, the alternating component $P_{e}$ of $%
F\bigtriangleup F_{e}$ which contains the edge $e$, is a path of
even length. Now, choose a maximum matching $F^{\prime }$ of $G$
containing the edge $e$ for which the length of $P_{e}$ is maximum.

Let $v$ be the other ($\neq u$) end-vertex of the path $P_{e}$. Note
that since $P_{e}$ is even, there is a vertex $p$ of $P_{e}$ such that $(p,v)\in F$%
.

\begin{claim}
\label{Neighbours V}The neighbours of $v$ lie on $P_{e}$ and are
different from $u$ and $q$.
\end{claim}

\begin{proof}
First of all let us show that the neighbours of $v$ lie on $P_{e}$.
On the opposite assumption, consider a vertex $v^{\prime }$ which is
adjacent to $v$ and which does not lie on $P_{e}$. Clearly
$(v,v^{\prime })\notin F\cup F^{\prime }$. As $F^{\prime }$ is a
maximum matching, there is an edge $f\in F^{\prime }$ incident to
$v^{\prime }$. Define:
\begin{equation*}
F^{\prime \prime }=(F^{\prime }\backslash \{f\})\cup \{(v,v^{\prime
})\}.
\end{equation*}%
Note that $F^{\prime \prime }$ is a maximum matching of $G$ with
$e\in
F^{\prime \prime }$ for which the length of the alternating component of $%
F\bigtriangleup F^{\prime \prime }$, which contains the edge $e$,
exceeds the length of $P_{e}$ contradicting the choice of $F^{\prime
}$. Thus the
neighbours of $v$ lie on $P_{e}$. Let us show that they are different from $%
u $ and $q$. If there is an edge $e_{1}$ connecting the vertices $u$ and $v$%
, then define:
\begin{equation*}
F^{\prime \prime \prime }=(F\backslash E(P_{e}))\cup ([F^{\prime
}\cap E(P_{e})]\backslash \{e\})\cup \{e_{1},(q,w)\}.
\end{equation*}%
Clearly, $F^{\prime \prime \prime }$ is a matching of $G$ for which $%
\left\vert F^{\prime \prime \prime }\right\vert >\left\vert
F\right\vert $,
which is impossible. Thus, there are no edges connecting $u$ and $v$. As $%
q$ is adjacent to $u$ and $w$, $v$ can be adjacent to $q$ if and
only if $p=q$, that is, if the length of $P_{e}$ is two. But this is
impossible, too, since $d_{G}(v)\geq 2$, hence there should be an
edge connecting $u$ and $v$. The proof of claim \ref{Neighbours V}
is completed.
\end{proof}

\begin{corollary}
The length of $P_{e}$ is at least four.
\end{corollary}

To complete the proof of the lemma we need to consider two cases:

Case 1: $(p,w)\notin E$.

Consider a maximum matching $F_{0}$ of $G$ which is obtained from
$F$ by
shifting the edges of $F$ on $P_{e}$, that is,%
\begin{equation*}
F_{0}=(F\backslash E(P_{e}))\cup (F^{\prime }\cap E(P_{e})).
\end{equation*}

Note that $F_{0}$ saturates all vertices of $P_{e}$ except $v$.
Consider a vertex $v_{0}$ which is a neighbour of $v$. Due to claim
\ref{Neighbours V}, $v_{0}$ is a vertex of $P_{e}$, which is
different from $u$ and $q$. Note that the neighbours of $v_{0}$ are
the vertex $v$ and one or two other vertices of $P_{e}$ which are
saturated by $F_{0}$. Thus there is no unsaturated vertex of $G$,
which has a common neighbour with $v$. This implies that the number
of pairs of vertices of $G$ which are not saturated by $F_{0}$ and
have a common neighbour is less than the corresponding number for
$F$, which contradicts the choice of $F$.

Case 2: $(p,w)\in E$.

Consider a maximum matching $F_{1}$ of $G$, defined as:%
\begin{equation*}
F_{1}=(F\backslash \left\{ (p,v)\right\} )\cup \left\{ (p,w)\right\}
.
\end{equation*}

Note that $F_{1}$ saturates $w$ and does not saturate $v$. Consider
a vertex $v_{1}$ which is a neighbour of $v$. Due to claim
\ref{Neighbours V}, $v_{1}$ is a vertex of $P_{e}$, which is
different from $u$ and $q$. Note that the neighbours of $v_{1}$ are
the vertex $v$ and two other vertices of $P_{e}$
which are saturated by $F_{1}$. Thus there is no unsaturated vertex of $G$%
, which has a common neighbour with $v$. This implies that the
number of pairs of vertices of $G$ which are not saturated by
$F_{1}$ and have a common neighbour is less than the corresponding
number for $F$, which contradicts the choice of $F$. The proof of
lemma \ref{Max Matching 2-3} is completed.
\end{proof}

It would be interesting to generalize the statement of lemma
\ref{Max Matching 2-3} to almost regular graphs. In other words, we
would like to suggest the following

\begin{conjecture}
Let $G$ be graph with $\Delta-\delta\leq 1$. Then $G$ contains a
maximum matching such that the unsaturated vertices (with respect to
this maximum matching) do not share a neighbour.
\end{conjecture}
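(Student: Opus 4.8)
The plan is to follow the architecture of the proof of Lemma \ref{Max Matching 2-3} as closely as possible, isolating precisely where the hypothesis $\Delta\leq 3$ was used and replacing those steps by degree-free arguments. First I would dispose of the degenerate range: if $\delta\leq 1$, then $\Delta-\delta\leq 1$ forces $\Delta\leq 2$, so $G$ is a disjoint union of paths and cycles, and a maximum matching leaving no two unsaturated vertices with a common neighbour is produced by inspection (each component carries at most one unsaturated vertex, and distinct components share no vertices). Hence I may assume $\delta\geq 2$, so that $d_G(v)\geq 2$ for every vertex $v$; the value of $\Delta$ is otherwise unrestricted.

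With $\delta\geq2$ in force, I would choose a maximum matching $F$ minimizing the number of \emph{bad pairs} --- unordered pairs of $F$-unsaturated vertices sharing a neighbour --- and argue by contradiction that this minimum is positive, producing unsaturated $u,w$ with common neighbour $q$, saturated by $e_q\in F$. As in the lemma, the edge $e=(u,q)$ lies in some maximum matching, and for every maximum matching $F_e\ni e$ the component $P_e$ of $F\bigtriangleup F_e$ through $e$ is an even path with endpoint $u$; I would select $F'\ni e$ making the length of $P_e$ maximum and let $v$ be its far endpoint, with $(p,v)\in F$. The analogue of Claim \ref{Neighbours V} --- that every neighbour of $v$ lies on $P_e$ and differs from $u$ and $q$ --- generalizes verbatim: the off-path case is killed by the length-maximality of $P_e$ (extend $F'$ across $(v,v')$), the $u$-neighbour case by exhibiting a larger matching as in the lemma, and the $q$-neighbour case by $d_G(v)\geq2$ together with the corollary that the length of $P_e$ is at least four. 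None of these steps consults $\Delta$.

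The modification step then has the same shape: exhibit a maximum matching obtained by shifting along $P_e$ or by a local swap at $v$, whose unsaturated set arises from that of $F$ by deleting one of $u,w$ and inserting $v$, and show that the \emph{new} unsaturated vertex $v$ belongs to no bad pair, which strictly lowers the count and contradicts the choice of $F$. The only way $v$ can be bad is a neighbour $v_0$ of $v$ that is also adjacent to some still-unsaturated vertex; splitting on whether $v$'s $F$-partner $p$ satisfies $(p,w)\in E$ reproduces the two cases of the lemma (the shift $F_0$, respectively the local swap $F_1$ replacing $(p,v)$ by $(p,w)$), so the genuinely missing ingredient is a confinement of the neighbourhoods of the $v_0$.

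The hard part is exactly the point where $\Delta\leq3$ was silently used: in the cubic setting each $v_0$ has at most three incident edges, forcing all its neighbours onto $P_e$, hence saturated after the shift, so $v_0$ cannot witness a bad pair. For unbounded degree a neighbour $v_0$ of $v$ may have many neighbours off $P_e$, any of which might be unsaturated. I would attempt to close this gap by proving a strengthened confinement statement --- that the neighbours of every neighbour of $v$ again lie on $P_e$ --- via a secondary length-maximality argument: treating $(v,v_0)$ as a fresh start-edge from the unsaturated vertex $v$ in the shifted matching $F_0$ and re-running the extension argument should force any off-path neighbour of $v_0$ to lengthen an alternating path beyond its maximum. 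Making this second-neighbourhood argument hold for all neighbours $v_0$ simultaneously --- rather than one at a time, where successive modifications could recreate previously removed bad pairs --- is where I expect the real difficulty to lie, and it is presumably why the statement is posed here only as a conjecture.
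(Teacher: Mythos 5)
There is no proof in the paper to compare against: this statement is posed as a \emph{conjecture}, and the authors explicitly add that they do not even know whether it holds for $r$-regular graphs with $r\geq 4$. So the question is only whether your attempt actually settles it, and it does not. Your diagnosis of where the hypothesis $\Delta\leq 3$ enters the proof of Lemma \ref{Max Matching 2-3} is accurate --- the reduction to $\delta\geq 2$ is fine, Claim \ref{Neighbours V} and its corollary are degree-free, and the cubic assumption is consumed exactly in the final step, where each neighbour $v_0$ of $v$, being an internal vertex of $P_e$ adjacent to $v$, has all of its at most three neighbours on $P_e$ and hence saturated by $F_0$ (resp.\ $F_1$). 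But the replacement you offer for that step is only a sketch, and it is the entire content of the problem. For $\Delta\geq 4$ a vertex $v_0\in V(P_e)$ adjacent to $v$ may have several neighbours off $P_e$, and nothing in the construction prevents one of them from being unsaturated; the ``secondary length-maximality'' idea you propose --- re-running the extension argument from $v$ through $(v,v_0)$ in the shifted matching $F_0$ --- is not carried out, and you yourself concede that making it work for all neighbours $v_0$ simultaneously, without the successive modifications recreating previously destroyed bad pairs, is the real difficulty. Note also that even the target of that secondary argument (confining the second neighbourhood of $v$ to $P_e$) is not obviously true and, even if true, would have to be established for a single fixed maximum matching rather than one neighbour at a time. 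As it stands the proposal is an accurate reading of the known cubic case together with an honest admission that the general case remains open, which is precisely the status the paper assigns to the statement.
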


We would like to note that we do not even know, whether the
conjecture holds for $r$-regular graphs with $r\geq 4$.

\section{The system of cycles and paths}

In this section we prove two lemmas. For graphs that belong to a
very peculiar family, the first of them allows us to find a system
of cycles and paths that satisfy some explicitly stated properties.
The second lemma helps in finding a system with the same properties
in graphs that are subdivisions of the graphs from the mentioned
peculiar class. Moreover, due to the second lemma, it turns out that
if there is a system of the original graph that includes a maximum
matching, then there is a system of the subdivided graph preserving
this property!

\begin{lemma}
\label{Bipartite 2->=3}Let $G$ be a graph with $\delta \geq 2$.
Suppose that every edge of $G$ connects a vertex of degree two to
one with degree at least three. Then
\end{lemma}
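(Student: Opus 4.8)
The plan is to exploit the bipartite structure forced by the hypothesis. Writing $A$ for the set of vertices of degree two and $B$ for the set of vertices of degree at least three, the assumption that every edge joins a vertex of $A$ to a vertex of $B$ says exactly that $G$ is bipartite with parts $A$ and $B$. First I would record two immediate consequences that will carry most of the ``parity'' content of any reasonable conclusion: every cycle of $G$ (and hence of every subgraph) is even, and every edge has a \emph{unique} endpoint in $B$.

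Next I would construct the desired system as a spanning subgraph $Q$ of maximum degree two in which every vertex of $B$ is an interior vertex. The key simplification is that the two degree constraints decouple. On one hand, since each vertex of $A$ has degree exactly two in $G$, the inequality $\deg_Q(u)\le 2$ holds automatically for every $u\in A$ and every subgraph $Q$. On the other hand, since each edge meets $B$ in exactly one vertex, the requirement $\deg_Q(v)=2$ for all $v\in B$ imposes no interaction between distinct vertices of $B$: it suffices to choose, independently at each $v\in B$, two of the edges incident with $v$. Taking $Q$ to be such a selection, every component of $Q$ has maximum degree two and is therefore a path or a cycle; the cycles are even by the first paragraph; every vertex of $B$ sits in the interior; and consequently the two ends of every path must lie in $A$.

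Finally, the remaining, more delicate requirements of the statement (controlling the endpoints of the paths, forcing the paths to have a prescribed parity, or arranging that the system contains a maximum matching of $G$) I would obtain by refining the choice rather than making it arbitrary. To keep a maximum matching $M$ inside the system, I would initialise $Q$ by placing all edges of $M$ in it and only afterwards complete each under-saturated vertex of $B$ to degree two by adding further incident edges; a clash, in which an added edge would meet a vertex of $A$ already carrying two selected edges, is repaired by rerouting along an alternating path. Here bipartiteness is what makes the bookkeeping work: every alternating walk that closes up does so into an even cycle, so no parity deadlock can arise and the repair process terminates.

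I expect the main obstacle to be exactly this last coordination step. The independent local choices at the vertices of $B$ are easy to make in isolation, but forcing them to be simultaneously compatible with the prescribed endpoint and parity conditions --- and with the retention of a fixed maximum matching --- is where the argument has genuine content. The tool to overcome it is an exchange argument along alternating paths and cycles (in the spirit of the proof of Lemma~\ref{Max Matching 2-3}), and the fact that every such cycle is even is precisely what guarantees that the exchanges can be carried out without obstruction.
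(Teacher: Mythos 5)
Your construction of the system itself is correct and is a genuinely different, and in fact simpler, route than the paper's for conclusion (1). The paper first extracts a maximal family of pairwise vertex-disjoint cycles and then repeatedly peels leaf-to-leaf paths off the remaining forest; you instead select, independently at each vertex $v$ of $V_{\geq 3}=\{v:d(v)\geq 3\}$, two of its incident edges. Since every edge has exactly one endpoint in $V_{\geq 3}$ and every vertex of $V_{2}$ already has degree two in $G$, the selected subgraph $Q$ has maximum degree two, its components are paths and cycles, every vertex of $V_{\geq 3}$ is interior, the path-ends lie in $V_{2}$, and bipartiteness makes the cycles and paths even. You should still verify the count (1.1), namely $r=\frac{1}{2}\sum_{v,d(v)\geq 3}(d(v)-2)$, but this follows from a degree count: $Q$ has exactly $2\left\vert V_{\geq 3}\right\vert$ edges, hence $\sum_{u\in V_{2}}\deg_{Q}(u)=2\left\vert V_{\geq 3}\right\vert$ and $r=\left\vert V_{2}\right\vert-\left\vert V_{\geq 3}\right\vert$, which equals the stated quantity by the handshake identity $2\left\vert V_{2}\right\vert=\sum_{v,d(v)\geq 3}d(v)$.

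The genuine gap is that the lemma's remaining conclusions (2)--(4) are not what you guessed them to be, and the idea that proves them is absent from your proposal. Conclusion (2) asserts that \emph{every} maximum matching $F$ and \emph{every} optimal disjoint pair $(H,H^{\prime })$ covers each vertex of $V_{\geq 3}$ once, respectively twice; (3) gives two edge-disjoint maximum matchings; (4) gives the exact values $\nu _{1}=\frac{2}{k+2}n$ and $\nu _{2}=\frac{4}{k+2}n$. None of these can be reached by an exchange argument that embeds one fixed maximum matching into the system --- that is the content of a different lemma (condition (5) of the subdivision lemma), used later in the paper and not a conclusion here. What is needed is the extremal counting: since every edge has exactly one endpoint in $V_{\geq 3}$, any matching has at most $\left\vert V_{\geq 3}\right\vert$ edges and any disjoint pair at most $2\left\vert V_{\geq 3}\right\vert$ edges in total; the alternating two-colouring of your system attains $2\left\vert V_{\geq 3}\right\vert$, so $\nu _{1}=\left\vert V_{\geq 3}\right\vert$ and $\nu _{2}=2\left\vert V_{\geq 3}\right\vert$, and tightness then forces every maximum matching to saturate all of $V_{\geq 3}$ and every optimal pair to be incident twice to each such vertex, from which (3) and (4) are immediate. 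Your third paragraph's alternating-path repair is therefore aimed at the wrong target, and without the counting identity conclusions (2)--(4) remain unproved.
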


\begin{enumerate}
\item[(1)] \textit{there exists a vertex-disjoint system of even paths }$%
P_{1},...,P_{r}$\textit{\ and cycles }$C_{1},...,C_{l}$\textit{\ of }$G$%
\textit{\ such that }

\begin{enumerate}
\item[(1.1)] $r=\frac{1}{2}\sum_{v,d(v)\geq 3}
(d(v)-2);$

\item[(1.2)] \textit{all vertices of }$G$\textit{\ lie on these paths or
cycles;}

\item[(1.3)] \textit{the end-vertices of the paths }$P_{1},...,P_{r}$\textit{%
\ are of degree two and these end-vertices are adjacent to vertices
of degree at least three;}

\end{enumerate}

\item[(2)] \textit{for every maximum matching }$F$\textit{\ of }$G$\textit{,
every pair of edge-disjoint matchings }$(H,H^{\prime })$\textit{\ with }$%
\left\vert H\right\vert +\left\vert H^{\prime }\right\vert =\nu _{2},$%
\textit{\ every vertex }$v\in V$\textit{\ with }$d(v)\geq 3,$%
\textit{\ }\textit{\ is incident to one edge from }$F$\textit{, one from }%
$H$\textit{\ and one from }$H^{\prime }$\textit{.}

\item[(3)] $G$ \textit{contains two edge-disjoint maximum matchings};

\item[(4)] \label{NuIN(2,k)graphs}\textit{If} $\delta =2$, $\Delta =k\geq 3$, $%
d(v)\in \{2,k\}$ \textit{for every vertex} $v\in V,$ \textit{then}

\begin{equation}
\nu _{1}=\frac{2}{k+2}n ,\nu _{2}=\frac{4}{k+2%
}n.  \label{eq3}
\end{equation}
\end{enumerate}

\begin{proof}
(1) Clearly, $G$ is a bipartite graph, since the sets%
\begin{eqnarray*}
V_{2} &=&\left\{ v\in V:d(v)=2\right\} , \\
V_{\geq 3} &=&\left\{ v\in V:d(v)\geq 3\right\}
\end{eqnarray*}

form a bipartition of $G$. We intend to construct a system of
pairwise
vertex-disjoint cycles and even paths of $G$ such that the all vertices of $%
V_{\geq 3}$ lie on them. Of course, the cycles will be of even length since $%
G$ is bipartite.

Choose a system of cycles $C_{1},...,C_{l}$ of $G$ such that
$V(C_{i})\cap V(C_{j})=\emptyset$, $1\leq i<j\leq l$ and the graph
$G_{0}=G\backslash (V(C_{1})\cup ...V(C_{l}))$ does not contain a
cycle. Clearly, $G_{0}$ is a forest, that is, a graph every
component of which is a tree. Moreover, for every $v_{0}\in V_{0}$

\begin{description}
\item[(a)] if $d_{G_{0}}(v_{0})\geq 3$ then $d_{G_{0}}(v_{0})=d_{G}(v_{0})$;

\item[(b)] if $d_{G_{0}}(v_{0})\in \left\{ 0,1,2\right\} $ then $%
d_{G}(v_{0})=2$.
\end{description}

If $G_{0}$ contains no edge then add the remaining isolated vertices
(paths of length zero) to the system to obtain the mentioned system
of cycles and even paths of $G$. Otherwise, consider a non-trivial
component of $G_{0}$. Let $P_{1}$ be a path of this component
connecting two vertices which have degree one in $G_{0}$. Since $G$
is bipartite, (b) implies that $P_{1}$ is of even length. Consider a
graph $G_{1}$ obtained from $G_{0}$ by removing
the path $P_{1}$, that is,%
\begin{equation*}
G_{1}=G_{0}\backslash V(P_{1}).
\end{equation*}

Note that $G_{1}$ is a forest. Moreover, it satisfies the properties
(a) and (b) like $G_{0}$ does, that is, for every $v_{1}\in V_{1}$

\begin{description}
\item[(a$^{\prime }$)] if $d_{G_{1}}(v_{1})\geq 3$ then $%
d_{G_{1}}(v_{1})=d_{G}(v_{1})$;

\item[(b$^{\prime }$)] if $d_{G_{1}}(v_{1})\in \left\{ 0,1,2\right\} $ then $%
d_{G}(v_{1})=2$.
\end{description}

Clearly, by the repeated application of this procedure we will get a
system of even paths $P_{1},...,P_{r_{0}}$ of $G$ such that the
graph $G\backslash (V(C_{1})\cup ...V(C_{l})\cup V(P_{1})\cup
...V(P_{r_{0}}))=G_{0}\backslash (V(P_{1})\cup ...V(P_{r_{0}}))$
contains no edge. Now, add the remaining isolated vertices (paths of
length zero) to $P_{1},...,P_{r_{0}}$ to obtained a system of even
paths $P_{1},...,P_{r}$.

Note that by the construction $C_{1},...,C_{l}$ and
$P_{1},...,P_{r}$ are vertex-disjoint. Moreover, the paths
$P_{1},...,P_{r}$ are of even length. As $G$ is bipartite, the
cycles $C_{1},...,C_{l}$ are of even length, too.

Again, by the construction of $C_{1},...,C_{l}$ and
$P_{1},...,P_{r}$ we have (1.2) and that the end-vertices of
$P_{1},...,P_{r}$ are of degree two. As every edge of $G$ connects a
vertex of degree two to one with degree at least three, the system
$C_{1},...,C_{l},P_{1},...,P_{r}$ satisfies (1.3).

Let us show that (1.1) holds, too. Since the number of vertices of
degree two and at least three is the same on the cycles
$C_{1},...,C_{l}$, and the difference of these two numbers is one on
each path from $P_{1},...,P_{r}$,
then taking into account (1.2) and (1.3) we get:%
\begin{equation*}
r=\left\vert V_{2}\right\vert -\left\vert V_{\geq 3}\right\vert =\frac{%
2\left\vert V_{2}\right\vert -2\left\vert V_{\geq 3}\right\vert }{2}=\frac{%
\sum_{v,d(v)\geq 3}d(v)-2\left\vert V_{\geq 3}\right\vert
}{2}=\frac{1}{2}\sum_{v,d(v)\geq 3}(d(v)-2).
\end{equation*}

(2) Define a pair of edge-disjoint matchings $(H_{0},H_{0}^{\prime
})$ in
the following way: alternatively add the edges of $C_{1},...,C_{l}$ and $%
P_{1},...,P_{r}$ to $H_{0}$ and $H_{0}^{\prime }$. Note that every vertex $%
v\in V_{\geq 3}$ is incident to one edge from $H_{0}$, one from $%
H_{0}^{\prime }$, and%
\begin{equation}
2\nu _{1}\geq \nu _{2}\geq \left\vert H_{0}\right\vert +\left\vert
H_{0}^{\prime }\right\vert =2\left\vert V_{\geq 3}\right\vert .
\label{eq1}
\end{equation}

On the other hand, for every pair of edge-disjoint matchings
$(h,h^{\prime }) $, every vertex $v\in V_{\geq 3}$ is incident to at
most one edge
from $h$ and at most two edges from $h\cup h^{\prime }$, therefore%
\begin{eqnarray*}
\nu _{1} &=&\underset{h}{\max }\left\vert h\right\vert \leq
\left\vert
V_{\geq 3}\right\vert , \\
\nu _{2} &=&\underset{h\cap h^{\prime }=\emptyset}{\max }(\left\vert
h\right\vert +\left\vert h^{\prime }\right\vert )\leq 2\left\vert
V_{\geq 3}\right\vert ,
\end{eqnarray*}

thus (see (\ref{eq1}))%
\begin{equation}
\nu _{1}=\left\vert V_{\geq 3}\right\vert ,\nu _{2}=2\left\vert
V_{\geq 3}\right\vert ,  \label{eq2}
\end{equation}

and for every maximum matching $F$ of $G$, every pair of
edge-disjoint matchings $(H,H^{\prime })$ with $\left\vert
H\right\vert +\left\vert H^{\prime }\right\vert =\nu _{2}$, every
vertex $v\in V_{\geq 3}$ is incident to one edge from $F$, one from
$H$ and one from $H^{\prime }$.

(3) directly follows from (2). (4) follows from (2) and the
bipartiteness of $G$.

The proof of the lemma \ref{Bipartite 2->=3} is completed.
\end{proof}

\begin{lemma}
\label{SystemInSubdivision}Let $G$ be a connected graph satisfying
the conditions:
\end{lemma}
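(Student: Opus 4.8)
The plan is to transport the system produced by Lemma~\ref{Bipartite 2->=3} through the subdivision. By the hypotheses $G$ is a subdivision of a graph $G'$ meeting the conditions of Lemma~\ref{Bipartite 2->=3}, say obtained by $k(e)$-subdividing each edge $e$ of $G'$; the key elementary observation is that subdivision neither changes the set $V_{\geq 3}$ of vertices of degree at least three nor their degrees, so the quantity $r=\frac12\sum_{v,d(v)\geq 3}(d(v)-2)$ is the same for $G$ and $G'$. First I would apply Lemma~\ref{Bipartite 2->=3} to $G'$, obtaining a vertex-disjoint system of even paths $P_1',\dots,P_r'$ and cycles $C_1',\dots,C_l'$ covering $V(G')$, together with the pair $(H_0,H_0')$ under which every $v\in V_{\geq 3}$ meets one edge of each colour. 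Replacing in each member every edge by the path that subdivides it gives a vertex-disjoint system $\mathcal S$ of paths and cycles of $G$ covering $V(G)$; since the degree-three vertices and the way each segment runs through them are untouched, the covering and degree requirements transfer directly, the counting one because $r$ is preserved.

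The substance of the proof is recovering the colouring, that is, a pair $(H,H')$ of edge-disjoint matchings in which every $v\in V_{\geq 3}$ again meets exactly one edge of each. Along a subdivided path I would colour the whole path alternately: every interior vertex---in particular every degree-three vertex---then meets one edge of each colour, while its two degree-two endpoints meet a single coloured edge, and the third edge at each degree-three vertex, lying on no segment, stays uncoloured. This works for a path of any length, and it works verbatim for a cycle $C_j$ whose length $|C_j'|+\sum_{e\in C_j'}k(e)$ is even.

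The hard part of recovering the colouring is an \emph{odd} cycle $C_j$, which occurs exactly when $\sum_{e\in C_j'}k(e)$ is odd: an odd cycle admits no proper alternating $2$-colouring, while a degree-three vertex can never be permitted to miss a colour. The resolution I propose rests on the remark that an odd value of that sum forces at least one edge of $C_j'$ to be genuinely subdivided, and---since in $G'$ every edge has a degree-two endpoint---the subdivision of such an edge contains an edge joining two degree-two vertices of $G$. Deleting one such edge from the colouring turns $C_j$ into an even path, which I colour alternately; every degree-three vertex of $C_j$ remains interior and still meets one edge of each colour, and the only vertices that lose a colour are the two degree-two endpoints of the deleted edge, for which this is harmless. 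Thus every parity defect is confined to degree-two vertices, which is exactly the slack the conclusion allows.

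Finally, to preserve membership of a maximum matching I would arrange the starting system so that one colour class $H_0$ is a maximum matching $F'$ of $G'$, which is possible by parts (2) and (3) of Lemma~\ref{Bipartite 2->=3}, and then extend $F'$ across each subdivided segment by matching up the newly created degree-two vertices; using the standard description of the maximum matchings of a subdivided path and of a subdivided cycle, I would verify that the result is a maximum matching of $G$, so that the lifted system again contains a maximum matching. The genuinely delicate point, and the one I expect to be the main obstacle, is to carry out this extension in concert with the odd-cycle correction of the previous paragraph---choosing the uncoloured edges and the matched pairs so that the two operations do not conflict---since it is here that parity control and maximality must be reconciled simultaneously.
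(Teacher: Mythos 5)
Your proposal proves a different statement from the one in the lemma, and even read charitably as an attempt at the iterated version it has genuine gaps. The lemma is a single-step transport: $G'$ is obtained from $G$ by \emph{one} $1$-subdivision, and the conclusion is a system of $G'$ satisfying the structural conditions (1)--(4), with the \emph{same} number $r$ of paths, together with condition (5) that some maximum matching of $G'$ lies entirely on the system. There is no pair of matchings $(H,H')$ in the conclusion, so the bulk of your argument --- recovering an alternating $2$-colouring --- is aimed at the wrong target. More importantly, your transported system violates condition (1): a cycle of the system passing through the subdivided edge acquires two adjacent degree-two vertices, so it cannot remain a cycle whose degrees alternate between two and at least three, regardless of parity. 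The paper's Case 3 therefore opens every such cycle and splices it onto an existing path $P_z$ attached at a neighbour of a degree-three vertex of the cycle; this simultaneously restores (1) and keeps $r'=r$. Your odd-cycle fix (delete an edge to turn the cycle into a path) creates an extra path and so breaks $r'=r$, which is not cosmetic: the count $r=n_i^{0}/2$ enters the inequality $\nu_{2i}\geq 2\nu_{1i}-x$ in the proof of Theorem \ref{ArithmeticalMean}.

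The second gap is condition (5). You propose to extend a maximum matching of the base graph across the subdivided segments and then ``verify'' that the result is maximum; but maximality of the extension in the subdivided graph is exactly the nontrivial point, and you yourself flag that you cannot reconcile this extension with the odd-cycle correction. The paper does not extend a given matching at all: it ranges over all pairs consisting of a system satisfying (1)--(4) with $r'=r$ and an arbitrary maximum matching of $G'$, picks a pair maximizing the number of matching edges lying on the system, and then shows by two exchange arguments (Claim \ref{CycleCase} for cycles, and a path-splicing step for a matching edge off the system) that every edge of that maximum matching lies on the system. This extremal argument is the missing idea; without it, or some substitute for it, your outline does not establish (5).
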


\begin{description}
\item[(a)] $\delta \geq 2$;

\item[(b)] \textit{no edge of }$G$\textit{\ connects two vertices having degree at
least three.}

\item[ ] \textit{Let }$G^{\prime }$\textit{\ be a graph obtained from }$G$%
\textit{\ by a }$1$\textit{-subdivision of an edge. If }$G$\textit{\
contains a system of paths }$P_{1},...,P_{r}$\textit{\ and even cycles }$%
C_{1},...,C_{l}$\textit{\ such that }

\begin{enumerate}
\item[(1)] \textit{the degrees of vertices of a cycle from }$C_{1},...,C_{l}$%
\textit{\ are two and at least three alternatively,}

\item[(2)] \textit{all vertices of }$G$\textit{\ lie on these paths or
cycles;}

\item[(3)] \textit{the end-vertices of the paths }$P_{1},...,P_{r}$\textit{\
are of degree two, and the vertices that are adjacent to these
end-vertices and do not lie on }$P_{1},...,P_{r}$ \textit{\ are of
degree at least three;}

\item[(4)] \textit{every edge that does not lie on }$C_{1},...,C_{l}$\textit{%
\ and }$P_{1},...,P_{r}$\textit{\ is incident to one vertex of
degree two and one of degree at least three;}

\item[(5)] \textit{there is a maximum matching }$F$\textit{\ of }$G$\textit{%
\ such that every edge }$e\in F$\textit{\ lies on }$C_{1},...,C_{l}$\textit{%
\ and }$P_{1},...,P_{r}$,
\end{enumerate}

\item \textit{then there is} \textit{a system of paths }$P_{1}^{\prime
},...,P_{r^{\prime }}^{\prime }$\textit{\ and even cycles
}$C_{1}^{\prime },...,C_{l^{\prime }}^{\prime }$\textit{\ of the
graph }$G^{\prime }$ \textit{with} $r^{\prime }=r$
\textit{satisfying (1)-(5)}.
\end{description}

\begin{proof}
Let $P_{1},...,P_{r}$\textit{\ }and $C_{1},...,C_{l}$ be a system of
paths
and even cycles satisfying (1)-(5) and let $e$ be the edge of $G$ whose $1$%
-subdivision led to the graph $G^{\prime }$. First of all we will
construct a system of paths and even cycles of $G^{\prime }$
satisfying the conditions (1)-(4).

We need to consider three cases:

Case 1: $e$ lies on a path $P\in \{P_{1},...,P_{r}\}$.

Let $P^{\prime }$ be the path of $G^{\prime }$ corresponding to $P$
(that is, the path obtained from $P$ by the $1$-subdivision of the
edge $e$). Consider a system of paths and even cycles of $G^{\prime
}$ defined as:
\begin{eqnarray*}
C_{i}^{\prime } &=&C_{i},i=1,...,l, \\
\{P_{1}^{\prime },...,P_{r^{\prime }}^{\prime }\}
&=&(\{P_{1},...,P_{r}\}\backslash \{P\})\cup \{P^{\prime }\}
\end{eqnarray*}
Clearly, $r^{\prime }=r$. It can be easily verified that the system $%
P_{1}^{\prime },...,P_{r^{\prime }}^{\prime }$\textit{\ }and\textit{\ }$%
C_{1}^{\prime },...,C_{l}^{\prime }$ satisfies (1)-(4).

Case 2: $e$ does not lie on either of $P_{1},...,P_{r}$\textit{\ }and $%
C_{1},...,C_{l}$ .

Let $w_{e}$ be the new vertex of $G^{\prime }$ and let $e^{\prime
},e^{\prime \prime }$ be the new edges of $G^{\prime }$, that is:
\begin{eqnarray*}
V' &=&V\cup \{w_{e}\}, \\
E' &=&(E\backslash \{e\})\cup \{e^{\prime },e^{\prime \prime }\}.
\end{eqnarray*}
(4) implies that $e$ is incident to a vertex $u$ of degree two and a vertex $%
v$ of degree at least three, and suppose that $e^{\prime
}=(v,w_{e}),e^{\prime \prime }=(w_{e},u)$.\newline

Since $d(u)=2$ and $e$ does not lie on $P_{1},...,P_{r}$\textit{\ }and $%
C_{1},...,C_{l}$ , (2) implies that there is a path $P_{u}\in
\{P_{1},...,P_{r}\}$ such that $u$ is an end-vertex of $P_{u}$.
Consider the path $P_{u}^{\prime }$ defined as:
\begin{equation*}
P_{u}^{\prime }=w_{e},e^{\prime \prime },P_{u}
\end{equation*}%
and a system of paths and even cycles of $G^{\prime }$ defined as:
\begin{eqnarray*}
C_{i}^{\prime } &=&C_{i},i=1,...,l, \\
\{P_{1}^{\prime },...,P_{r^{\prime }}^{\prime }\}
&=&(\{P_{1},...,P_{r}\}\backslash \{P_{u}\})\cup \{P_{u}^{\prime
}\}.
\end{eqnarray*}%
Clearly, $r^{\prime }=r$. Note that the new system satisfies (1) and
(2). Let us show that it satisfies (3) and (4), too. Since
$d_{G^{\prime }}(w_{e})=2$, $w_{e}$ is adjacent to the vertex $v$ of
degree at least three
and $w_{e}$ is an end-vertex of $P_{u}^{\prime }$, we imply that the system $%
P_{1}^{\prime },...,P_{r^{\prime }}^{\prime }$\textit{\ }and\textit{\ }$%
C_{1}^{\prime },...,C_{l}^{\prime }$ satisfies (3).

Note that we need to verify (4) only for the edge $e^{\prime }$. As $%
d_{G^{\prime }}(w_{e})=2$, $d_{G^{\prime }}(v)\geq 3$, we imply that
the
system $P_{1}^{\prime },...,P_{r^{\prime }}^{\prime }$\textit{\ }and\textit{%
\ }$C_{1}^{\prime },...,C_{l}^{\prime }$ satisfies (4), too.

Case 3: $e$ lies on a cycle $C\in \{C_{1},...,C_{l}\}$.

Let $w_{e}$ be the new vertex of $G^{\prime }$ and let $e^{\prime
},e^{\prime \prime }$ be the new edges of $G^{\prime }$, that is:
\begin{eqnarray*}
V' &=&V\cup \{w_{e}\}, \\
E' &=&(E\backslash \{e\})\cup \{e^{\prime },e^{\prime \prime }\}.
\end{eqnarray*}

(1) implies that the edge $e$ is incident to a vertex $u$ of degree
two and to a vertex $v$ of degree at least three, and suppose that
$e^{\prime }=(v,w_{e})$, $e^{\prime \prime }=(w_{e},u)$. Since
$d_{G}(v)\geq 3$, (b)
implies that there is a vertex $z\in V$ such that $(v,z)\in E$ and $%
z\notin V(C)$. Note that since $d_{G}(z)=2$ and the edge $(v,z)$
does not lie on either of $P_{1},...,P_{r}$\textit{\ }and
$C_{1},...,C_{l}$ (2) implies that there is a path $P_{z}\in
\{P_{1},...,P_{r}\}$ such that $z$ is an end-vertex of $P_{z}$.

Let $P$ be the path $C-e$ of $G$ starting from the vertex $v$.
Consider a path $P^{\prime }$ of $G^{\prime }$ defined as:
\begin{equation*}
P^{\prime }=P_{z},(z,v),P,e^{\prime \prime },w_{e}
\end{equation*}%
and a system of paths and even cycles of $G^{\prime }$ defined as:
\begin{eqnarray*}
\{C_{1}^{\prime },...,C_{l^{\prime }}^{\prime }\}
&=&(\{C_{1},...,C_{l}\}\backslash \{C\}) \\
\{P_{1}^{\prime },...,P_{r^{\prime }}^{\prime }\}
&=&(\{P_{1},...,P_{r}\}\backslash \{P_{z}\})\cup \{P^{\prime }\}.
\end{eqnarray*}%
Clearly, $r^{\prime }=r$. Note that the new system satisfies (1) and
(2). Let us show that it satisfies (3) and (4), too. Since
$d_{G^{\prime }}(w_{e})=2$, $w_{e}$ is adjacent to the vertex $v$ of
degree at least three, we imply that the system $P_{1}^{\prime
},...,P_{r^{\prime }}^{\prime }$\textit{\ }and\textit{\
}$C_{1}^{\prime },...,C_{l}^{\prime }$ satisfies (3).

Note that we need to verify (4) only for the edge $e^{\prime }$. As $%
d_{G^{\prime }}(w_{e})=2$, $d_{G^{\prime }}(v)\geq 3$ we imply that
the
system $P_{1}^{\prime },...,P_{r^{\prime }}^{\prime }$\textit{\ }and\textit{%
\ }$C_{1}^{\prime },...,C_{l}^{\prime }$ satisfies (4), too.

The consideration of these three cases implies that there is a system $%
P_{1}^{\prime },...,P_{r^{\prime }}^{\prime }$\textit{\ }and\textit{\ }$%
C_{1}^{\prime },...,C_{l^{\prime }}^{\prime }$ of paths and even cycles of $%
G^{\prime }$ with $r^{\prime }=r$ satisfying the conditions (1)-(4).
Let us show that among such systems there is at least one satisfying
(5), too.

Consider all pairs $(\mathfrak{F'}_{0},M'_{0})$ in the
graph $G'$ where $\mathfrak{F'}_{0}$ is a system $%
P_{1}^{\prime },...,P_{r^{\prime }}^{\prime }$\textit{\ }and\textit{\ }$%
C_{1}^{\prime },...,C_{l^{\prime }}^{\prime }$ of paths and even cycles of $%
G^{\prime }$ with $r^{\prime }=r$ satisfying the conditions (1)-(4) and $%
M_{0}^{\prime }$ is a maximum matching of $G^{\prime }$. Among these
choose a pair $(\mathfrak{F}^{\prime },M^{\prime })$ for which the
number of edges of $M^{\prime }$ which lie on cycles and paths of
$\mathfrak{F}^{\prime }$ is maximum.We claim that all edges of
$M^{\prime }$ lie on cycles and paths of $\mathfrak{F}^{\prime }$.

\begin{claim}
\label{CycleCase}If $C$ is a cycle from $\mathfrak{F}^{\prime }$
with length $2n$ then there are exactly $n$ edges of $M^{\prime }$
lying on $C$.
\end{claim}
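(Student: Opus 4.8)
The plan is to prove that $M'\cap E(C)$ is in fact a perfect matching of the even cycle $C$, i.e.\ that it consists of exactly $n$ edges, by pairing an easy counting upper bound with an alternating-path exchange that exploits the maximality in the choice of the pair $(\mathfrak{F}',M')$. The upper bound is immediate: $M'\cap E(C)$ is a matching inside a cycle on $2n$ vertices, hence contains at most $n$ edges. For the lower bound I would first use condition (1): the degrees of the vertices of $C$ alternate between two and at least three, so $C$ carries exactly $n$ vertices of degree two, and each such vertex is incident only to its two edges on $C$. Therefore a degree-two vertex of $C$ that is saturated by $M'$ is matched along an edge of $C$, and since each edge of $C$ has a unique degree-two endpoint, distinct saturated degree-two vertices use distinct edges. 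In particular, if all $n$ degree-two vertices of $C$ were $M'$-saturated, then $M'$ would already contain exactly $n$ edges of $C$, and we would be done.

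So I would argue by contradiction and assume that fewer than $n$ edges of $M'$ lie on $C$. By the counting just described, some degree-two vertex $u$ of $C$ is then $M'$-unsaturated, and both edges of $C$ at $u$ are free. Starting at $u$ I would follow the forced alternating path $R$ along $C$ whose edges read free, matched, free, and so on. Because $M'$ is a maximum matching, $R$ cannot terminate at a second unsaturated vertex, as that would produce an augmenting path; and the parity at $u$ forbids $R$ from wrapping around $C$ back to $u$, since returning to $u$ would require the free edge at $u$ to be a matching edge. Hence $R$ must end at a vertex $v^{\ast}$ of $C$ of degree at least three whose $M'$-edge $v^{\ast}y^{\ast}$ leaves $C$. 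Since the components of the system are vertex-disjoint and $v^{\ast}$ lies on $C$, this edge $v^{\ast}y^{\ast}$ is off the system.

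The decisive step is the exchange $M''=(M'\bigtriangleup E(R))\setminus\{v^{\ast}y^{\ast}\}$. Flipping the odd-length alternating path $R$ turns one more free edge of $C$ into a matching edge than it removes, so by itself it would saturate $v^{\ast}$ twice; deleting the off-system edge $v^{\ast}y^{\ast}$ repairs exactly this conflict and touches no other vertex, leaving $M''$ a matching with $|M''|=|M'|$, hence again a maximum matching. By construction $M''$ has precisely one more edge on $C$ than $M'$ does, while the only edge removed, $v^{\ast}y^{\ast}$, lay off the system, and all other paths and cycles of $\mathfrak{F}'$ are untouched. Thus $M''$ has strictly more edges lying on $\mathfrak{F}'$ than $M'$, contradicting the choice of $(\mathfrak{F}',M')$ as the pair maximizing this number. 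This contradiction forces at least $n$ edges of $M'$ on $C$, and together with the upper bound gives exactly $n$.

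I expect the main obstacle to be the careful bookkeeping in the exchange step: checking that flipping $R$ together with the deletion of $v^{\ast}y^{\ast}$ genuinely preserves both the matching property and the cardinality, and that it \emph{strictly increases} the count of matching edges on the system rather than merely relocating them. The termination analysis of $R$ is the conceptual heart, since ruling out both an augmenting path (via maximality of $M'$) and a wrap-around (via the parity at the unsaturated endpoint $u$) is precisely what forces the existence of the off-cycle, off-system edge $v^{\ast}y^{\ast}$ that makes the exchange work.
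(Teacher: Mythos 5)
Your argument is correct, and it reaches the same contradiction as the paper — a maximum matching with strictly more edges on the system, violating the extremal choice of the pair $(\mathfrak{F}',M')$ — but by a genuinely different exchange. The paper counts the $k$ vertices of $C$ saturated by edges of $M'\setminus E(C)$, cuts $C$ into $k$ arcs with an odd number of vertices to get $|M'\cap E(C)|\le n-k$, and then performs one global swap: delete every edge of $M'$ meeting $C$ and insert a perfect matching of $C$; if $k\ge 1$ (or if $k=0$ but fewer than $n$ edges lie on $C$) this swap wins. Your route is more local and more classical: you observe that $|M'\cap E(C)|$ equals the number of $M'$-saturated degree-two vertices of $C$, pick an unsaturated one, grow the forced alternating path $R$ along $C$, rule out a second unsaturated endpoint by maximality of $M'$ and a wrap-around by the parity at $u$, and then flip $R$ while deleting the single off-system edge $v^*y^*$ at its far end. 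The bookkeeping you flag does check out: $R$ has one more free edge than matched edge, so the flip plus the deletion of $v^*y^*$ preserves the matching property and the cardinality, gains a net of one edge of $C$, and the only edge sacrificed is off the system because the members of $\mathfrak{F}'$ are vertex-disjoint and $v^*$ already lies on $C$. What the paper's version buys is brevity (one swap, no path-following); what yours buys is transparency about exactly which off-system edge is traded for which on-cycle edge, and it avoids the odd-arc counting step entirely.
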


\begin{proof}
Let $k$ be the number of vertices of $C$ which are saturated by an
edge from $M^{\prime }\backslash E(C)$. (1) implies that if we
remove these $k$ vertices from $C$ we will get $k$ paths with an odd
number of vertices. Thus
each of these $k$ paths contains a vertex that is not saturated by $%
M^{\prime }$. Thus the total number of edges from $M^{\prime }\cap
E(C)$ is at most
\begin{equation*}
\left\vert M^{\prime }\cap E(C)\right\vert \leq
\frac{2n-2k}{2}=n-k\text{.}
\end{equation*}%
Consider a maximum matching $M^{\prime }$ of $G^{\prime }$ defined
as:
\begin{equation*}
M^{\prime \prime }=(M^{\prime }\backslash M_{C}^{\prime })\cup
M_{C}^{\prime \prime }\text{,}
\end{equation*}%
where $M_{C}^{\prime }$ is the set of edges of $M^{\prime }$ that
are
incident to a vertex of $C$, and $M_{C}^{\prime \prime }$ is a 1-factor of $%
C $. Note that if $k\geq 1$ then%
\begin{equation*}
\left\vert M^{\prime \prime }\cap E(C)\right\vert >\left\vert
M^{\prime }\cap E(C)\right\vert
\end{equation*}%
and therefore for the pair $(\mathfrak{F}^{\prime },M^{\prime \prime
})$ we would have that $M^{\prime \prime }$ contains more edges
lying on cycles and paths of $\mathfrak{F}^{\prime }$ then
$M^{\prime }$ does, contradicting the choice of the pair
$(\mathfrak{F}^{\prime },M^{\prime })$, thus $k=0$, and on the cycle
$C$ from $\mathfrak{F}^{\prime }$ with length $2n$ there are exactly
$n$ edges of $M^{\prime }$. The proof of claim \ref{CycleCase} is
completed.
\end{proof}

Now, we are ready to prove that all edges of $M^{\prime }$ lie on
cycles and paths of $\mathfrak{F}^{\prime }$. Suppose, on the
contrary, that there is an edge $e^{\prime }\in M^{\prime }$ that
does not lie on cycles and paths of $\mathfrak{F}^{\prime }$. (4)
implies that $e^{\prime }$ is incident to a vertex $u$ of degree at
least three and to a vertex $v$ of degree two. (2)
implies that there is a path $P_{v}$ of $\mathfrak{F}^{\prime }$ such that $%
v $ is an end-vertex of $P_{v}$. (2) and claim \ref{CycleCase} imply
that
there is a path $P_{u}$ of $\mathfrak{F}^{\prime }$ such that $u$ lies on $%
P_{u}$. Let $w$ and $z$ be the end-vertices of $P_{u}$, and let $P_{wu}$ and $%
P_{zu}$ be the subpaths of the path $P_{u}$ connecting $w$ and $z$
to $u$, respectively. Consider a system $\mathfrak{F}^{\prime \prime
}$ of paths and even cycles of $G^{\prime }$ defined as follows:
\begin{equation*}
\mathfrak{F}^{\prime \prime }=(\mathfrak{F}^{\prime }\backslash
\{P_{u},P_{v}\})\cup \{P_{zu}-u,P^{\prime }\}
\end{equation*}%
where the path $P^{\prime }$ is defined as:%
\begin{equation*}
P^{\prime }=P_{wu},(u,v),v,P_{v}\text{.}
\end{equation*}%
\ Note that $\mathfrak{F}^{\prime \prime }$ contains exactly
$r^{\prime }=r$ paths. It can be easily verified that the new system
$\mathfrak{F}^{\prime \prime }$ of paths and even cycles of
$G^{\prime }$ satisfies (1)-(4).

Now if we consider the pair $(\mathfrak{F}^{\prime \prime
},M^{\prime })$ we would have that the paths and even cycles of
$\mathfrak{F}^{\prime \prime }$
include more edges of $M^{\prime }$ then the paths and even cycles of $%
\mathfrak{F}^{\prime }$ do, contradicting the choice of the pair $(\mathfrak{%
F}^{\prime },M^{\prime })$. Thus, all edges of $M^{\prime }$ lie on
cycles
and paths of $\mathfrak{F}^{\prime }$. The proof of the lemma \ref%
{SystemInSubdivision} is completed.
\end{proof}

\section{The subdivision and the main parameters}

The aim of this section is to prove a lemma, which claims that,
under some conditions, the subdivision of an edge increases the size
of the maximum 2-edge-colorable subgraph of a graph by one. This is
important for us, since it enables us to control our parameters,
while considering many graphs that are subdivisions of the others.

\begin{lemma}
\label{Edge Subdivision}Let $G$ be a connected graph satisfying the
conditions:
\end{lemma}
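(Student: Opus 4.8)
The plan is to reduce the lemma to a single extremal identity for $\nu_2$ and then observe that subdividing an edge moves the two ingredients of that identity in a completely controlled way. First I would reinterpret $\nu_2(G)$ as the maximum number of edges in a subgraph of $G$ that is a disjoint union of paths and even cycles (a maximum $2$-edge-colorable subgraph): a union of two edge-disjoint matchings is exactly such a subgraph, and conversely such a subgraph splits into two matchings. For any spanning subgraph $S$ of this form one has $\sum_{v}(2-\deg_S(v))=2\,p(S)$, where $p(S)$ is the number of path-components (an uncovered vertex counting as a trivial path); equivalently $|E(S)|=n-p(S)$. Hence $\nu_2(G)=n-\pi(G)$, where $\pi(G)$ is the minimum of $p(S)$ over all such $S$. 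Writing $r=\tfrac12\sum_{v:\,d(v)\ge 3}(d(v)-2)$ (the number of paths produced by the systems of Lemma~\ref{Bipartite 2->=3} and Lemma~\ref{SystemInSubdivision}), the entire lemma follows once I show $\pi(G)=r$, since subdividing an edge adds one vertex of degree two and alters no other degree, and therefore leaves $r$ unchanged while increasing $n$ by one.

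The heart of the matter is the lower bound $\pi(G)\ge r$, and this is exactly where hypothesis (b) — that no edge joins two vertices of degree at least three, i.e. that $V_{\ge 3}$ is independent — does the work. Given any spanning union $S$ of paths and even cycles, set $\mathrm{def}(v)=2-\deg_S(v)\ge 0$, so $\sum_v \mathrm{def}(v)=2\,p(S)$. Each $v\in V_{\ge 3}$ loses at least $d(v)-2$ of its edges in $S$, and by (b) every lost edge at $v$ runs to a vertex of $V_2$; but a degree-two vertex that loses even one edge already has $\mathrm{def}\ge 1$. Charging each missing high–low edge to its (unique) low endpoint, noting that low–low missing edges only add to the low-vertex defect and that the defect at high vertices is nonnegative, the double count yields $2\,p(S)\ge \sum_{v:\,d(v)\ge3}(d(v)-2)=2r$, hence $p(S)\ge r$. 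I expect this counting to be the main obstacle: one must make sure the forced defect is charged to distinct low endpoints and that nothing is subtracted twice, so that the inequality survives even when several high vertices point at the same low vertex.

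For the reverse inequality $\pi(G)\le r$ I would simply display the system supplied by the hypotheses: a family of paths $P_1,\dots,P_r$ and even cycles $C_1,\dots,C_l$ covering $V$ and satisfying (1)--(5) of Lemma~\ref{SystemInSubdivision} (in the base case, Lemma~\ref{Bipartite 2->=3}(1)). Its edge set is a spanning union of paths and even cycles with exactly $r$ path-components, so $\pi(G)\le r$, and combined with the previous paragraph this gives $\nu_2(G)=n-r$.

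Finally I would pass to $G'$. Since $G'$ again satisfies (a) and (b) — the inserted vertex has degree two and subdivision creates no edge between two vertices of degree at least three — the counting argument applies verbatim to give $\pi(G')\ge r$. On the other hand Lemma~\ref{SystemInSubdivision} transports the system of $G$ to a system of $G'$ with the \emph{same} path-count $r'=r$ (and covering the one extra vertex), so $\pi(G')\le r$. Therefore $\nu_2(G')=(n+1)-r=(n-r)+1=\nu_2(G)+1$, which is the assertion. The same transport shows that if the system of $G$ can be chosen to carry a maximum matching, then so can the system of $G'$, which is precisely what will let one iterate the subdivision step in the later arguments.
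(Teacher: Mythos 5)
Your reduction of the lemma to the identity $\nu_2(G)=n-\pi(G)=n-r$ contains two genuine gaps, and both sit exactly where the lemma is delicate. First, the inequality $\pi(G)\le r$ is not ``supplied by the hypotheses'': the hypotheses of this lemma are only $\delta\ge 2$, $G$ not an even cycle, and the absence of edges between two vertices of degree at least three. No system of $r$ paths and even cycles covering $V$ is assumed to exist here. Lemma \ref{Bipartite 2->=3} produces such a system only under the much stronger assumption that \emph{every} edge joins a degree-two vertex to a vertex of degree at least three, and Lemma \ref{SystemInSubdivision} only \emph{transports} a system that is already given. To make your argument work you would have to show that every graph satisfying (a) and (c) and containing a vertex of degree at least three is an iterated $1$-subdivision of a graph to which Lemma \ref{Bipartite 2->=3} applies (suppress the internal vertices of each chain of degree-two vertices down to a single vertex), and then apply Lemma \ref{SystemInSubdivision} once per subdivision. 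That is a real argument, not an invocation of the hypotheses, and it is missing.

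Second, and more seriously, the identity $\nu_2(G)=n-r$ is \emph{false} for odd cycles, which satisfy (a), (b), (c): there $r=0$ but $\pi(G)=1$, so $\nu_2=n-1$. Consequently your final step $\nu_2(G')=(n+1)-r=\nu_2(G)+1$ gives the wrong increment in precisely the exceptional case that part (2) of the lemma singles out, namely $\nu_2'=\nu_2+2$ when $G$ is an odd cycle; your proposal never addresses this case, so part (2) is not proved. Your counting bound $2p(S)\ge\sum_{v,\,d(v)\ge3}(d(v)-2)$ is correct and does give $\nu_2\le n-r$ under (a) and (c) alone, but the construction matching it requires the case split. For comparison, the paper proves the lemma by a direct case analysis on the alternating component of $H\bigtriangleup H'$ containing the subdivided edge (alternating cycle, alternating path, or $e\notin H\cup H'$) for an optimal pair $(H,H')$, with a symmetric analysis in $G'$ for the upper bound; it does not use the path/cycle systems here at all. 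Your route, once both gaps are filled, would in fact yield the cleaner exact formula $\nu_2=n-r$ for all non-cycle graphs satisfying (a) and (c), which is more informative than the lemma itself --- but as written the proof is incomplete.
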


\begin{description}
\item[(a)] $\delta \geq 2$;

\item[(b)] $G$\textit{\ is not an even cycle;}

\item[(c)] \textit{no edge of }$G$\textit{\ connects vertices with degree at
least three.}

\item[ ] \textit{Let }$G^{\prime }$\textit{\ be a graph obtained from }$G$%
\textit{\ by a }$1$\textit{-subdivision of an edge. Then}
\end{description}

\begin{enumerate}
\item[(1)] $\nu' _{2}\geq 1+\nu _{2}$;

\item[(2)] $\nu' _{2}=\left\{
\begin{array}{ll}
2+\nu _{2}, & \text{if }G\text{ is an odd cycle,} \\
1+\nu _{2}, & \text{otherwise.}%
\end{array}%
\right. $
\end{enumerate}

\begin{proof}
(1) Let $(H,H^{\prime })$ be a pair of edge-disjoint matchings of $G$ with $%
\left\vert H\right\vert +\left\vert H^{\prime }\right\vert =\nu
_{2}$ and let $e$ be the edge of $G$ whose $1$-subdivision led to
the graph $G^{\prime }$. We will consider three cases:

Case 1: $e$ lies on a $H\bigtriangleup H^{\prime }$ alternating
cycle $C$.

As $G$ is connected and is not an even cycle, there is a
vertex $v\in V(C)$ with $d_{G}(v)\geq 3$. Clearly, there is a vertex $%
u\notin V(C)$ with $d_{G}(u)=2$ and $(u,v)\notin H\cup H^{\prime }$. Let $%
(u,w)$ be the other ($\neq (u,v)$) edge incident to $u$ and $f$ be
an edge of $C$ incident to $v$. Note that since $v$ is incident to
two edges lying
on $C$ we, without loss of generality, may assume $f$ to be different from $%
e $. Let $P_{0}$ be a path in $G$ whose edge-set coincides with $%
E(C)\backslash \{f\}$ and which starts from the vertex $v$. Now,
assume $P$ to be a path obtained from $P_{0}$ by adding the edge
$(u,v)$ to it, and let $P^{\prime }$ be the path of $G^{\prime }$
corresponding to $P$ (that is, the path obtained from $P$ by the
$1$-subdivision of the edge $e$).

Now, consider a pair of edge-disjoint matchings
$(H_{0},H_{0}^{\prime })$ of $G^{\prime }$ obtained in the following
way:

\begin{itemize}
\item if $(u,w)\notin H$ then alternatively add the edges of $P^{\prime }$
to $H_{0}$ and $H_{0}^{\prime }$ beginning from $H_{0}$;

\item if $(u,w)\notin H^{\prime }$ then alternatively add the edges of $%
P^{\prime }$ to $H_{0}$ and $H_{0}^{\prime }$ beginning from
$H_{0}^{\prime } $.
\end{itemize}

Define a pair of edge-disjoint matchings $(H_{1},H_{1}^{\prime })$ of $%
G^{\prime }$ as follows:%
\begin{equation*}
H_{1}=(H\backslash E(C))\cup H_{0},H_{1}^{\prime }=(H^{\prime
}\backslash E(C))\cup H_{0}^{\prime }.
\end{equation*}

Clearly,
\begin{equation*}
\nu' _{2}\geq \left\vert H_{1}\right\vert +\left\vert H_{1}^{\prime
}\right\vert =1+\left\vert H\right\vert +\left\vert H^{\prime
}\right\vert =1+\nu _{2}.
\end{equation*}

Case 2: $e$ lies on a $H\bigtriangleup H^{\prime }$ alternating path
$P$.

Let $P^{\prime }$ be the path of $G^{\prime }$ corresponding to $P$
(that is, the path obtained from $P$ by the $1$-subdivision of the
edge $e$).
Consider a pair of edge-disjoint matchings $(H_{0},H_{0}^{\prime })$ of $%
G^{\prime }$ obtained in the following way: alternatively add the edges of $%
P^{\prime }$ to $H_{0}$ and $H_{0}^{\prime }$. Define:%
\begin{equation*}
H_{1}=(H\backslash E(P))\cup H_{0},H_{1}^{\prime }=(H^{\prime
}\backslash E(P))\cup H_{0}^{\prime }.
\end{equation*}

Clearly,
\begin{equation*}
\nu' _{2}\geq \left\vert H_{1}\right\vert +\left\vert H_{1}^{\prime
}\right\vert =1+\left\vert H\right\vert +\left\vert H^{\prime
}\right\vert =1+\nu _{2}.
\end{equation*}

Case 3: $e\notin H\cup H^{\prime }$.

Due to (c) there is $u\in V$ with $d_{G}(u)=2$, such that $e$ is
incident to $u$. Let $f$ be the other ($\neq e$) edge of $G$ that is
incident to $u$, and assume $e^{\prime }$ to be the edge of
$G^{\prime }$ that is incident to $u$ in $G^{\prime }$ and is
different from $f$. Now, add the edge $e^{\prime }$ to $H$ if
$f\notin H$, and to $H^{\prime }$ if $f\notin H^{\prime }$. Clearly,
we constructed a pair of edge-disjoint matchings of $G^{\prime }$,
which contains $1+\nu _{2}$ edges, therefore%
\begin{equation*}
\nu' _{2}\geq 1+\nu _{2}.
\end{equation*}

(2) Note that if $G$ is an odd cycle then $G^{\prime }$ is an even one and $%
\nu' _{2}=2+\nu _{2}$, therefore, taking into account (1) and (b),
it suffices to show that if $G$ is not a cycle then $\nu' _{2}\leq
1+\nu _{2}$.

Let $(H,H^{\prime })$ be a pair of edge-disjoint matchings of
$G^{\prime }$ with $\left\vert H\right\vert +\left\vert H^{\prime
}\right\vert =\nu' _{2}$ and let $v$ be the new vertex of $G^{\prime
}$, that is, assume $\left\{ v\right\} =V'\backslash V$. We need to
consider three cases:

Case 1: $H\cup H^{\prime }$ contains at most one edge incident to
the vertex $v$.

Note that%
\begin{eqnarray*}
\nu _{2} &\geq &\left\vert (H\cup H^{\prime })\cap E\right\vert \geq
\left\vert (H\cup H^{\prime })\cap E(G-e)\right\vert \geq \\
&\geq &\left\vert H\right\vert +\left\vert H^{\prime }\right\vert
-1=\nu' _{2}-1
\end{eqnarray*}

or%
\begin{equation*}
\nu' _{2}\leq 1+\nu _{2}.
\end{equation*}

Case 2: The vertex $v$ belongs to an alternating component of $%
H\bigtriangleup H^{\prime }$ which is a path $P_{v}^{\prime }$.

Let $P_{v}$ be a path of $G$ containing the edge $e$ and corresponding to $%
P_{v}^{\prime }$, that is, let $P_{v}^{\prime }$ be obtained from
$P_{v}$ by the $1$-subdivision of the edge $e$. Consider a pair of
edge-disjoint matchings $(H_{0},H_{0}^{\prime })$ of $G$ defined as
follows: alternatively
add the edges of $P_{v}$ to $H_{0}$ and $H_{0}^{\prime }$. Define:%
\begin{equation*}
H_{1}=(H\backslash E(P_{v}^{\prime }))\cup H_{0},H_{1}^{\prime
}=(H^{\prime }\backslash E(P_{v}^{\prime }))\cup H_{0}^{\prime }.
\end{equation*}

Note that $(H_{1},H_{1}^{\prime })$ is a pair of edge-disjoint matchings of $%
G$. Moreover,%
\begin{equation*}
\nu _{2}\geq \left\vert H_{1}\right\vert +\left\vert H_{1}^{\prime
}\right\vert =\left\vert H\right\vert +\left\vert H^{\prime
}\right\vert -1=\nu' _{2}-1
\end{equation*}

or%
\begin{equation*}
\nu' _{2}\leq 1+\nu _{2}.
\end{equation*}

Case 3: The vertex $v$ belongs to an alternating component of $%
H\bigtriangleup H^{\prime }$ which is a cycle $C_{v}^{\prime }$.

Let $C_{v}$ be a cycle of $G$ containing the edge $e$ and corresponding to $%
C_{v}^{\prime }$, that is, let $C_{v}^{\prime }$ be obtained from
$C_{v}$ by the $1$-subdivision of the edge $e$. As $G$ is not a
cycle, we imply that there is a vertex $w\in V(C_{v}^{\prime })$
with $d_{G^{\prime }}(w)\geq 3$. Clearly, there is a vertex
$w^{\prime }\in V'\backslash
V(C_{v}^{\prime })$ such that $d_{G^{\prime }}(w^{\prime })=2$ and $%
(w,w^{\prime })\in E'$. Let $g$ be the other ($\neq (w,w^{\prime
})$) edge of $G^{\prime }$ incident to $w^{\prime }$. Since $w$ is
incident
to two edges lying on $C_{v}^{\prime }$, we imply that there is an edge $f$ $%
\neq e$ such that $f$ is incident to $w$. Let $P_{0v}$ be a path of
$G$, whose set of edges coincides with $E(C_{v})\backslash \{f\}$
and starts from $w$. Now consider the path $P_{v}$ obtained from
$P_{0v}$ by adding the edge $(w,w^{\prime })$ to it.

Consider a pair of edge-disjoint matchings $(H_{0},H_{0}^{\prime })$
of $G$ defined as follows:

\begin{itemize}
\item if $g\notin H$ \ then alternatively add the edges of $P_{v}$ to $H_{0}$
and $H_{0}^{\prime }$ beginning from $H_{0}$;

\item if $g\notin H^{\prime }$ then alternatively add the edges of $P_{v}$
to $H_{0}$ and $H_{0}^{\prime }$ beginning from $H_{0}^{\prime }$.
\end{itemize}

Define%
\begin{equation*}
H_{1}=(H\backslash E(C_{v}^{\prime }))\cup H_{0},H_{1}^{\prime
}=(H^{\prime }\backslash E(C_{v}^{\prime }))\cup H_{0}^{\prime }.
\end{equation*}

Note that $(H_{1},H_{1}^{\prime })$ is a pair of edge-disjoint matchings of $%
G$. Moreover,%
\begin{equation*}
\nu _{2}\geq \left\vert H_{1}\right\vert +\left\vert H_{1}^{\prime
}\right\vert =\left\vert H\right\vert +\left\vert H^{\prime
}\right\vert -1=\nu' _{2}-1
\end{equation*}

or%
\begin{equation*}
\nu' _{2}\leq 1+\nu _{2}.
\end{equation*}

The proof of the lemma \ref{Edge Subdivision} is completed.
\end{proof}

\section{The lemma}

In this section we prove a lemma that presents some lower bounds for
our parameters while we consider various subdivisions of graphs. The
aim of this lemma is the preparation of adequate theoretical tools
for understanding the growth of our parameters depending on the
numbers that the edges of graphs are subdivided. In contrast with
the proofs of the statements (a), (b), (c), (h), (i), that do not
include any induction, the proofs of the others significantly rely
on induction. Moreover, the basic tools for proving these statements
by induction are the proposition \ref{CubicPseudoGraph} and the
"loop-cut", the operation that helps us to reduce the number of
loops in a pseudo-graph. To understand the dynamics of the growth of
our parameters, we heavily use the lemma \ref{Edge Subdivision}.

Before we move on, we would like to define a class of graphs which
will play a crucial role in the proof of the main result of the
paper.

If $G_{0}$ is a cubic pseudo-graph such that the removal (not cut)
of its loops leaves a tree (if we adopt the convention presented in
\cite{Harary}, then we may say that the "underlying graph" of
$G_{0}$ is a tree; the simplest example of such a cubic pseudo-graph
is one from figure \ref{TrivialCase}), then consider the graph $G$
obtained from $G_{0}$ by $l(e)$-subdividing each edge $e$ of
$G_{0}$, where
\begin{equation*}
l(e)=\left\{
\begin{array}{ll}
1, & \text{if }e\text{ is a loop,} \\
2, & \text{otherwise.}%
\end{array}%
\right.
\end{equation*}%
Define $\mathfrak{M}$ to be the class of all those graphs $G$ that
can be
obtained in the mentioned way. Note that the members of the class $\mathfrak{%
M}$ are connected graphs.

\begin{lemma}
\label{PseudoGraphSubdivision} Let $G_{0}$ be a connected cubic
pseudo-graph, and consider the graph $G$ obtained from $G_{0}$ by $k(d)$%
-subdividing each edge $d$ of $G_{0}$, $k(d)\geq 1$. Suppose that,
for every edge $d$ of $G_{0}$, which is not a loop, we have:
$k(d)\geq 2$. Then:

\begin{description}
\item[(a)] If $G_{0}$ does not contain a loop then

\begin{description}
\item[(a1)] $\nu _{2}\geq \frac{7}{8}n ;$

\item[(a2)] $n \geq 4n_{0}
;$
\end{description}

\item[(b)] If $G_{0}$ contains an edge $f$ which is adjacent to two loops $e$
and $g$, then $G_{0}$ is the cubic pseudo-graph from figure
\ref{TrivialCase} and%
\begin{equation*}
\frac{\nu _{2}}{n }=\frac{k(e)+k(f)+k(g)+1}{%
k(e)+k(f)+k(g)+2}\text{;}
\end{equation*}

\item[(c)] If $G_{0}$ contains a loop $e$, then consider the cubic
pseudo-graph $G_{0}^{\prime }$ obtained from $G_{0}$ by cutting the
loop $e$ and the graph $G^{\prime }$ obtained from $G_{0}^{\prime }$
by $k^{\prime }(d^{\prime })$-subdividing each edge $d^{\prime }$ of
$G_{0}^{\prime }$, where
\begin{equation}
k^{\prime }(d^{\prime })=\left\{
\begin{array}{ll}
k(h)+k(h^{\prime })-2 & \text{if }d^{\prime }=g\text{,} \\
k(d^{\prime }) & \text{otherwise.}%
\end{array}%
\right.  \label{KPrimeDefinition}
\end{equation}%
Then:

\begin{description}
\item[(c1)] $n_0 =n_{0}^{\prime}+2;$

\item[(c2)] $n =n'+k(f)+k(e)+4;$

\item[(c3)] $\nu _{1}\geq \nu' _{1}+\left[ \frac{k(f)}{2}%
\right] +\left[ \frac{k(e)+1}{2}\right] +1;$

\item[(c4)] $\nu _{2}\geq \nu' _{2}+k(f)+k(e)+3;$
\end{description}

\item[(d)]

\begin{description}
\item[(d1)] $\nu _{2}\geq \frac{5}{6}n ;$

\item[(d2)] $n \geq 3n_0;$
\end{description}

\item[(e)]

\begin{description}
\item[(e1)] If $G_{0}$ contains a loop $e$ such that $k(e)\geq 2$ then $\nu
_{2}\geq \frac{6}{7}n $ and $n \geq \frac{7}{2}n_0;$

\item[(e2)] If $G_{0}$ contains an edge $f$ such that $f$ is not a loop and $%
k(f)\geq 3$ then $\nu _{2}\geq \frac{6}{7}n$ and $n \geq
\frac{7}{2}n_0;$
\end{description}

\item[(f)] $\nu _{1}\geq \frac{3}{7}n ;$

\item[(g)] If $G\in \mathfrak{M}$ then $\nu _{1}\geq \frac{6}{13}n ;$

\item[(h)] If a cubic pseudo-graph $G_{0}^{\prime }$ is obtained from $G_{0}$
by cutting its loop $e$ and if a graph $G^{\prime }$ is obtained from $%
G_{0}^{\prime }$ by $k^{\prime }(d^{\prime })$-subdividing each edge $%
d^{\prime }$ of $G_{0}^{\prime }$, where $k^{\prime }(d^{\prime })$
is defined according to (\ref{KPrimeDefinition}), then if $n' \geq
\frac{7}{2}n'_0 $ then $n \geq \frac{7}{2}n_0$; in other words, the
property $n <\frac{7}{2}n_0 $ is an invariant
for the operation of cutting a loop and defining $k^{\prime }$ according to (%
\ref{KPrimeDefinition});

\item[(i)] If $n <\frac{7}{2}n_0 $ then $G\in \mathfrak{M}$.
\end{description}
\end{lemma}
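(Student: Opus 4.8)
The plan is to prove the contrapositive-adjacent chain of implications by induction on the number of loops of $G_0$, using the "loop-cut" operation as the reduction step and the class $\mathfrak{M}$ as the family of "extremal" graphs. Concretely, I would argue that if $n < \frac{7}{2}n_0$, then $G$ must be built entirely out of the structural pieces that define $\mathfrak{M}$, namely loops subdivided once and non-loop edges subdivided twice over a cubic pseudo-graph whose underlying graph is a tree.

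First I would set up the induction on $n_0$ (equivalently, on the number of loops, since by Remark~\ref{SuccessiveCut} successive loop-cuts terminate at either a loopless connected graph or the trivial graph of figure~\ref{TrivialCase}). For the base cases I would handle the two terminal situations of the loop-cut process. If $G_0$ is loopless, then by (a2) we have $n \geq 4n_0 > \frac{7}{2}n_0$, so the hypothesis $n < \frac{7}{2}n_0$ fails vacuously and there is nothing to prove. If $G_0$ is the trivial graph of figure~\ref{TrivialCase} (the case covered in (b)), then $G_0$ already has its removal-of-loops yielding a tree, so $G \in \mathfrak{M}$ provided the subdivision numbers match the definition of $\mathfrak{M}$; here I would use the constraints forced by $n < \frac{7}{2}n_0$ to pin down $k(e), k(f), k(g)$ to their minimal legal values ($1$ for loops, $2$ for the non-loop edge).

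For the inductive step, suppose $G_0$ contains a loop $e$ to which the cut operation applies, and suppose $n < \frac{7}{2}n_0$. The key observation comes from parts (e1) and (e2): if any loop $e$ had $k(e) \geq 2$, or any non-loop edge $f$ had $k(f) \geq 3$, then we would get $n \geq \frac{7}{2}n_0$, contradicting the hypothesis. Hence every loop is subdivided exactly once and every non-loop edge is subdivided exactly twice — which is precisely the subdivision pattern $l(e)$ defining $\mathfrak{M}$. It then remains to show that the removal of the loops of $G_0$ leaves a tree. For this I would cut the loop $e$ to obtain $G_0'$ and $G'$ (with $k'$ defined by (\ref{KPrimeDefinition})), and invoke the contrapositive of (h): since $n < \frac{7}{2}n_0$, part (h) guarantees $n' < \frac{7}{2}n_0'$, so the induction hypothesis applies to $G'$, giving $G' \in \mathfrak{M}$. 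Thus the underlying graph of $G_0'$ is a tree. I would then argue that reinserting the cut loop $e$ (the inverse of the splitting-off operation) preserves the tree structure of the underlying graph, so the underlying graph of $G_0$ is also a tree, and therefore $G \in \mathfrak{M}$.

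The main obstacle I expect is the bookkeeping in the inductive step: verifying that when the minimal subdivision values are forced and the induction hypothesis yields $G' \in \mathfrak{M}$, the reconstruction of $G$ from $G'$ by uncutting the loop $e$ genuinely lands back in $\mathfrak{M}$ rather than producing some edge with the "wrong" subdivision number. In particular, one must check that the edge $g$ of $G_0'$ (which absorbs $k(h)+k(h')-2$ via (\ref{KPrimeDefinition})) and the recovered edges $h, h', f, e$ of $G_0$ all carry exactly the subdivision values sanctioned by $l(\cdot)$, and that the "remove-the-loops-leaves-a-tree" property is not destroyed by the local surgery around the vertices $u_0$ and $v_0$ of the loop-cut. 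This is where parts (c1)--(c4) and the precise definition of the loop-cut from the setup must be combined carefully; everything else is an application of the already-established bounds (e1), (e2), and (h).
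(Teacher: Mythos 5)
Your proposal addresses only part (i) of the lemma: it takes (a2), (b), (c1)--(c4), (e1), (e2) and (h) as ``already-established'' inputs, whereas the statement to be proved is the entire nine-part lemma, whose bulk (the explicit matching constructions behind (c3)--(c4), the inductions for (d), (e), (f), (g), and the computations for (a), (b), (h)) your text does not touch. So as a proof of the stated result it is fundamentally incomplete; I will confine the rest of my comments to part (i), which is what you actually argue.

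For part (i) your strategy --- use (e1) and (e2) to force $k=1$ on loops and $k=2$ on non-loop edges, then peel off loops one at a time, propagating $n<\frac{7}{2}n_{0}$ via the contrapositive of (h) and concluding by induction --- is the same skeleton as the paper's (the paper phrases it as a maximal sequence of loop-cuts rather than an explicit induction, and uses (a2) to rule out terminating at a loopless graph, so the terminal pseudo-graph is the one of figure \ref{TrivialCase}). But the step you defer to ``bookkeeping'' is exactly the step that can fail, and it is not mere bookkeeping. Cutting a loop $e$ replaces the pair $h,h'$ at $v_{0}$ by a single edge $g=(u,v)$; if $u=v$, i.e.\ $h$ and $h'$ form a cycle of length two in $G_{0}$, then $g$ is a \emph{loop} of $G'_{0}$, and loops are discarded when one passes to the underlying graph. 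Hence ``the underlying graph of $G'_{0}$ is a tree'' does \emph{not} imply ``the underlying graph of $G_{0}$ is a tree'': uncutting can resurrect a $2$-cycle that was invisible in $G'_{0}$. This is precisely the situation isolated by Proposition \ref{CycleInvariance}, and the paper disposes of it by noting that in that case (\ref{KPrimeDefinition}) gives $k'(g)=k(h)+k(h')-2\geq 2$ for the new loop $g$, so (e1) applied to $G'$ yields $n'\geq\frac{7}{2}n'_{0}$, contradicting what (h) guarantees. Without this argument (or an equivalent one) your inductive step has a genuine hole: you would be asserting acyclicity of $G_{0}$ from acyclicity of $G'_{0}$ in exactly the configuration where that implication is false. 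You correctly sensed the danger in your final paragraph, but naming an obstacle is not the same as overcoming it; the case $u=v$ must be explicitly excluded via (e1), and once it is, the remaining case $u\neq v$ does give a bijection between cycles of $G_{0}$ and cycles of $G'_{0}$, completing the induction.
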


\begin{proof}
(a) For the proof of (a1) consider a graph $G^{\prime }$ obtained from $%
G_{0} $ by $1$-subdividing each edge of $G_{0}$. Note that
$G^{\prime }$ satisfies the conditions of (\ref{NuIN(2,k)graphs}) of
the lemma \ref{Bipartite 2->=3}, thus (see
the equality (\ref{eq3}))%
\begin{equation*}
\nu' _{2}=\frac{4}{5}n' =\frac{%
4}{5}(n_0 +m_0)=\frac{%
4}{5}\cdot \frac{5}{2}\cdot n_0=2n_0
\end{equation*}%
therefore due to lemma \ref{Edge Subdivision} we have:%
\begin{equation}
\frac{\nu _{2}}{n }=\frac{\nu' _{2}+\sum_{e\in E_{0}}(k(e)-1)}{n'
+\sum_{e\in E_{0}}(k(e)-1)}\text{.} \label{asterik}
\end{equation}

Note that for each $e\in E_{0}$ $k(e)\geq 2$, hence
\begin{equation*}
\sum_{e\in E_{0}}(k(e)-1)\geq m_0 =%
\frac{3}{2}n_0.
\end{equation*}%
Taking into account (\ref{asterik}) we get:
\begin{equation*}
\frac{\nu _{2}}{n }=\frac{2n_0 +\sum_{e\in E_{0}}(k(e)-1)}{\frac{5}{2}%
n_0 +\sum_{e\in E_{0}}(k(e)-1)}\geq \frac{2n_0
+\frac{3}{2}n_0 }{\frac{5}{2}n_0 +\frac{3}{2}%
n_0 }=\frac{7}{8},
\end{equation*}%
thus
\begin{equation*}
\nu _{2}\geq \frac{7}{8}n .
\end{equation*}

For the proof of (a2) let us note that as $G_{0}$ does not contain a
loop, for each edge $f$ of $G_{0}$ we have $k(f)\geq 2$, thus
\begin{equation*}
n =n_0 +\sum_{f\in E_{0}}k(f)\geq n_0+2m_0 =4n_0.
\end{equation*}

(b) Note that
\begin{equation*}
n =n_0 +k(e)+k(f)+k(g)=2+k(e)+k(f)+k(g).
\end{equation*}
Since $f$ is not a loop, we have $k(f)\geq 2$ thus
\begin{equation*}
\nu _{2}=m -2=1+k(e)+k(f)+k(g),
\end{equation*}
and
\begin{equation*}
\frac{\nu _{2}}{n }=\frac{k(e)+k(f)+k(g)+1}{%
k(e)+k(f)+k(g)+2}.
\end{equation*}

(c) The proof of (c1) follows directly from the definition of the
operation of cutting loops. For the proof of (c2) note that
\begin{eqnarray*}
n &=&n'
-k'(g)+k(h)+k(h')+1+k(f)+1+k(e)= \\
&=&n'+k(f)+k(e)+4
\end{eqnarray*}
since $k'(g)=k(h)+k(h')-2$ (see (\ref{KPrimeDefinition})).

For the proof of (c3) and (c4) let us introduce some additional
notations. Let $C_{e},P_{f},P_{h},P_{h'}$ be the cycle and paths of
$G$
corresponding to the edges $e,f,h,h'$ of the cubic pseudo-graph $%
G_{0}$. Let $K_{g}$ be the cycle or a path of $G'$ corresponding to
the edge $g$ of the cubic pseudo-graph $G'_{0}$.

Let $F'$ be a maximum matching of the graph $G'$. Define
$\varepsilon =\varepsilon (F')$ as the number of vertices from
$\{u,v\}$ which
are saturated by an edge from $F'\cap E(K_{g})$. Note that if $%
u\neq v$ then $0\leq \varepsilon \leq 2$ and if $u=v$ then $%
0\leq \varepsilon \leq 1$.

Consider a subset of edges of the graph $G$ defined as:
\begin{equation*}
F=(F'\backslash E(K_{g}))\cup F_{h,h'}\cup F_{f}\cup F_{e}%
\,
\end{equation*}%
where $F_{h,h'}$ is a maximum matching of a path $P_{h,h'}$ obtained
from the paths $P_{h}$ and $P_{h'}$ as follows:
\begin{equation*}
P_{h,h'}=\left\{
\begin{array}{ll}
P_{h}\backslash \{u,v_{0}\},v_{0},P_{h^{\prime }}\backslash
\{v_{0},v\} &
\text{if }\varepsilon =0\text{;} \\
P_{h}\backslash \{v_{0}\},v_{0},P_{h^{\prime }}\backslash \{v_{0}\} & \text{%
if }\varepsilon =2\text{;} \\
P_{h}\backslash \{v_{0}\},v_{0},P_{h^{\prime }}\backslash
\{v_{0},v\} &
\begin{array}{l}
\text{if }\varepsilon =1\text{ and an edge } \\
\text{of }F^{\prime }\cap E(K_{g})\text{ saturates }u\text{;}%
\end{array}
\\
P_{h}\backslash \{u,v_{0}\},v_{0},P_{h^{\prime }}\backslash
\{v_{0}\} &
\begin{array}{l}
\text{if }\varepsilon =1\text{ and an edge } \\
\text{of }F^{\prime }\cap E(K_{g})\text{ saturates }v\text{;}%
\end{array}%
\end{array}%
\right.
\end{equation*}%
$F_{f}$ is a maximum matching of $P_{f}\backslash \{u_{0},v_{0}\}$, and $%
F_{e}$ is a maximum matching of $C_{e}$.

Note that if $u=v$ and $\varepsilon =1$ then we define the path
$P_{h,h^{\prime }}$ in two ways. We would like to stress that our
results do not depend on the way the path $P_{h,h^{\prime }}$ is
defined.

By the construction of $F$, $F$ is a matching of $G$. Moreover,%
\begin{gather*}
\nu _{1}\geq \left\vert F\right\vert =\left\vert F^{\prime
}\right\vert -\left\vert F^{\prime }\cap E(K_{g})\right\vert
+\left\vert F_{h,h^{\prime
}}\right\vert +\left\vert F_{f}\right\vert +\left\vert F_{e}\right\vert = \\
=\nu' _{1}-\left[ \frac{k^{\prime }(g)+\varepsilon %
}{2}\right] +\left[ \frac{k(h)+k(h^{\prime })+1+\varepsilon }{2}%
\right] +\left[ \frac{k(f)}{2}\right] + \\
+\left[ \frac{k(e)+1}{2}\right] =\nu' _{1}-\left[ \frac{%
k(h)+k(h^{\prime })+\varepsilon }{2}\right] +1+ \\
+\left[ \frac{k(h)+k(h^{\prime })+1+\varepsilon }{2}\right] +%
\left[ \frac{k(f)}{2}\right] +\left[ \frac{k(e)+1}{2}\right] \geq \\
\geq \nu' _{1}+\left[ \frac{k(f)}{2}\right] +\left[ \frac{k(e)+1%
}{2}\right] +1
\end{gather*}%
as%
\begin{equation*}
\left[ \frac{k(h)+k(h^{\prime })+1+\varepsilon }{2}\right] \geq %
\left[ \frac{k(h)+k(h^{\prime })+\varepsilon }{2}\right] \text{.%
}
\end{equation*}

Now, let us turn to the proof of (c4). Let $(H_{1}^{\prime
},H_{2}^{\prime
}) $ be a pair of edge-disjoint matchings of $G^{\prime }$ such that $%
\left\vert H_{1}^{\prime }\right\vert +\left\vert H_{2}^{\prime
}\right\vert =\nu' _{2}$. Define $\delta =\delta (H_{1}^{\prime
},H_{2}^{\prime })$ as the number of vertices from $\{u,v\}$ which
are saturated by an edge from $(H_{1}^{\prime }\cup H_{2}^{\prime
})\cap E(K_{g})$. Note that if $u\neq v$ then $0\leq \delta \leq 2$
and if $u=v$ then $0\leq \delta \leq 1$. We need to consider two
cases:

Case 1: $0\leq \delta \leq 1$;

Define a pair of edge-disjoint matchings $\left( H_{1},H_{2}\right)
$ of $G$ as follows:
\begin{eqnarray*}
H_{1} &=&(H_{1}^{\prime }\backslash E(K_{g}))\cup H_{1hh^{\prime
}}\cup
H_{1fe}\text{,} \\
H_{2} &=&(H_{2}^{\prime }\backslash E(K_{g}))\cup H_{2hh^{\prime
}}\cup H_{2fe}\text{,}
\end{eqnarray*}%
where $H_{1hh^{\prime }}$,$H_{2hh^{\prime }}$ are obtained from a path $%
P_{hh^{\prime }}$ alternatively adding its edges to $H_{1hh^{\prime }}$ and $%
H_{2hh^{\prime }}$; $H_{1fe}$,$H_{2fe}$ are obtained from a path
$P_{fe}$
alternatively adding its edges to $H_{1fe}$ and $H_{2fe}$, and the paths $%
P_{hh^{\prime }}$ and $P_{fe}$ are defined as
\begin{equation*}
P_{h,h^{\prime }}=\left\{
\begin{array}{ll}
P_{h}\backslash \{u,v_{0}\},v_{0},P_{h^{\prime }}\backslash
\{v_{0},v\} &
\text{ if }\delta =0\text{;} \\
P_{h}\backslash \{v_{0}\},v_{0},P_{h^{\prime }}\backslash
\{v_{0},v\} &
\begin{array}{l}
\text{if }\delta =1\text{ and an edge } \\
\text{of }(H_{1}^{\prime }\cup H_{2}^{\prime })\cap E(K_{g})\text{
saturates
}u\text{;}%
\end{array}
\\
P_{h}\backslash \{u,v_{0}\},v_{0},P_{h^{\prime }}\backslash
\{v_{0}\} &
\begin{array}{l}
\text{if }\delta =1\text{ and an edge } \\
\text{of }(H_{1}^{\prime }\cup H_{2}^{\prime })\cap E(K_{g})\text{
saturates
}v\text{;}%
\end{array}%
\end{array}%
\right.
\end{equation*}%
\begin{equation*}
P_{fe}=P_{f}\backslash \{v_{0},u_{0}\},u_{0},C_{e}\backslash \{u_{0}\}\text{.%
}
\end{equation*}%
Again, let us note that if $u=v$ and $\delta =1$ then we define the
path $P_{h,h^{\prime }}$ in two ways. We would like to stress that
our results do not depend on the way the path $P_{h,h^{\prime }}$ is
defined.

Note that
\begin{gather*}
\nu _{2}\geq \left\vert H_{1}\right\vert +\left\vert
H_{2}\right\vert =\left\vert (H_{1}^{\prime }\cup H_{2}^{\prime
})\backslash E(K_{g})\right\vert +(\left\vert H_{1hh^{\prime
}}\right\vert +\left\vert
H_{2hh^{\prime }}\right\vert )+ \\
+(\left\vert H_{1fe}\right\vert +\left\vert H_{2fe}\right\vert
)=\left\vert H_{1}^{\prime }\right\vert +\left\vert H_{2}^{\prime
}\right\vert -\left\vert (H_{1}^{\prime }\cup H_{2}^{\prime })\cap
E(K_{g})\right\vert +
\\
+\left\vert E(P_{hh^{\prime }})\right\vert +\left\vert
E(P_{fe})\right\vert \geq \nu' _{2}-((k^{\prime }(g)+\delta
)-1)+ \\
+((k(h)+k(h^{\prime })+\delta +1)-1)+((k(f)+k(e)+1)-1)= \\
=\nu' _{2}-(k(h)+k(h^{\prime })+\delta -3)+(k(h)+k(h^{\prime
})+\delta )+ \\
+(k(f)+k(e))=\nu' _{2}+k(f)+k(e)+3\text{.}
\end{gather*}

Case 2: $\delta =2$;

Define a pair of edge-disjoint matchings $\left( H_{1},H_{2}\right)
$ of $G$ as follows:
\begin{eqnarray*}
H_{1} &=&(H_{1}^{\prime }\backslash E(K_{g}))\cup H_{1hfe}\cup
H_{1h^{\prime
}}\text{,} \\
H_{2} &=&(H_{2}^{\prime }\backslash E(K_{g}))\cup H_{2hfe}\cup
H_{2h^{\prime }}\text{,}
\end{eqnarray*}%
where $H_{1hfe}$,$H_{2hfe}$ are obtained from a path $P_{hfe}$
alternatively
adding its edges to $H_{1hfe}$ and $H_{2hfe}$; $H_{1h^{\prime }}$,$%
H_{2h^{\prime }}$ are obtained from the path $P_{h^{\prime
}}\backslash
\{v_{0}\}$ alternatively adding its edges to $H_{1h^{\prime }}$ and $%
H_{2h^{\prime }}$, and the path $P_{hfe}$ is defined as
\begin{equation*}
P_{hfe}=P_{h}\backslash \{v_{0}\},v_{0},P_{f}\backslash
\{v_{0},u_{0}\},u_{0},C_{e}\backslash \{u_{0}\}\text{.}
\end{equation*}%
Note that%
\begin{gather*}
\nu _{2}\geq \left\vert H_{1}\right\vert +\left\vert
H_{2}\right\vert =\left\vert (H_{1}^{\prime }\cup H_{2}^{\prime
})\backslash E(K_{g})\right\vert +(\left\vert H_{1hfe}\right\vert
+\left\vert
H_{2hfe}\right\vert )+ \\
+(\left\vert H_{1h^{\prime }}\right\vert +\left\vert H_{2h^{\prime
}}\right\vert )=\left\vert H_{1}^{\prime }\right\vert +\left\vert
H_{2}^{\prime }\right\vert -\left\vert (H_{1}^{\prime }\cup
H_{2}^{\prime
})\cap E(K_{g})\right\vert + \\
+\left\vert E(P_{hfe})\right\vert +\left\vert E(P_{h^{\prime
}}\backslash
\{v_{0}\})\right\vert \geq \nu' _{2}-((k^{\prime }(g)+2)-1)+ \\
+(1+k(h)+1+k(f)+1+k(e)-1)+((k(h^{\prime })+1)-1)= \\
=\nu' _{2}-(k(h)+k(h^{\prime
})-1)+(k(h)+k(f)+k(e)+2)+k(h^{\prime })= \\
=\nu' _{2}+k(f)+k(e)+3\text{.}
\end{gather*}

(d) We will give a simultaneous proof of the statements (d1) and
(d2). Note that if $G_{0}$ does not contain a loop then (a1) and
(a2) imply that
\begin{equation*}
\nu _{2}\geq \frac{7}{8}n
>\frac{5}{6}n \text{, and }n \geq 4n_0 >3n_0,
\end{equation*}%
thus without loss of generality, we may assume that $G_{0}$ contains
a loop. Our proof is by induction on $n_0$. Clearly, if $n_0=2$ then
$G_{0}$ is the pseudo-graph from figure \ref{TrivialCase}, thus (b)
implies that
\begin{equation*}
\frac{\nu _{2}}{n }\geq \frac{5}{6},\text{ and }%
n =2+k(e)+k(f)+k(g)\geq 6=3n_0
\end{equation*}

as $k(e),k(g)\geq 1$ and $k(f)\geq 2$. Note that $\nu
_{2}=\frac{5}{6}n$ or $n =3n_0$ if $k(e)=k(g)=1$ and $k(f)=2$.

Now, by induction, assume that for every graph $G^{\prime }$
obtained from a cubic pseudo-graph $G_{0}^{\prime }$ ($n'_0 <n_0$)
by $k^{\prime }(e^{\prime }) $-subdividing each edge $e^{\prime }$
of $G_{0}^{\prime }$, we have
\begin{equation*}
\nu' _{2}\geq \frac{5}{6}n' \text{ and }n' \geq 3n'_0,
\end{equation*}%
and consider the cubic pseudo-graph $G_{0}$ ($n_0 \geq 4$) and its
corresponding graph $G$.

Let $e$ be a loop of $G_{0}$, and consider a cubic pseudo-graph $%
G_{0}^{\prime },$ obtained from $G_{0},$ by cutting the loop $e$
((a) of figure \ref{loopcut}). Note that $G_{0}^{\prime }$ is
well-defined, since $n_0 \geq 4$. As $n'_0 < n_0,$ due to induction
hypothesis, we have
\begin{equation}
\nu' _{2}\geq \frac{5}{6}n' \text{ and }n' \geq 3n'_0 \text{,}
\label{InductionBound}
\end{equation}%
where $G^{\prime }$ is obtained from $G_{0}^{\prime }$ by $k^{\prime
}(d^{\prime })$-subdividing each edge $d^{\prime }$ of
$G_{0}^{\prime }$,
and the mapping $k^{\prime }$ is defined according to (\ref{KPrimeDefinition}%
). On the other hand, due to (c1), (c2) and (c4), we have%
\begin{gather*}
n_0 =n'_0 +2;
\\
n =n' +k(f)+k(e)+4%
\text{,} \\
\nu _{2}\geq \nu' _{2}+k(f)+k(e)+3.
\end{gather*}%
Since $k(f)\geq 2$, $k(e)\geq 1$ we have
\begin{gather*}
\frac{k(f)+k(e)+3}{k(f)+k(e)+4}\geq \frac{6}{7}>\frac{5}{6}\text{, and} \\
\frac{k(f)+k(e)+4}{2}\geq \frac{7}{2}>3
\end{gather*}%
and therefore due to (\ref{InductionBound}) and proposition \ref%
{FractionInequality}, we get:
\begin{gather*}
\frac{\nu _{2}}{n}\geq \frac{\nu' _{2}+k(f)+k(e)+3}{n'+k(f)+k(e)+4}\geq \frac{%
5}{6}\text{, and} \\
\frac{n}{n_0}=\frac{%
n'+k(f)+k(e)+4}{n'_0+2}\geq 3.
\end{gather*}

(e) We will prove (e1) by induction on $n_0$. Note that if $n_0=2,$
then $G_{0}$ is the pseudo-graph from figure \ref{TrivialCase}, thus%
\begin{equation*}
n =k(e)+k(f)+k(g)+2=\frac{k(e)+k(f)+k(g)+2}{2}\cdot n_0
\end{equation*}%
and due to (b)%
\begin{equation*}
\frac{\nu _{2}}{n}=\frac{k(e)+k(f)+k(g)+1}{%
k(e)+k(f)+k(g)+2}.
\end{equation*}%
Now if $G_{0}$ satisfies (e1), then taking into account that $k(g)\geq 1$, $%
k(e)\geq 1$, $\max \{k(e),k(g)\}\geq 2$ and $k(f)\geq 2$, we get $%
k(e)+k(f)+k(g)\geq 5$, and therefore
\begin{equation*}
\frac{\nu _{2}}{n}\geq \frac{6}{7}\text{ and }%
n \geq \frac{7}{2}n_0.
\end{equation*}

Now, by induction, assume that for every graph $G^{\prime },$
obtained from a cubic pseudo-graph $G_{0}^{\prime }$ ($n'_0<n_0$),
by $k^{\prime }(e^{\prime }) $-subdividing each edge $e^{\prime }$
of $G_{0}^{\prime }$, we have
\begin{equation*}
\nu' _{2}\geq \frac{6}{7}n' \text{ and }n'\geq \frac{7}{2}n'_0,
\end{equation*}%
provided that $G_{0}^{\prime }$ satisfies (e1), and consider the
cubic pseudo-graph $G_{0}$ ($n_0 \geq 4$) and its corresponding
graph $G$. We need to consider two cases:

Case 1: $G_{0}$ contains at least two loops.

Let $e_{0}$ be a loop of $G_{0}$ that differs from $e$. Consider the
cubic pseudo-graph $G_{0}^{\prime },$ obtained from $G_{0},$ by
cutting the loop $e_{0}$ ((a) of figure \ref{loopcut}), and the graph $%
G^{\prime },$ obtained from a cubic pseudo-graph $G_{0}^{\prime },$ by $%
k^{\prime }(e^{\prime })$-subdividing each edge $e^{\prime }$ of $%
G_{0}^{\prime }$, where the mapping $k^{\prime }$ is defined according to (%
\ref{KPrimeDefinition}).

Since $n'_0 <n_0$ and $e\in E_{0}^{\prime }$, due to induction
hypothesis, we have
\begin{equation*}
\nu' _{2}\geq \frac{6}{7}n' \text{ and }n' \geq \frac{7}{2}n'_0
\end{equation*}%
(c1), (c2) and (c4) imply that%
\begin{gather*}
n_0 =n'_0+2;
\\
n=n'+k(f)+k(e_{0})+4, \\
\nu _{2}\geq \nu _{2}+k(f)+k(e_{0})+3.
\end{gather*}%
Since $k(f)\geq 2$, $k(e_{0})\geq 1$ we have
\begin{gather*}
\frac{k(f)+k(e_{0})+3}{k(f)+k(e_{0})+4}\geq \frac{6}{7}\text{, and} \\
\frac{k(f)+k(e_{0})+4}{2}\geq \frac{7}{2}
\end{gather*}%
and therefore due to proposition \ref{FractionInequality}, we get:
\begin{gather*}
\frac{\nu _{2}}{n }\geq \frac{\nu' _{2}+k(f)+k(e_{0})+3}{n'+k(f)+k(e_{0})+4}%
\geq \frac{6}{7}\text{, and} \\
\frac{n}{n_0}=\frac{n'+k(f)+k(e_{0})+4}{n'_0+2}\geq \frac{7}{2}.
\end{gather*}

Case 2: $G_{0}$ contains exactly one loop.

Let $e-$the only loop of $G_{0}-$ be adjacent to the edge $d$. Let
$u_{0}$ be the vertex of $G_{0}$ that is incident to $d$ and $e$, and let $%
d=(u_{0},v_{0})$. Let $h$ and $h^{\prime }$ ($h\neq h^{\prime }$) be
two edges that differ from $d$ and are incident to $v_{0}$. Finally,
let $u$ and $v$ be the endpoints of $h$ and $h^{\prime }$ that are
not incident to $d$, respectively.

Subcase 2.1: $u\neq v$.

Consider a cubic pseudo-graph $G_{0}^{\prime }$ obtained from
$G_{0}$ by cutting the loop $e$ and the graph $G^{\prime }$ obtained
from a cubic pseudo-graph $G_{0}^{\prime }$ by $k^{\prime
}(e^{\prime })$-subdividing each edge $e^{\prime }$ of
$G_{0}^{\prime }$, where the mapping $k^{\prime }$ is defined
according to (\ref{KPrimeDefinition}). As $G_{0}^{\prime }$ does
not contain a loop, due to (a1) and (a2), we have%
\begin{equation}
\nu' _{2}\geq \frac{7}{8}n' \text{ and }n' \geq 4n'_0.
\label{7-8Bound}
\end{equation}%
(c1), (c2) and (c4) imply that%
\begin{gather*}
n_0 =n'_0+2;
\\
n =n'+k(d)+k(e)+4%
\text{,} \\
\nu _{2}\geq \nu' _{2}+k(d)+k(e)+3.
\end{gather*}%
Since $k(e)\geq 2$, $k(d)\geq 2$ we have
\begin{equation*}
k(e)+k(d)\geq 4\text{,}
\end{equation*}%
thus
\begin{gather*}
\frac{k(d)+k(e)+3}{k(d)+k(e)+4}\geq \frac{7}{8}>\frac{6}{7}\text{, and} \\
\frac{k(d)+k(e)+4}{2}\geq 4>\frac{7}{2}\text{.}
\end{gather*}%
Due to (\ref{7-8Bound}) and proposition \ref{FractionInequality}, we
get:
\begin{gather*}
\frac{\nu _{2}}{n }\geq \frac{\nu' _{2}+k(d)+k(e)+3}{n'+k(d)+k(e)+4}\geq \frac{%
6}{7}\text{, and} \\
\frac{n}{n_0}=\frac{%
n'+k(d)+k(e)+4}{n'_0+2}\geq \frac{7}{2}.
\end{gather*}

Subcase 2.2: $u=v$.

Let $h^{\prime \prime }$ be the edge which is incident to $u$ and is
different from $h$ and $h^{\prime }$, and let $h^{\prime \prime
}=(u,w)$ (figure \ref{Reduction}).

\begin{figure}[h]
\begin{center}
\includegraphics[width=24pc]{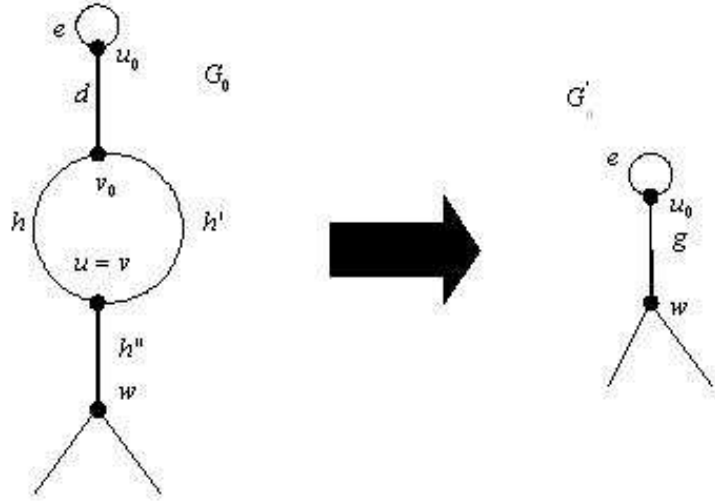}\\
\caption{Reducing $G_{0}$ to $%
G'_{0}$}\label{Reduction}
\end{center}
\end{figure}

Define a cubic pseudo-graph $G'_{0}$ as follows:%
\begin{eqnarray*}
G'_{0}&=&(G_{0}\backslash \{v_{0},u\})\cup \{g\}\text{, where} \\
g &=&(u_{0},w)\text{,}
\end{eqnarray*}%
and consider the graph $G'$ obtained from $G'_{0}$ by $%
k'(e')$-subdividing each edge $e'$ of $%
G'_{0}$, where
\begin{equation*}
k'(e')=\left\{
\begin{array}{ll}
k(d)+k(h^{\prime \prime })-2 & \text{if }e^{\prime }=g\text{,} \\
k(e^{\prime }) & \text{otherwise.}%
\end{array}%
\right.
\end{equation*}%
Note that $e\in E_{0}^{\prime }$, $n'_0 <n_0$ and $k^{\prime
}(e)=k(e)\geq
2$ thus, due to induction hypothesis, we have:%
\begin{equation}
\nu' _{2}\geq \frac{6}{7}n' \text{ and }n' \geq \frac{7}{2}n'_0.
\label{Induction6-7}
\end{equation}

It is not hard to see that
\begin{gather*}
n_0=n'_0+2;
\\
n =n'+k(h)+k(h^{\prime })+4, \\
\nu _{2}\geq \nu' _{2}+k(h)+k(h^{\prime })+3.
\end{gather*}
As $k(h),k(h^{\prime })\geq 2$, we have
\begin{eqnarray*}
\frac{k(h)+k(h^{\prime })+3}{k(h)+k(h^{\prime })+4} &\geq &\frac{7}{8}>\frac{%
6}{7}\text{, and} \\
\frac{k(h)+k(h^{\prime })+4}{2} &\geq &4>\frac{7}{2},
\end{eqnarray*}%
therefore due to (\ref{Induction6-7}) and proposition \ref%
{FractionInequality}, we get:%
\begin{gather*}
\frac{\nu _{2}}{n}\geq \frac{\nu' _{2}+k(h)+k(h^{\prime })+3}{n'
+k(h)+k(h^{\prime })+4}\geq \frac{6}{7}\text{, and} \\
\frac{n }{n_0}=\frac{%
n'+k(h)+k(h^{\prime })+4}{n'_0+2}\geq \frac{7}{2}.
\end{gather*}%
The proof of (e1) is completed. Now, let us turn to the proof of
(e2). Note that if $G_{0}$ does not contain a loop then (a1) and
(a2) imply that
\begin{equation*}
\nu _{2}\geq \frac{7}{8}n >\frac{6}{7}n \text{, and }n \geq 4n_0
>\frac{7}{2}n_0,
\end{equation*}%
thus, without loss of generality, we may assume that $G_{0}$
contains a loop.
Our proof is by induction on $n_0$. Clearly, if $%
n_0=2$ then $G_{0}$ is the pseudo-graph from figure
\ref{TrivialCase},
\begin{equation*}
n=k(e)+k(f)+k(g)+2=\frac{k(e)+k(f)+k(g)+2}{2}\cdot n_0
\end{equation*}%
and due to (b)%
\begin{equation*}
\frac{\nu _{2}}{n}=\frac{k(e)+k(f)+k(g)+1}{%
k(e)+k(f)+k(g)+2}\text{.}
\end{equation*}%
Now, if $G_{0}$ satisfies (e2) then $k(f)\geq 3$ and taking into
account that $k(g)\geq 1$, $k(e)\geq 1$, we get $k(e)+k(f)+k(g)\geq
5$, therefore
\begin{equation*}
\frac{\nu _{2}}{n}\geq \frac{6}{7}\text{ and }%
n \geq \frac{7}{2}n_0.
\end{equation*}

Now, by induction, assume that for every graph $G^{\prime }$
obtained from a cubic pseudo-graph $G_{0}^{\prime }$ ($n'_0 <n_0$)
by $k^{\prime }(e^{\prime }) $-subdividing each edge $e^{\prime }$
of $G_{0}^{\prime }$, we have
\begin{equation*}
\nu' _{2}\geq \frac{6}{7}n' \text{ and }n' \geq \frac{7}{2}n'_0
\end{equation*}%
and consider the cubic pseudo-graph $G_{0}$ ($n_0\geq 4$) and its
corresponding graph $G$.

Case 1: There is an edge $f^{\prime }=(u_{0},v_{0})$ such that $f$ and $%
f^{\prime }$ form a cycle of the length two (figure \ref{Case of
multiple edge})

\begin{figure}[h]
\begin{center}
\includegraphics[height=15pc]{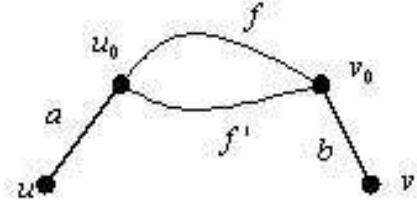}\\
\caption{The case of multiple edge}\label{Case of multiple edge}
\end{center}
\end{figure}

Let $a,b,f,f^{\prime },u_{0},v_{0},u,v$ be the edges and vertices as
on
figure \ref{Case of multiple edge}. Consider a cubic pseudo-graph $%
G_{0}^{\prime }$, defined as follows:%
\begin{eqnarray*}
G_{0}^{\prime } &=&(G_{0}\backslash \{u_{0},v_{0}\})\cup
\{g\}\text{, where}
\\
g &=&(u,v)\text{,}
\end{eqnarray*}%
and consider the graph $G^{\prime }$ obtained from $G_{0}^{\prime }$ by $%
k^{\prime }(e^{\prime })$-subdividing each edge $e^{\prime }$ of $%
G_{0}^{\prime }$, where
\begin{equation*}
k^{\prime }(e^{\prime })=\left\{
\begin{array}{ll}
k(f) & \text{if }e^{\prime }=g\text{,} \\
k(e^{\prime }) & \text{otherwise.}%
\end{array}%
\right.
\end{equation*}%
Note that
\begin{gather*}
n_0=n'_0+2;
\\
n =n'+k(a)+k(b)+k(f^{\prime })+2\text{,} \\
\nu _{2}\geq \nu' _{2}-(k(f)+1)+k(a)+k(b)+k(f^{\prime
})+2+1+k(f)-1= \\
=\nu' _{2}+k(a)+k(b)+k(f^{\prime })+1.
\end{gather*}%
Let us show that
\begin{equation*}
\nu' _{2}\geq \frac{6}{7}n' \text{ and }n'\geq \frac{7}{2}n'_0.
\end{equation*}%
First of all note that $n'_0 <n_0$ and $k^{\prime }(g)=k(f)\geq 3,$
therefore if $g$ is not a loop of $G_{0}^{\prime }$ ($u\neq v$) then
the inequalities follow directly from the induction hypothesis. On
the other hand, if $g$ is a loop of $G_{0}^{\prime }$ ($u=v$) then
the same inequalities hold due to (e1).

Since
\begin{gather*}
\frac{k(a)+k(b)+k(f^{\prime })+1}{k(a)+k(b)+k(f^{\prime })+2}\geq \frac{7}{8}%
>\frac{6}{7}\text{, and} \\
\frac{k(a)+k(b)+k(f^{\prime })+2}{2}\geq 4>\frac{7}{2}\text{.}
\end{gather*}%
proposition \ref{FractionInequality} implies that
\begin{gather*}
\frac{\nu _{2}}{n }\geq \frac{\nu' _{2}+k(a)+k(b)+k(f^{\prime
})+1}{n'+k(a)+k(b)+k(f^{\prime })+2}\geq \frac{6}{7}\text{, and} \\
\frac{n}{n_0}=\frac{n'+k(a)+k(b)+k(f^{\prime })+2}{n'_0+2}\geq
\frac{7}{2}.
\end{gather*}

Case 2: $G_{0}$ contains at least two loops and does not satisfy the
condition of the case 1.

As $G_{0}$ is connected and $n_0\geq 4$,
there is a loop $e$ of $G_{0}$ such that $e$ is not adjacent to $%
f $. Let $d$ be the edge adjacent to the edge $e$. Let $u_{0}$ be
the vertex
of $G_{0}$ that is incident to $d$ and $e$, and let $d=(u_{0},v_{0})$. Let $%
h $ and $h^{\prime }$ be two edges that differ from $d$ and are incident to $%
v_{0}$. Finally, let $u$ and $v$ be the endpoints of $h$ and
$h^{\prime }$ that are not incident to $d$, respectively.

Consider the cubic pseudo-graph $G_{0}^{\prime }$ obtained from
$G_{0}$ by cutting the loop $e$ and the graph $G^{\prime }$ obtained
from a cubic pseudo-graph $G_{0}^{\prime }$ by $k^{\prime
}(e^{\prime })$-subdividing each edge $e^{\prime }$ of
$G_{0}^{\prime }$, where the mapping $k^{\prime }$ is defined
according to (\ref{KPrimeDefinition}). Note that $n'_0 < n_0$.

Let us show that $G_{0}^{\prime }$ satisfies the condition of (e2).
Clearly, if $f\in E_{0}^{\prime }$ then we are done, thus we may assume that $%
f\notin E_{0}^{\prime }$. Since $d\neq f$, we imply that $f\in
\{h,h^{\prime }\}$. As $G_{0}$ does not satisfy the condition of the
case 1, the edge $g\in E_{0}^{\prime }$ is not a loop of $%
G_{0}^{\prime }$ and
\begin{equation*}
k^{\prime }(g)=k(h)+k(h^{\prime })-2\geq 3\text{.}
\end{equation*}
Thus $G_{0}^{\prime }$ satisfies the condition of (e2), therefore,
due to induction hypothesis, we get:
\begin{equation*}
\nu' _{2}\geq \frac{6}{7}n' \text{ and }n'\geq \frac{7}{2}n'_0.
\end{equation*}%
(c1), (c2) and (c4) imply that%
\begin{gather*}
n_0=n'_0+2;
\\
n =n' +k(d)+k(e)+4%
\text{,} \\
\nu _{2}\geq \nu' _{2}+k(d)+k(e)+3\text{.}
\end{gather*}%
Since $k(d)\geq 2$, $k(e)\geq 1$ we have
\begin{gather*}
\frac{k(d)+k(e)+3}{k(d)+k(e)+4}\geq \frac{6}{7}\text{, and} \\
\frac{k(d)+k(e)+4}{2}\geq \frac{7}{2}
\end{gather*}%
therefore, due to proposition \ref{FractionInequality}, we get:
\begin{gather*}
\frac{\nu _{2}}{n }\geq \frac{\nu' _{2}+k(d)+k(e)+3}{n'+k(d)+k(e)+4}\geq \frac{%
6}{7}\text{, and} \\
\frac{n}{n_0}=\frac{%
n'+k(d)+k(e)+4}{n'_0+2}\geq \frac{7}{2}.
\end{gather*}

Case 3: $G_{0}$ contains exactly one loop $e$ and does not satisfy
the condition of the case 1.

Let $d$ be the edge adjacent to the edge $e$. Let $u_{0}$ be the vertex of $%
G_{0}$ that is incident to $d$ and $e$, and let $d=(u_{0},v_{0})$.
Let $h$
and $h^{\prime }$ be two edges that differ from $d$ and are incident to $%
v_{0}$. Finally, let $u$ and $v$ be the endpoints of $h$ and
$h^{\prime }$ that are not incident to $d$, respectively.

Subcase 3.1: $d=f$ and $u=v$.

Define a cubic pseudo-graph $G_{0}^{\prime }$ as follows (figure \ref%
{Reduction}):%
\begin{eqnarray*}
G_{0}^{\prime } &=&(G_{0}\backslash \{u,v_{0}\})\cup \{g\}\text{, where} \\
g &=&(u_{0},w)\text{,}
\end{eqnarray*}%
and consider the graph $G^{\prime }$ obtained from $G_{0}^{\prime }$ by $%
k^{\prime }(e^{\prime })$-subdividing each edge $e^{\prime }$ of $%
G_{0}^{\prime }$, where
\begin{equation*}
k^{\prime }(e^{\prime })=\left\{
\begin{array}{ll}
k(f)+k(h^{\prime \prime })-2 & \text{if }e^{\prime }=g\text{,} \\
k(e^{\prime }) & \text{otherwise.}%
\end{array}%
\right.
\end{equation*}%
Note that $n'_0 < n_0$ and $k^{\prime }(g)=k(f)+k(h^{\prime \prime
})-2\geq 3$
thus, due to induction hypothesis, we have:%
\begin{equation*}
\nu' _{2}\geq \frac{6}{7}n' \text{ and }n'\geq \frac{7}{2}n'_0.
\end{equation*}

On the other hand, it is not hard to see that
\begin{gather*}
n_0 =n'_0 +2;
\\
n =n'+k(h)+k(h^{\prime })+4\text{,} \\
\nu _{2}\geq \nu' _{2}+k(h)+k(h^{\prime })+3.
\end{gather*}%
As $k(h),k(h^{\prime })\geq 2$, we have
\begin{eqnarray*}
\frac{k(h)+k(h^{\prime })+3}{k(h)+k(h^{\prime })+4} &\geq &\frac{7}{8}>\frac{%
6}{7}\text{, and} \\
\frac{k(h)+k(h^{\prime })+4}{2} &\geq &4>\frac{7}{2},
\end{eqnarray*}%
therefore, due to proposition \ref{FractionInequality}, we get:%
\begin{gather*}
\frac{\nu _{2}}{n}\geq \frac{\nu' _{2}+k(h)+k(h^{\prime })+3}{n'
+k(h)+k(h^{\prime })+4}\geq \frac{6}{7}\text{, and} \\
\frac{n }{n_0 }=\frac{%
n'+k(h)+k(h^{\prime })+4}{n'_0+2}\geq \frac{7}{2}.
\end{gather*}

Subcase 3.2: $d\neq f$ or $u\neq v$.

Consider the cubic pseudo-graph $G_{0}^{\prime }$ obtained from
$G_{0}$ by cutting the loop $e$ and the graph $G^{\prime }$ obtained
from a cubic pseudo-graph $G_{0}^{\prime }$ by $k^{\prime
}(e^{\prime })$-subdividing each edge $e^{\prime }$ of
$G_{0}^{\prime }$, where the mapping $k^{\prime }$ is defined
according to (\ref{KPrimeDefinition}). Note that $n'_0 <n_0$.

Let us show that $G_{0}^{\prime }$ and its corresponding graph
$G^{\prime }$ satisfy
\begin{equation}
\nu' _{2}\geq \frac{6}{7}n' \text{ and }n'\geq \frac{7}{2}n'_0.
\label{BasicInequalities}
\end{equation}%
Note that if $f\in E_{0}^{\prime },$ then, since $n'_0 < n_0$ and $%
k^{\prime }(f)=k(f)\geq 3$, (\ref{BasicInequalities}) follows
directly from the induction hypothesis. So, let us assume, that
$f\notin E_{0}^{\prime }$. If $d=f$ then $G_{0}^{\prime }$ does not
contain a loop as $u\neq v$. Thus (\ref{BasicInequalities}) follows
from (a1) and (a2). Thus, we may also assume that $d\neq f$. As
$f\notin E_{0}^{\prime }$, we deduce that $f\in \{h,h^{\prime }\}$.
As $G_{0}$ does not satisfy the condition of the case 1, we have
$u\neq v$ and $G_{0}^{\prime }$ does not contain a loop. Thus
(\ref{BasicInequalities}) again follows from (a1) and (a2).

Now, (c1), (c2) and (c4) imply that%
\begin{gather*}
n_0=n'_0+2;
\\
n =n'+k(d)+k(e)+4%
\text{,} \\
\nu _{2}\geq \nu' _{2}+k(d)+k(e)+3.
\end{gather*}%
Since $k(d)\geq 2$, $k(e)\geq 1$, we have
\begin{gather*}
\frac{k(d)+k(e)+3}{k(d)+k(e)+4}\geq \frac{6}{7}\text{, and} \\
\frac{k(d)+k(e)+4}{2}\geq \frac{7}{2}
\end{gather*}%
therefore, due to (\ref{BasicInequalities}) and proposition \ref%
{FractionInequality}, we get:
\begin{gather*}
\frac{\nu _{2}}{n}\geq \frac{\nu' _{2}+k(d)+k(e)+3}{n'+k(d)+k(e)+4}\geq \frac{%
6}{7}\text{, and} \\
\frac{n }{n_0}=\frac{%
n'+k(d)+k(e)+4}{n'_0+2}\geq \frac{7}{2}.
\end{gather*}

(f) Note that if $G_{0}$ satisfies at least one of the conditions of
(a), (e1), (e2), then, taking into account the inequality $2\nu
_{1}\geq \nu _{2}$, we get:
\begin{equation*}
\nu _{1}\geq \frac{\nu _{2}}{2}\geq \frac{1}{2}\cdot \frac{6}{7}%
n =\frac{3}{7}n,
\end{equation*}%
thus, without loss of generality, we may assume that $G_{0}$
satisfies none of the conditions of (a), (e1), (e2), hence $G_{0}$
contains at least one loop, and for each loop $e$ and for each edge
$f$ of $G_{0}$, that is not a loop, we have: $k(e)=1$ and $k(f)=2$.
For these cubic pseudo-graphs, we will prove the inequality (f) by
induction on $n_0$. If $n_0=2,$ then $G_{0}$ is the cubic
pseudo-graph from the figure \ref{TrivialCase} and, as $k(e)=k(g)=1$ and $%
k(f)=2$, $G$ contains a perfect matching, thus%
\begin{equation*}
\nu _{1}=\frac{1}{2}n >\frac{3}{7}n.
\end{equation*}%
Now, by induction, assume that for every graph $G^{\prime }$
obtained from a cubic pseudo-graph $G_{0}^{\prime }$ ($n'_0 <n_0$)
by $k^{\prime }(e^{\prime }) $-subdividing each edge $e^{\prime }$
of $G_{0}^{\prime }$, we have
\begin{equation*}
\nu' _{1}\geq \frac{3}{7}n',
\end{equation*}%
and consider the cubic pseudo-graph $G_{0}$ ($n_0 \geq 4$) and its
corresponding graph $G$.

Let $e$ be a loop of $G_{0}$, and consider a cubic pseudo-graph $%
G_{0}^{\prime }$ obtained from $G_{0}$ by cutting the loop $e$ and a graph $%
G^{\prime }$ obtained from $G_{0}^{\prime }$ by $k^{\prime }(d^{\prime })$%
-subdividing each edge $d^{\prime }$ of $G_{0}^{\prime }$, where the
mapping $k^{\prime }$ is defined according to
(\ref{KPrimeDefinition}). As $n'_0 < n_0$, due to induction
hypothesis, we have
\begin{equation*}
\nu' _{1}\geq \frac{3}{7}n'
\end{equation*}%
(c2) and (c3) imply that%
\begin{gather*}
n=n'+7,
\\
\nu _{1}\geq \nu' _{1}+3.
\end{gather*}%
Due to proposition \ref{FractionInequality}, we get:
\begin{equation*}
\frac{\nu _{1}}{n}\geq \frac{\nu' _{1}+3}{n'+7}\geq \frac{3}{7}.
\end{equation*}

(g) Let $G_{0}$ be the connected cubic pseudo-graph corresponding to
$G$ and let $\bar{G}_{0}$ be the tree obtained from $G_{0}$ by
removing its loops (see the definition of the class $\mathfrak{M}$).
Assume $k$ and $k^{\prime
} $ to be the numbers of internal (non-pendant) and pendant vertices of $%
\bar{G}_{0}$. Clearly, $k+k^{\prime }=\bar{n}_{0}=n_0$. On the other
hand,
\begin{equation*}
\bar{m}_{0}=m_0 -k^{\prime }=\frac{3}{2}(k+k^{\prime })-k^{\prime
}\text{.}
\end{equation*}%
Since $\bar{m}_{0}=\bar{n}_{0}-1$, we get
\begin{equation*}
k+k^{\prime }-1=\frac{3}{2}(k+k^{\prime })-k^{\prime }
\end{equation*}%
or
\begin{equation*}
k^{\prime }=k+2\text{.}
\end{equation*}%
We prove the inequality by induction on $k$. Note that if $k=0$ then
$G_{0}$ is the cubic pseudo-graph from the figure \ref{TrivialCase},
therefore
\begin{equation*}
\frac{\nu _{1}}{n}=\frac{3}{6}=\frac{1}{2}>%
\frac{6}{13}\text{.}
\end{equation*}%
On the other hand, if $k=1,$ then $G_{0}$ is the cubic pseudo-graph
shown on the figure \ref{Case k1}, thus
\begin{equation*}
\frac{\nu _{1}}{n}=\frac{6}{13}.
\end{equation*}

\begin{figure}[h]
\begin{center}
\includegraphics[height=15pc]{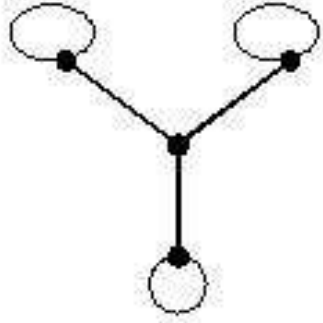}\\
\caption{The case $k=1$}\label{Case k1}
\end{center}
\end{figure}

Now, by induction, assume that for every graph $G^{\prime }\in
\mathfrak{M},$ we have $\nu' _{1}\geq \frac{6}{13}n'$, if the tree
$\bar{G}_{0}^{\prime }$ contains less than $k$ internal vertices,
and let us consider the graph $G\in \mathfrak{M}$ the corresponding
tree $\bar{G}_{0}$ of which contains $k$($k\geq 2$) internal
vertices. We need to consider two cases:

Case 1: There is $U=\{u_{1},...,u_{7}\}\subseteq \bar{V}_{0}$ such that $%
d_{\bar{G}_{0}}(u_{i})=1$, $1\leq i\leq 4$ and the subtree of
$\bar{G}_{0}$
induced by $U$ is the tree shown on the figure \ref{Case of two branches}.%

\begin{figure}[h]
\begin{center}
\includegraphics[height=15pc]{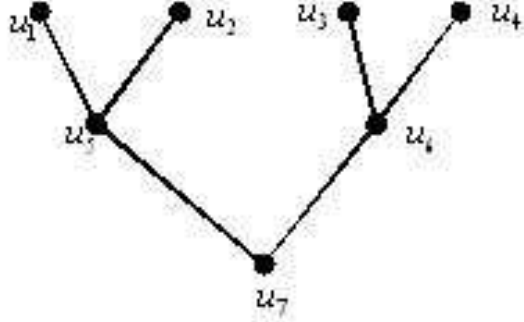}\\
\caption{The case of two branches}\label{Case of two branches}
\end{center}
\end{figure}

Let $\bar{G}_{0}^{\prime }$ be the tree $\bar{G}_{0}\backslash
\{u_{1},...,u_{6}\}$ and let $G_{0}^{\prime }$ be the cubic
pseudo-graph obtained from $G_{0}$ by removing the vertices
$u_{1},...,u_{6}$ and adding a new loop incident to the vertex
$u_{7}$. Note that $\bar{G}_{0}^{\prime }$ contains less than $k$
internal vertices, thus for the graph $G^{\prime }\in \mathfrak{M}$
corresponding to $\bar{G}_{0}^{\prime }$, we have
\begin{equation}
\frac{\nu' _{1}}{n'}\geq \frac{6}{13}. \label{InductionBound_6_13}
\end{equation}%
On the other hand, since
\begin{gather*}
n =n'-1+(6+16)=n'+21\text{,} \\
\nu _{1}\geq \nu' _{1}-1+11=\nu' _{1}+10,
\end{gather*}%
due to (\ref{InductionBound_6_13}) and proposition
\ref{FractionInequality}, we get:
\begin{equation*}
\frac{\nu _{1}}{n }\geq \frac{\nu' _{1}+10}{n'+21}\geq \frac{6}{13}
\end{equation*}%
since
\begin{equation*}
\frac{10}{21}>\frac{6}{13}.
\end{equation*}

Case 2: There is $U=\{u_{1},...,u_{6}\}\subseteq \bar{V}_{0}$ such that $%
d_{\bar{G}_{0}}(u_{1})=d_{\bar{G}_{0}}(u_{2})=d_{\bar{G}_{0}}(u_{5})=1$
and
the subtree of $\bar{G}_{0}$ induced by $U$ is the tree shown on the figure %
\ref{Case of branch and a leave}.

\begin{figure}[h]
\begin{center}
\includegraphics[height=15pc]{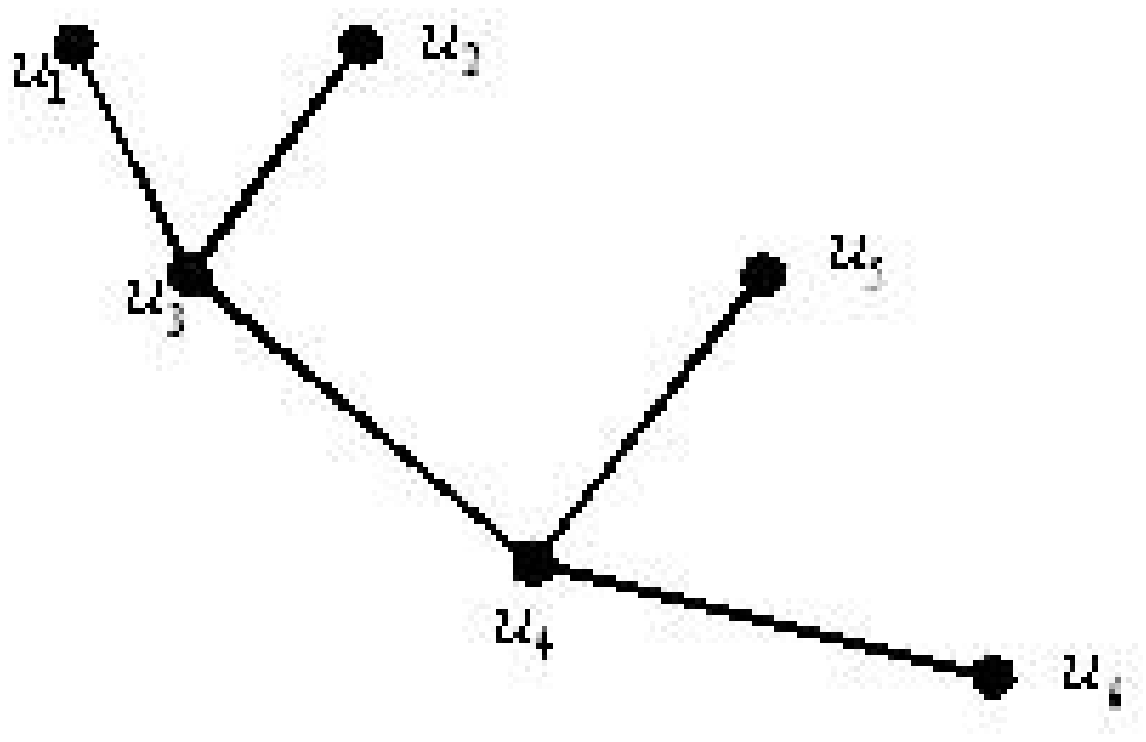}\\
\caption{The case of a branch and a leave}\label{Case of branch and
a leave}
\end{center}
\end{figure}

Let $\bar{G}_{0}^{\prime }$ be the tree $(\bar{G}_{0}\backslash
\{u_{1},...,u_{4}\})\cup \{(u_{5},u_{6})\}$ and let $G_{0}^{\prime
}$ be the
cubic pseudo-graph obtained from $G_{0}$ by removing the vertices $%
u_{1},u_{2},u_{3}$ and $u_{4}$ and adding the edge $(u_{5},u_{6})$.
Note that $\bar{G}_{0}^{\prime }$ contains less than $k$ internal
vertices, thus
for the graph $G^{\prime }\in \mathfrak{M}$ corresponding to $\bar{G}%
_{0}^{\prime }$, we have
\begin{equation}
\frac{\nu' _{1}}{n'}\geq \frac{6}{13}.
\label{InductionBound_6_13_Second}
\end{equation}%
On the other hand, since
\begin{gather*}
n =n'-2+14=n'+12\text{,} \\
\nu _{1}\geq \nu' _{1}-1+8=\nu' _{1}+7,
\end{gather*}%
due to (\ref{InductionBound_6_13}) and proposition
\ref{FractionInequality}, we get:
\begin{equation*}
\frac{\nu _{1}}{n}\geq \frac{\nu' _{1}+7}{n'+12}\geq \frac{6}{13},
\end{equation*}%
since
\begin{equation*}
\frac{7}{12}>\frac{6}{13}.
\end{equation*}%
To complete the proof of the inequality, let us note that, since the tree $%
\bar{G}_{0}$ contains $k$, ($k\geq 2$) internal vertices,
$\bar{G}_{0}$ satisfies at least one of the conditions of case 1 and
case 2.

(h) (c1) and (c2) imply that
\begin{gather*}
n_0=n'_0 +2;
\\
n =n' +k(f)+k(e)+4%
\text{.}
\end{gather*}
Since $k(f)\geq 2$, $k(e)\geq 1$ we have
\begin{equation*}
\frac{k(f)+k(e)+4}{2}\geq \frac{7}{2}\text{,}
\end{equation*}%
thus, due to proposition \ref{FractionInequality}, we get:%
\begin{equation*}
\frac{n}{n_0}=\frac{%
n'+k(f)+k(e)+4}{n'_0 +2}\geq \frac{7}{2}\text{.}
\end{equation*}

(i) Note that as $n <\frac{7}{2}n_0$ due to (e1) and (e2), for every
edge $e$
of $G_{0}$ we have%
\begin{equation*}
k(e)=\left\{
\begin{array}{ll}
1, & \text{if }e\text{ is a loop,} \\
2, & \text{otherwise.}%
\end{array}%
\right.
\end{equation*}%
Let us show that $G\in \mathfrak{M}$. Consider a maximal (with
respect to
the operation of cutting loops) sequence of cubic pseudo-graphs $%
G_{0}^{(0)},G_{0}^{(1)},...,G_{0}^{(n)}$, where $G_{0}^{(0)}=G_{0}$, and $%
G_{0}^{(i+1)}$ is obtained from $G_{0}^{(i)}$ by cutting a loop $e_{i}$ of $%
G_{0}^{(i)}$,$i=0,...,n-1$. Note that proposition
\ref{ConnectedInvariance} implies that for $i=1,...,n$ the graph
$G_{0}^{(i)}$ is connected.

Consider the sequence of graphs $G^{(0)},G^{(1)},...,G^{(n)}$, where $%
G^{(0)}=G$, and for $i=1,...,n$ the graph $G_{i}$ is obtained from $%
G_{0}^{(i)}$ by $k_{i}(d_{i})$-subdividing each edge $d_{i}$ of $G_{0}^{(i)}$%
, where the mapping $k_{i}$ is defined from $k_{i-1}$ according to (\ref%
{KPrimeDefinition}) and $k_{0}=k$. As the sequence $%
G_{0}^{(0)},G_{0}^{(1)},...,G_{0}^{(n)}$ is maximal, the operation
of cutting the loops is not applicable to $G_{0}^{(n)}$, thus due to
remark \ref{SuccessiveCut}, $G_{0}^{(n)}$ is either the trivial
cubic pseudo-graph from the figure \ref{TrivialCase} or a connected
graph (i.e. a connected pseudo-graph without loops). On the other
hand, (h) implies that for $i=1,...,n$, we have
\begin{equation}
n^{(i)}<\frac{7}{2}n_{0}^{(i)}\label{InequalityLIST}
\end{equation}%
thus, taking into account (a2), we deduce that $G_{0}^{(n)}$ is the
trivial cubic pseudo-graph from the figure \ref{TrivialCase}.

Note that for the proof of $G\in \mathfrak{M}$, it suffices to show
that if we remove all loops of $G_{0}$ then we will get a tree,
which is equivalent to proving that $G_{0}$ does not contain a
cycle. Suppose that $G_{0}$ contains a cycle. As $G_{0}^{(n)}$,
which is the pseudo-graph from the
figure \ref{TrivialCase}, does not contain a cycle, we imply that there is $%
j,1\leq j\leq n-1$ such that $G_{0}^{(j)}$ contains a cycle and $%
G_{0}^{(j+1)}$ does not. Proposition \ref{CycleInvariance} implies
that the
loop $e_{j}$ of $G_{0}^{(j)}$, whose cut led to the cubic pseudo-graph $%
G_{0}^{(j+1)}$ is adjacent to an edge $f_{j}$ which, in its turn, is
adjacent to two edges $h_{j}$ and $h_{j}^{\prime }$ that form the
only cycle of $G_{0}^{(j)}$.

As the edges $h_{j}$ and $h_{j}^{\prime }$ form a cycle of
$G_{0}^{(j)}$,
the cut of the loop $e_{j}$ leads to a loop $g_{j+1}$ of $%
G_{0}^{(j+1)}$ (see the definition of the operation of the cut of
loops). Due to (\ref{KPrimeDefinition}), we have
\begin{equation*}
k_{j+1}(g_{j+1})=k_{j}(h_{j})+k_{j}(h_{j}^{\prime })-2=2
\end{equation*}%
thus, due to (e1), we have%
\begin{equation*}
n^{(j+1)}\geq \frac{7}{2}n_{0}^{(j+1)}
\end{equation*}%
contradicting (\ref{InequalityLIST}).The proof of the lemma is
completed.
\end{proof}

\section{The main results}

We are ready to prove the first result of the paper. The basic idea
of the proof of this theorem can be roughly described as follows:
proving a lower bound for the main parameters of a cubic graph $G$
is just proving a bound for the graph $G\backslash F$ obtained by
removing a maximum matching $F$ of $G$. Next, according to lemma
\ref{Max Matching 2-3}, there is a maximum matching of a cubic graph
such that its removal leaves a graph, in which each degree is either
two or three. Moreover, the vertices of degree three are not placed
very closed. This allows us to consider this graph as a subdivision
of a cubic pseudo-graph, in which each edge is subdivided
sufficiently many times. The word "sufficiently" here should be
understood as big enough to allow us to apply the main results of
the lemma \ref{PseudoGraphSubdivision}. Next, by considering the
connected components of $G\backslash F$, we divide them into two or
three groups. For each of this groups, thanks to lemma
\ref{PseudoGraphSubdivision}, we find a bound for our parameters.
Then, due to proposition \ref{LinearInequality}, we not only
estimate the total contribution of the connected components to the
main parameters, but also keep this estimations big enough, which
allows us to get the main results of the theorem.

\begin{theorem}
\label{MainTheoremCubics}Let $G$ be a cubic graph. Then:%
\begin{equation*}
\nu _{1}\geq \frac{2}{5}n ,\nu _{2}\geq \frac{%
4}{5}n ,\nu _{3}\geq \frac{7}{6}n.
\end{equation*}
\end{theorem}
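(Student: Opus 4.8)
The plan is to strip off a maximum matching and reduce everything to Lemma~\ref{PseudoGraphSubdivision}. I may assume $G$ is connected (otherwise apply the three bounds to each component and combine them with Proposition~\ref{FractionInequality}). By Lemma~\ref{Max Matching 2-3} choose a maximum matching $F$ whose unsaturated vertices pairwise have no common neighbour, and put $H=G\setminus F$. In $H$ the $F$-saturated vertices have degree two and the unsaturated ones degree three; by maximality of $F$ no two degree-three vertices are adjacent, and by the choice of $F$ no two share a neighbour, so any path in $H$ between two degree-three vertices runs through at least two degree-two vertices. Hence each component $H_i$ of $H$ is an even or odd cycle, or is a subdivision (Proposition~\ref{CubicPseudoGraph}) of a connected cubic pseudo-graph in which every non-loop edge is subdivided at least twice. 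For the latter Lemma~\ref{PseudoGraphSubdivision} gives $\nu_2(H_i)\ge\tfrac56 n_i$ (by (d1)) and $n_i\ge 3n_{0,i}$ (by (d2)), where $n_{0,i}=u_i$ is the number of degree-three vertices of $H_i$; for cycles $u_i=0$.

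For $\nu_3$ note that $F$ together with any two edge-disjoint matchings of $H$ gives three pairwise edge-disjoint matchings of $G$, so $\nu_3(G)\ge |F|+\nu_2(H)$. Writing $u=\sum_i u_i$ for the number of degree-three vertices and $|F|=\tfrac12(n-u)=\tfrac12\sum_i(n_i-u_i)$, it suffices to prove the per-component inequality $\tfrac{n_i-u_i}{2}+\nu_2(H_i)\ge\tfrac76 n_i$. For a subdivision component this is $\tfrac13 n_i+\tfrac56 n_i=\tfrac76 n_i$ after inserting $u_i\le\tfrac13 n_i$ and $\nu_2(H_i)\ge\tfrac56 n_i$; for an even cycle the left side is $\tfrac32 n_i$, and for an odd cycle of length $\ell$ it is $\tfrac{3\ell}{2}-1\ge\tfrac76\ell$ for $\ell\ge 3$ (the triangle being the equality case). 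Summing over $i$ yields $\nu_3\ge\tfrac76 n$.

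The bound $\nu_1\ge\tfrac25 n$ then follows from $\nu_1\ge\tfrac12\nu_2$, since each of the two matchings realizing $\nu_2$ has at most $\nu_1$ edges; so the crux is $\nu_2\ge\tfrac45 n$. Using the same device one level down, $F$ together with a maximum matching of $H$ gives $\nu_2(G)\ge|F|+\nu_1(H)=\tfrac12\sum_i(n_i-u_i)+\sum_i\nu_1(H_i)$, and I again reduce to a per-component inequality $\tfrac{n_i-u_i}{2}+\nu_1(H_i)\ge\tfrac45 n_i$. Cycle components satisfy it at once, and so does a triangle ($\tfrac32+1\ge\tfrac{12}{5}$). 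For a subdivision component it becomes $\nu_1(H_i)\ge\tfrac{3n_i+5u_i}{10}$; here I invoke the lower bounds on $\nu_1$ from Lemma~\ref{PseudoGraphSubdivision}(f) and, for the dense components lying in $\mathfrak{M}$, (g), together with the classification (i) telling me which components are in $\mathfrak{M}$ and the estimate $u_i\le\tfrac13 n_i$.

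The hard part is precisely this last step for $\nu_2$. Unlike the $\nu_3$ computation, where every component type splits off cleanly, the subdivision components with many branch vertices sit exactly on the $\tfrac45$ boundary, and the individual bounds (f), (g) are each a hair too weak in the extreme regimes. The plan is to group the components by type and use Proposition~\ref{LinearInequality} to play the surplus of the components with few branch vertices against the deficit of those with many, under the global constraint $\sum_i n_i=n$; the two or three resulting groups are exactly what that proposition is designed to combine. Equivalently, one may keep $\nu_2(G)\ge\nu_2(H)$, for which only triangle components fall below $\tfrac45$, and recover the missing edges by adjoining to the two matchings the edges of $F$ incident to the triangles — the same balancing, organized around triangles rather than branch vertices.
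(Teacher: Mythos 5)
Your overall strategy is the paper's: remove a maximum matching $F$ chosen via lemma \ref{Max Matching 2-3}, view the components of $G\setminus F$ as subdivisions of cubic pseudo-graphs, and feed them into lemma \ref{PseudoGraphSubdivision}. Your $\nu_3$ argument is correct and even a little cleaner than the paper's bookkeeping: the per-component inequality $\frac{n_i-u_i}{2}+\nu_2(H_i)\geq\frac{7}{6}n_i$ does hold for every component type, using (d1) and (d2) for the subdivision components and a direct check for cycles.

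The gap is in the $\nu_1$/$\nu_2$ half, and it is logical rather than computational. The paper proves $\nu_1\geq\frac{2}{5}n$ \emph{first and independently}, by quoting Nishizeki's bound $\nu_1\geq\lceil\frac{5n-2}{12}\rceil$ from \cite{Takao}, and then uses it essentially in the proof of $\nu_2\geq\frac{4}{5}n$: it gives $y=n-2|F|=n-2\nu_1\leq\frac{1}{5}n$ for the total number of degree-three vertices of $G\setminus F$, and that is the constraint under which proposition \ref{LinearInequality} closes the books. You instead derive $\nu_1\geq\frac{2}{5}n$ as a corollary of $\nu_2\geq\frac{4}{5}n$, which forecloses using it, and the only global constraint you allow yourself, $\sum_i n_i=n$, is not enough. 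Concretely, for a subdivision component not in $\mathfrak{M}$ the available bounds are $\nu_1(H_i)\geq\frac{3}{7}n_i$ from (f) and $u_i\leq\frac{2}{7}n_i$ from the contrapositive of (i), which give only $\frac{n_i-u_i}{2}+\nu_1(H_i)\geq\frac{5}{14}n_i+\frac{6}{14}n_i=\frac{11}{14}n_i<\frac{4}{5}n_i$; if every component were of this type there would be no surplus anywhere to balance against, so no grouping argument based on $\sum_i n_i=n$ alone can succeed. What excludes that scenario is exactly $\sum_i u_i\leq\frac{1}{5}n$, i.e.\ the independent lower bound on $\nu_1$. (Your alternative sketch --- keeping $\nu_2(G)\geq\nu_2(G\setminus F)$ and repairing the triangles by adjoining incident $F$-edges --- also has an unproved step: the $F$-neighbours of a triangle may all be saturated by both matchings, for instance when they lie on an even cycle component, so the adjunction is not always possible and you are thrown back on the same balancing problem.) To repair the proof, establish $\nu_1\geq\frac{2}{5}n$ separately, as the paper does via \cite{Takao}, and then rerun your computation with the extra constraint $\sum_i u_i\leq\frac{1}{5}n$; at that point your argument coincides with the paper's.
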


\begin{proof}
In \cite{Takao} it is shown that every odd regular graph $G$
contains a matching of size at least $\left\lceil \frac{(r^{2}-r-1)n
-(r-1)}{r(3r-5)}\right\rceil $, where $r$ is the degree of
vertices of $G$. Particularly, for a cubic graph $G$ we have:%
\begin{equation*}
\nu _{1}\geq \left\lceil \frac{5n -2}{12}%
\right\rceil \geq \frac{2}{5}n.
\end{equation*}

Now, let us show that the other two inequalities are also true. Let
$F$ be a
maximum matching of $G$ such that the unsaturated vertices (with respect to $%
F$) do not have a common neighbour (see lemma \ref{Max Matching 2-3}). Let $%
\varepsilon $ be a rational number such that $\varepsilon \in \lbrack 0,%
\frac{1}{10}]$ and
\begin{equation*}
\nu _{1}=\left\vert F\right\vert =(\frac{2}{5}+\varepsilon )n.
\end{equation*}%
Note that to complete the proof, it suffices to show that
\begin{equation*}
\nu _{1}(G\backslash F)\geq (\frac{2}{5}-\varepsilon )n,\nu
_{2}(G\backslash F)\geq (\frac{23}{30}-\varepsilon )n.
\end{equation*}%
Consider the graph $G\backslash F$. Clearly,
\begin{equation*}
2=\delta (G\backslash F)\leq \Delta (G\backslash F)\leq 3.
\end{equation*}

Let $x$ and $y$ be the numbers of vertices of $G\backslash F$ with
degree
two and three, respectively. Clearly,%
\begin{equation*}
\left\{
\begin{array}{l}
x+y=\left\vert V(G\backslash F)\right\vert =n , \\
2x+3y=2m -2\left\vert F\right\vert =3n -(\frac{4}{5}+2\varepsilon )n =(%
\frac{11}{5}-2\varepsilon )n ,%
\end{array}%
\right.
\end{equation*}

which implies that
\begin{equation*}
x=(\frac{4}{5}+2\varepsilon )n ,y=(\frac{1}{5}%
-2\varepsilon )n.
\end{equation*}%
Let $G_{1},...,G_{r}$ be the connected components of $G\backslash
F$. For a vertex $v_{i}\in V_{i}$, $1\leq i\leq r$ define:
\begin{equation*}
\nu _{1}(v_{i})=\frac{\nu _{1i}}{n_i },\nu _{2}(v_{i})=\frac{\nu
_{2i}}{n_i}.
\end{equation*}

Note that
\begin{eqnarray}
\frac{\nu _{1}(G\backslash F)}{\left\vert V(G\backslash F)\right\vert } &=&%
\frac{\nu _{1}(G\backslash F)}{n}=\frac{\nu _{1,1}+...+\nu
_{1,r}}{n_1+...+n_r}=  \notag \\
&=&\frac{n_1 \cdot \frac{\nu _{1,1}}{%
n_1}+...+n_r \cdot \frac{\nu _{1,r}}{n_r }}{n_1+...+n_r}=  \notag \\
&=&\frac{n_1 \cdot \nu _{1}(v_{1})+...+n_r \cdot \nu
_{1}(v_{r})}{n_1+...+n_r}, \label{Nhu1Weight}
\end{eqnarray}%
and similarly
\begin{equation}
\frac{\nu _{2}(G\backslash F)}{\left\vert V(G\backslash F)\right\vert }=%
\frac{n_1 \cdot \nu _{2}(v_{1})+...+n_r \cdot \nu _{2}(v_{r})}{n_1
+...+n_r}  \label{Nhu2Weight}
\end{equation}%
where $v_{1},...,v_{r}$ are vertices of $G\backslash F$ with
$v_{i}\in V(G_{i})$, $1\leq i\leq r$.

By the choice of $F$, we have that for $i=1,...,r$ $G_{i}$ is

\begin{description}
\item[(a)] either a cycle,

\item[(b)] or a connected graph, with $\delta_{i}=2,\Delta_{i}=3$
which does not contain two vertices of degree three that are
adjacent or share a neighbour.
\end{description}

Note that if $G_{i}$ is of type (b), then there is a cubic pseudo-graph $%
G_{i}^{0}$ such that $G_{i}$ can be obtained from $G_{i}^{0}$ by $k(e)$%
-subdividing each edge $e$ of $G_{i}^{0}$ (proposition \ref{CubicPseudoGraph}%
). Of course, if $e$ is not a loop then $k(e)\geq 2$.

Let $a,b,c$ be the numbers of vertices of $G\backslash F$ that lie
on its connected components $G_{1},...,G_{r}$, which are cycles,
graphs of type (b) that are from the class $\mathfrak{M}$, graphs of
type (b) which are not from the class $\mathfrak{M}$, respectively.

It is clear that if $v_{a}$ is a vertex of $G\backslash F$ lying on
a cycle of length $l$ then
\begin{equation*}
\nu _{1}(v_{a})=\frac{\left[ \frac{l}{2}\right] }{l}\geq
\frac{1}{3}\text{.}
\end{equation*}%
If $v_{b}$ is a vertex of $G\backslash F$ lying on a connected component $%
G_{b}$ of $G\backslash F$ which is from the class $\mathfrak{M}$,
then (g) of lemma \ref{PseudoGraphSubdivision} implies that
\begin{equation*}
\nu _{1}(v_{b})=\frac{\nu _{1b}}{n_b}\geq \frac{6}{13}.
\end{equation*}%
If $v_{c}$ is a vertex of $G\backslash F$ lying on a connected component $%
G_{c}$ of $G\backslash F$ which is of type (b) and does not belong
to the class $\mathfrak{M}$, then (f) of lemma
\ref{PseudoGraphSubdivision} implies
that%
\begin{equation*}
\nu _{1}(v_{c})=\frac{\nu _{1c}}{n_c}\geq \frac{3}{7}\text{.}
\end{equation*}

Let $k_{b}$ and $k_{c}$ be the number of vertices of $G\backslash F$
with degree three that lie on connected components
$G_{1},...,G_{r}$, which are graphs from the class $\mathfrak{M}$ or
are graphs of type (b), which are not from the class $\mathfrak{M}$,
respectively. Clearly,
\begin{equation}
k_{b}+k_{c}=y=(\frac{1}{5}-2\varepsilon )n. \label{ThreeVertices}
\end{equation}%
(d2) of lemma \ref{PseudoGraphSubdivision} implies that%
\begin{equation*}
b\geq 3k_{b}\text{.}
\end{equation*}%
(i) of lemma \ref{PseudoGraphSubdivision} implies that
\begin{equation*}
c\geq \frac{7}{2}k_{c}\text{.}
\end{equation*}%
Thus, due to (\ref{Nhu1Weight})%
\begin{equation*}
\frac{\nu _{1}(G\backslash F)}{\left\vert V(G\backslash
F)\right\vert }\geq
\frac{\frac{1}{3}a+\frac{6}{13}b+\frac{3}{7}c}{n}%
\text{.}
\end{equation*}%
As $a+b+c=n$ we get: $a\leq n -3k_{b}-\frac{7}{2}k_{c}$. Since $\frac{1}{3}<\frac{3}{7}<%
\frac{6}{13}$, due to proposition \ref{LinearInequality}, we have:%
\begin{equation*}
\frac{1}{3}a+\frac{6}{13}b+\frac{3}{7}c\geq \frac{1}{3}(n-3k_{b}-\frac{7}{2}k_{c})+\frac{6}{13}\cdot 3k_{b}+\frac{3}{7%
}\cdot \frac{7}{2}k_{c}
\end{equation*}%
and therefore%
\begin{eqnarray*}
\frac{\nu _{1}(G\backslash F)}{\left\vert V(G\backslash
F)\right\vert }
&\geq &\frac{\frac{1}{3}(n -3k_{b}-\frac{7}{2}%
k_{c})+\frac{6}{13}\cdot 3k_{b}+\frac{3}{7}\cdot \frac{7}{2}k_{c}}{%
n }= \\
&=&\frac{\frac{1}{3}n +\frac{5}{13}k_{b}+\frac{1}{3}%
k_{c}}{n }=\frac{1}{3}+\frac{1}{3}\frac{k_{b}+k_{c}%
}{n }+\frac{2}{39}\frac{k_{b}}{n }
\end{eqnarray*}%
(\ref{ThreeVertices}) implies that
\begin{eqnarray*}
\frac{\nu _{1}(G\backslash F)}{n }
&\geq &\frac{1}{3}+\frac{1}{3}(\frac{1}{5}-2\varepsilon )+\frac{2}{39}\frac{%
k_{b}}{n }=\frac{2}{5}-\frac{2}{3}\varepsilon +%
\frac{2}{39}\frac{k_{b}}{n }= \\
&=&(\frac{2}{5}-\varepsilon )+\frac{\varepsilon }{3}+\frac{2}{39}\frac{k_{b}%
}{n}\geq \frac{2}{5}-\varepsilon
\end{eqnarray*}%
which is equivalent to%
\begin{equation*}
\nu _{1}(G\backslash F)\geq (\frac{2}{5}-\varepsilon )\left\vert
V(G\backslash F)\right\vert =(\frac{2}{5}-\varepsilon )n.
\end{equation*}%
Note that if $\nu _{2}=\frac{4}{5}n$, then
$\varepsilon =0,k_{b}=0$, which means that $\nu _{1}=\frac{2}{5}%
n$ and among the components $G_{1},...,G_{r}$ there are no
representatives of the class $\mathfrak{M}$.

Now, let us turn to the proof of the inequality $\nu
_{2}(G\backslash F)\geq (\frac{23}{30}-\varepsilon )n$.

Let $A,B$ be the numbers of vertices of $G\backslash F$ that lie on
its connected components $G_{1},...,G_{r}$, which are cycles and
graphs of type (b), respectively. It is clear that if $v_{A}$ is a
vertex of $G\backslash F$ lying on a cycle of the length $l$ then
\begin{equation*}
\nu _{2}(v_{A})=\frac{2\left[ \frac{l}{2}\right] }{l}\geq
\frac{2}{3}\text{.}
\end{equation*}%
If $v_{B}$ is a vertex of $G\backslash F$ lying on a connected component $%
G_{B}$ of $G\backslash F$ which is of type (b), then (d1) of lemma \ref%
{PseudoGraphSubdivision} implies that
\begin{equation*}
\nu _{2}(v_{B})=\frac{\nu _{2B}}{n_B}\geq \frac{5}{6}.
\end{equation*}%
As the number of vertices of $G\backslash F$ which are of degree three is $%
y=(\frac{1}{5}-2\varepsilon )n $, (d2) of lemma \ref%
{PseudoGraphSubdivision} implies that
\begin{equation}
B\geq 3y=(\frac{3}{5}-6\varepsilon )n. \label{Bbound}
\end{equation}%
Thus, due to (\ref{Nhu2Weight})
\begin{equation*}
\frac{\nu _{2}(G\backslash F)}{\left\vert V(G\backslash
F)\right\vert }\geq
\frac{\frac{2}{3}A+\frac{5}{6}B}{\left\vert V(G\backslash F)\right\vert }%
\text{.}
\end{equation*}%
As $A+B=n$, (\ref{Bbound}) implies that
\begin{equation*}
A\leq n -3y=n -(\frac{3}{5}%
-6\varepsilon )n =(\frac{2}{5}+6\varepsilon )n.
\end{equation*}%
Since $\frac{2}{3}<\frac{5}{6}$, due to proposition
\ref{LinearInequality}, we get
\begin{equation*}
\frac{2}{3}A+\frac{5}{6}B\geq \frac{2}{3}(\frac{2}{5}+6\varepsilon
)n +\frac{5}{6}(\frac{3}{5}-6\varepsilon )n
\end{equation*}%
and therefore%
\begin{equation*}
\frac{\nu _{2}(G\backslash F)}{\left\vert V(G\backslash
F)\right\vert }\geq \frac{(\frac{23}{30}-\varepsilon )n}{\left\vert
V(G\backslash F)\right\vert }=(\frac{23}{30}-\varepsilon )\text{,}
\end{equation*}%
which is equivalent to%
\begin{equation*}
\nu _{2}(G\backslash F)\geq (\frac{23}{30}-\varepsilon )\left\vert
V(G\backslash F)\right\vert =(\frac{23}{30}-\varepsilon )n.
\end{equation*}

The proof of the theorem is completed.
\end{proof}

\begin{remark}
There are graphs attaining bounds of the theorem
\ref{MainTheoremCubics}. The graph from figure \ref{Examples
attaining the bounds}a attains the first two bounds and the graph
from figure \ref{Examples attaining the bounds}b the last bound.
\end{remark}

\begin{center}
\begin{figure}[h]
\begin{center}
\includegraphics[height=10pc]{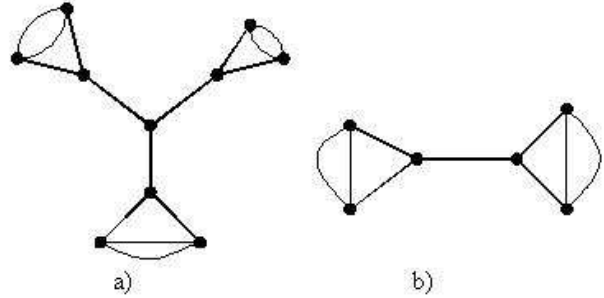}\\
\caption{Examples attaining the bounds of the theorem
\protect\ref{MainTheoremCubics}}\label{Examples attaining the
bounds}
\end{center}
\end{figure}
\end{center}

Recently, we managed to prove:

\begin{theorem}
For every cubic graph $G$%
\begin{equation*}
\nu _{2}+\nu _{3}\geq 2n.
\end{equation*}
\end{theorem}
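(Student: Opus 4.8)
The plan is to bound $\nu_2+\nu_3$ from below by deleting a single maximum matching and analysing the remainder. Concretely, choose a maximum matching $F$ of $G$ as guaranteed by Lemma \ref{Max Matching 2-3}, and set $H:=G\backslash F$; then every vertex of $H$ has degree $2$ or $3$, and no two vertices of degree three are adjacent (else $F$ could be enlarged) or share a neighbour. Since $F$ is disjoint from any matching lying inside $H$, we get $\nu_2(G)\geq |F|+\nu_1(H)$ and $\nu_3(G)\geq |F|+\nu_2(H)$, whence
\[
\nu_2+\nu_3 \ \geq\ 2|F| + \bigl(\nu_1(H)+\nu_2(H)\bigr).
\]
The vertices of $H$ of degree three are exactly those left unsaturated by $F$, so their number is $y=n-2|F|$. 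Thus it suffices to prove $\nu_1(H)+\nu_2(H)\geq n+y$: combined with the display this yields $\nu_2+\nu_3\geq 2|F|+n+(n-2|F|)=2n$, and, pleasingly, $|F|$ cancels, so no numerical estimate of $|F|$ is needed at all.

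To bound $\nu_1(H)+\nu_2(H)$ I would argue component by component. By the choice of $F$, each connected component of $H$ is either a cycle or, by Proposition \ref{CubicPseudoGraph}, a subdivision of a connected cubic pseudo-graph $G_0$ in which every non-loop edge is subdivided at least twice (this is exactly where the "no shared neighbour" property of Lemma \ref{Max Matching 2-3} enters, forcing $k(e)\geq 2$ on non-loops). In such a component the number of degree-three vertices equals $n_0$, the vertex count of the underlying pseudo-graph (for a cycle, $n_0=0$). The key lemma I would isolate is the uniform additive bound $\nu_1+\nu_2\geq n+n_0$ for every such graph. Granting it, summing over the components $H_i$ of $H$ gives $\nu_1(H)+\nu_2(H)\geq\sum_i (n_i+n_{0,i})=n+y$, since $\sum_i n_{0,i}$ is precisely the number $y$ of degree-three vertices of $H$; this is the estimate required above, so the theorem follows.

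The heart of the matter, and the main obstacle, is this key lemma, which I would prove by induction on $n_0$ using the loop-cutting machinery of Lemma \ref{PseudoGraphSubdivision}. For a cycle of length $\ell$ one has $\nu_1+\nu_2=3\left[\frac{\ell}{2}\right]\geq\ell$, covering $n_0=0$. If $G_0$ has no loop, then $n\geq 4n_0$ by (a2), while (a1) and (f) give $\nu_2\geq\frac{7}{8}n$ and $\nu_1\geq\frac{3}{7}n$, so $\nu_1+\nu_2\geq\frac{73}{56}n\geq\frac{5}{4}n\geq n+n_0$. If $G_0$ is the trivial pseudo-graph of Figure \ref{TrivialCase}, then $\nu_2=n-1$ by (b), and an explicit three-edge matching (one edge from the subdivided non-loop $f$, one from each subdivided loop) shows $\nu_1\geq 3$, so $\nu_1+\nu_2\geq n+2=n+n_0$. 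For the inductive step, cut a loop $e$ with adjacent edge $f$ as in (c), so that $n_0=n_0'+2$ and $n=n'+k(f)+k(e)+4$; adding (c3) and (c4) gives
\[
\nu_1+\nu_2 \ \geq\ (\nu'_1+\nu'_2) + \left[\frac{k(f)}{2}\right] + \left[\frac{k(e)+1}{2}\right] + k(f)+k(e)+4 .
\]
Since $k(f)\geq 2$ and $k(e)\geq 1$ force $\left[\frac{k(f)}{2}\right]+\left[\frac{k(e)+1}{2}\right]\geq 2$, while the target increment of $n+n_0$ is only $k(f)+k(e)+6$, the induction hypothesis $\nu'_1+\nu'_2\geq n'+n_0'$ propagates to $\nu_1+\nu_2\geq n+n_0$. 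The delicate point is that the combined increments of (c3) and (c4) exceed the increment of $n+n_0$ by exactly $\left[\frac{k(f)}{2}\right]+\left[\frac{k(e)+1}{2}\right]-2\geq 0$, so $n+n_0$ is the sharpest additive bound the recursion can sustain; any uniform ratio such as $\frac{4}{3}n$ would fail already for the graph of Figure \ref{Case k1}, where $\nu_1+\nu_2=17<\frac{4}{3}\cdot 13$ yet $n+n_0=13+4=17$ holds with equality.
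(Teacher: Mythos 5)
The paper states this theorem without proof (it is introduced only by ``Recently, we managed to prove:''), so there is no argument of the authors to compare yours against; I have therefore checked your proposal on its own terms, and it appears to be correct. Your reduction $\nu_2+\nu_3\geq 2\left\vert F\right\vert+\nu_1(G\backslash F)+\nu_2(G\backslash F)$, together with the count $y=n-2\left\vert F\right\vert$ of degree-three vertices of $G\backslash F$, is sound, and the theorem does come down to the additive per-component inequality $\nu_1+\nu_2\geq n+n_0$, which is a genuinely new statement not contained in Lemma \ref{PseudoGraphSubdivision}; note also that the theorem cannot be obtained by merely combining the paper's stated bounds, since $\nu_3\geq 2\nu_2-\frac{n}{2}$ and $\nu_2\geq\frac{4}{5}n$ only give $\nu_2+\nu_3\geq\frac{19}{10}n$. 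Your induction for the key inequality is the same loop-cutting recursion the paper runs for parts (d)--(f) of Lemma \ref{PseudoGraphSubdivision}: the base cases (loopless via (a1), (a2) and (f); the trivial pseudo-graph via (b) plus the explicit matching giving $\nu_1\geq 3$) and the increment bookkeeping $\left[\frac{k(f)}{2}\right]+\left[\frac{k(e)+1}{2}\right]\geq 2$ against the target increment $k(f)+k(e)+6$ all check out. Two small points you should make explicit, exactly as the paper does in its own inductive arguments: (i) when $G_0$ is connected, contains a loop and is not the trivial pseudo-graph, the cut is applicable (Remark \ref{SuccessiveCut}), so the case split loopless/trivial/cuttable is exhaustive; and (ii) the new parameter $k'(g)=k(h)+k(h')-2$ is again at least $2$ when $g$ is not a loop and at least $1$ when it is, so the induction hypothesis really applies to $G'$. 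With those two sentences added your argument is a complete proof, and it is pleasingly sharp: equality holds on the class $\mathfrak{M}$, as your computation for the graph of figure \ref{Case k1} shows.
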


Note that this implies that there is no graph attaining
all the bounds of theorem \ref{MainTheoremCubics} at the same time.

The methodology developed above allows us to prove the second result
of the paper, which is an inequality among our main parameters. To
prove it, again we reduce the inequality to another one considered
in the class of graphs, that are obtained from a cubic graph by
removing a matching of $G$. Note that this time matching need not to
be maximum, nevertheless, its removal keeps the vertices of degree
three "far enough". Next, by considering any of connected components
of this graph, we look at it as a subdivision of a cubic
pseudo-graph. This allows us to apply the results from the section
on system of cycles and paths, and find a suitable system, which not
only captures the essence of the inequality that we were trying to
prove, but also is very simple in its structure, and this allows us
to complete the proof.

\begin{theorem}
\label{ArithmeticalMean}For every cubic graph $G$ the following inequality holds:%
\begin{equation*}
\nu _{2}\leq \frac{n+2\nu _{3}}{4}.
\end{equation*}
\end{theorem}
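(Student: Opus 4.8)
The plan is to rewrite the claim in a form that exposes a three-matching construction. Since $2m=3n$, clearing denominators shows that $\nu_2\le\frac{n+2\nu_3}{4}$ is equivalent to
\begin{equation*}
\nu_3\ge 2\nu_2-\frac{n}{2}.
\end{equation*}
For any matching $M$ of $G$, adjoining $M$ to an optimal pair of edge-disjoint matchings of $G\setminus M$ yields three edge-disjoint matchings of $G$, so $\nu_3\ge |M|+\nu_2(G\setminus M)$ (in fact $\nu_3=\max_M(|M|+\nu_2(G\setminus M))$, taking $M$ to be the third matching of an optimal triple). Hence it suffices to produce a single matching $M$ with $|M|+\nu_2(G\setminus M)\ge 2\nu_2-\frac{n}{2}$; the whole theorem reduces to this one estimate.

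For $M$ I would take a maximum matching $F$ furnished by Lemma~\ref{Max Matching 2-3}, so that the vertices left unsaturated by $F$ are pairwise non-adjacent and share no neighbour. Then in $G\setminus F$ the $2\nu$ saturated vertices have degree two and the $n-2\nu$ unsaturated ones have degree three and are mutually at distance at least three; consequently every component of $G\setminus F$ is either a cycle or a subdivision of a cubic pseudograph in which each non-loop edge is subdivided at least twice. On a non-cycle component I would first apply Lemma~\ref{Bipartite 2->=3}(1) to the once-subdivided cubic pseudograph (which satisfies its hypothesis and, being bipartite, yields only even cycles), and then transport the resulting system through Lemma~\ref{SystemInSubdivision} to the actual component. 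This produces, component by component, a vertex-disjoint system of even paths and even cycles covering all vertices, with exactly half as many paths as it has degree-three vertices. Two-colouring this system gives edge-disjoint matchings $A_1,A_2\subseteq G\setminus F$ covering every system edge, and $(A_1,A_2,F)$ is the desired triple.

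Now I would count. All degree-three vertices lie in non-cycle components, so the number of path-components contributed by those components is $\frac12(n-2\nu)$, while each odd cycle-component must be opened into one extra path (losing one edge). Writing $q$ for the number of odd cycle-components of $G\setminus F$, the system therefore covers
\begin{equation*}
|A_1|+|A_2| = n-\Big(\tfrac12(n-2\nu)+q\Big)=\frac{n}{2}+\nu-q
\end{equation*}
edges, whence $\nu_3\ge |F|+|A_1|+|A_2|=\frac{n}{2}+2\nu-q$. Comparing with the target $\nu_3\ge 2\nu_2-\frac n2$, the proof is complete once one establishes the coupled inequality
\begin{equation*}
2\nu_2\le n+2\nu-q .
\end{equation*}

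The main obstacle is precisely this last inequality. The crude bound $\nu_2\le\nu+\frac n2$ (the larger matching has at most $\nu$ edges, the smaller at most $\frac n2$) already gives $2\nu_2\le n+2\nu$, so what remains is to show that each odd cycle-component left behind by $F$ forces a further loss of $\tfrac12$ in the size of every optimal pair. Writing an optimal pair as a union of paths and even cycles with $\nu_2=n-p$ (here $p$ is the number of path-components), this amounts to $n+q\le 2(\nu+p)$, i.e. that the $q$ residual odd cycles must be paid for either by spare matching capacity or by extra path-ends in each two-cover. I expect this coupling to be the delicate point, and one cannot circumvent it by decoupling, since the naive estimate $\nu_3\ge\frac n2+2\nu$ is already false for the Petersen graph (there $\nu=5$ but $\nu_3=13<15$). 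Indeed the Petersen graph, with $F$ a perfect matching and $G\setminus F$ a pair of $5$-cycles ($q=2$, $\nu=5$, $\nu_2=9$), attains equality throughout: this both confirms the bookkeeping above and shows that the odd-cycle deficit in the three-cover must be compensated exactly by the deficiency of the optimal pair.
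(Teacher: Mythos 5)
There is a genuine gap, and you have located it yourself: the coupling inequality $2\nu _{2}\leq n+2\nu -q$ is never established, and neither your construction nor anything in the paper's toolbox supplies it. Your reduction to $\nu _{3}\geq 2\nu _{2}-\frac{n}{2}$ is the right one, and your system-of-paths-and-cycles construction on $G\backslash F$ is sound, but by choosing $M=F$ to be a \emph{maximum matching of $G$} you end up bounding $\nu _{3}$ from below by an expression in $\nu$ and $q$, whereas the target is an expression in $\nu _{2}$; bridging the two is exactly as hard as the theorem itself (your Petersen computation, where every inequality is tight, shows there is no slack to hide in). As written, the proof is incomplete.

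The paper's proof avoids this trap by removing a different matching. It takes an \emph{optimal pair} $(H,H^{\prime })$ with $\left\vert H\right\vert +\left\vert H^{\prime }\right\vert =\nu _{2}$, assumes only that $H$ is maximal (not maximum), and works in $G\backslash H$. Two things then come for free. First, optimality of the pair gives the identity $\nu _{2}=\left\vert H\right\vert +\nu _{1}(G\backslash H)$, so $\nu _{2}$ enters the computation directly rather than through an unproved comparison with $\nu$. Second, the per-component estimate is phrased as $\nu _{2i}\geq 2\nu _{1i}-\frac{l_{i}}{2}$, where $l_{i}$ counts degree-three vertices: the quantity being doubled is the matching number of the component, not its edge count, and for a cycle component (even \emph{or} odd) one has $\nu _{2i}=2\nu _{1i}$ exactly, so the odd-cycle deficit that plagues your count simply cancels ($l_{i}=0$ there). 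For the non-cycle components the paper uses the same machinery you invoke (Lemma \ref{Bipartite 2->=3} plus Lemma \ref{SystemInSubdivision}), but it also needs property (5) of the transported system --- that it contains a maximum matching of the component --- in order to convert the edge count of the system into $2\nu _{1i}-x$ with $x\leq \frac{l_{i}}{2}$. Summing over components and using $l=n-2\left\vert H\right\vert$ yields
\begin{equation*}
\nu _{3}\geq \left\vert H\right\vert +\nu _{2}(G\backslash H)\geq 2\left\vert H\right\vert +2\nu _{1}(G\backslash H)-\frac{n}{2}=2\nu _{2}-\frac{n}{2}
\end{equation*}
with no residual term to control. If you want to salvage your version, you would have to prove $2\nu _{2}\leq n+2\nu -q$ for the specific $F$ of Lemma \ref{Max Matching 2-3}; switching to the paper's choice of removed matching is the cleaner repair.
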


\begin{proof}
Let $(H,H^{\prime })$ be a pair of edge-disjoint matchings of
$G$ with $\left\vert H\right\vert +\left\vert H^{\prime }\right\vert
=\nu _{2}$. Without loss of generality we may assume that $H$ is
maximal (not necessarily maximum). Let $G_{1},...,G_{k}$ be the
connected components of $G\backslash H$, $l_i=l(G_{i})$ be the
number of vertices of $G_{i}$ having
degree three, $1\leq i\leq k$, and let $l$ be the number of vertices of $%
G\backslash H$ having degree three. Note that%
\begin{equation*}
l=l_{1}+...+l_{k}=n -2\left\vert H\right\vert \text{.}
\end{equation*}

Let us show that for each $i$, $1\leq i\leq k$, the following
inequality is
true:%
\begin{equation}
\nu _{2i}\geq 2\nu _{1i}-\frac{l_i}{2}.
\label{InequalityInComponents}
\end{equation}

Note that, if $G_{i}$ is a cycle, then $l_{i}=0$ and $\nu _{2i}=2\nu
_{1i}$, thus (\ref{InequalityInComponents}) is true for the cycles.
Now, let us assume $G_{i}$ to contain a vertex of degree three. As
$H$ is a maximal matching, no two vertices of degree three are
adjacent in $G_{i}$. Proposition
\ref{CubicPseudoGraph} implies that there is a cubic pseudo-graph $G_{i}^{0}$ such that $G_{i}$ can be obtained from $%
G_{i}^{0}$ by $k(e)$-subdividing each edge $e$ of $G_{i}^{0}$ where $%
k(e)\geq 1$. Let $G_{i}^{\prime }$ be the graph obtained from
$G_{i}^{0}$ by $1$-subdividing each edge $e$ of $G_{i}^{0}$. Note
that $G_{i}^{\prime }$
contains $n_{i}^{0}$ vertices of degree three, $%
\frac{3n_{i}^{0}}{2}$ vertices of degree two and no two vertices of
the same degree are adjacent in $G_{i}^{\prime }$. Due to lemma
\ref{Bipartite 2->=3}, there is a system $\mathfrak{F}_{i}^{\prime
}$ of even cycles and paths of $G_{i}^{\prime }$ satisfying the
conditions
(1.2),(1.3) of the lemma \ref{Bipartite 2->=3} and containing $\frac{%
n_{i}^{0}}{2}$ paths (see (1.1) of the lemma \ref%
{Bipartite 2->=3}). (2) of lemma \ref{Bipartite 2->=3} implies that $%
\mathfrak{F}_{i}^{\prime }$ includes a maximum matching of
$G_{i}^{\prime }$.

Now, note that $G_{i}$ can be obtained from $G_{i}^{\prime }$ by a
sequence of $1$-subdivisions. Lemma \ref{SystemInSubdivision}
implies that there is a system $\mathfrak{F}_{i}$ of paths and even
cycles of $G_{i}$ satisfying the conditions (1)-(5) of the lemma
\ref{Bipartite 2->=3} and containing exactly $\frac{n_{i}^{0}}{2}$
paths!

Let $x$ be the number of paths from $\mathfrak{F}_{i}$ containing an
odd number
of edges. Note that since $x\leq \frac{n_{i}^{0}}{2%
}$, we have:
\begin{eqnarray*}
\nu _{2i} &\geq &\sum_{F\in \mathfrak{F}_{i}}\left\vert
E(F)\right\vert =2\sum_{F\in \mathfrak{F}_{i}}\nu _{1}(F)-x=2\nu
_{1i}-x\geq \\
&\geq &2\nu _{1i}-\frac{n_{i}^{0}}{2}=2\nu _{1i}-\frac{l_{i}}{2}.
\end{eqnarray*}%
Summing up the inequalities (\ref{InequalityInComponents}) from $1$
to $k$
we get:%
\begin{equation*}
\nu _{2}(G\backslash H)=\sum_{i=1}^{k}\nu _{2i}\geq
2\sum_{i=1}^{k}\nu _{1i}-\frac{\sum_{i=1}^{k}l_{i}}{%
2}=2\nu _{1}(G\backslash H)-\frac{l}{2}\text{.}
\end{equation*}%
Thus
\begin{equation*}
\nu _{3}\geq \left\vert H\right\vert +\nu _{2}(G\backslash H)\geq
\left\vert H\right\vert +2\nu _{1}(G\backslash
H)-\frac{l}{2}=\left\vert
H\right\vert +2\nu _{1}(G\backslash H)-\frac{n }{2}%
+\left\vert H\right\vert \text{.}
\end{equation*}%
Taking into account that
\begin{equation*}
\left\vert H\right\vert +\left\vert H^{\prime }\right\vert
=\left\vert H\right\vert +\nu _{1}(G\backslash H)=\nu _{2}
\end{equation*}%
we get:
\begin{equation*}
\nu _{3}\geq 2\nu _{2}-\frac{n}{2}
\end{equation*}%
or
\begin{equation*}
\nu _{2}\leq \frac{n+2\nu _{3}}{4}.
\end{equation*}%
The proof of the theorem \ref{ArithmeticalMean} is completed.
\end{proof}

\begin{acknowledgement}
We would like to thank our reviewers for their useful comments that helped us to improve the paper.
\end{acknowledgement}

\end{document}